\theoremstyle{definition}
\newtheorem{theorem}{Theorem}
\newtheorem{definition}[theorem]{Definition}
\newtheorem{proposition}[theorem]{Proposition}
\newtheorem{example}[theorem]{Example}
\newtheorem{lemma}[theorem]{Lemma}
\newtheorem{corollary}[theorem]{Corollary}
\begin{document}
	\newcommand{\nc}{\newcommand}


\newcommand  {\subs}     {\subseteq}
\renewcommand{\sups}{\supseteq}
\newcommand{\covers}{\sqsupseteq}
\newcommand{\covered}{\sqsubseteq}
\newcommand{\mydef}{\ensuremath{\mathrel{\smash{\stackrel{\scriptscriptstyle{
    \text{def}}}{=}}}}}

\newcommand{\ohne}{\ensuremath{\setminus}}
\newcommand{\Pot}{{\ensuremath{\mathcal{P}}}}

\newcommand{\Dom}{\mathcal{D}}
\newcommand{\Rng}{\mathcal{R}}

\newcommand{\eps}{{\ensuremath{\epsilon}}}
\newcommand{\pprime}{{\ensuremath{\prime \prime}}}

\newcommand {\nat}      {\mathbb{N}}

\newcommand{\Err}{\text{err}}
\newcommand{\Null}{\bot}

\newcommand{\calA}{{\ensuremath{\mathcal{A}}}\xspace}
\newcommand{\calB}{{\ensuremath{\mathcal{B}}}\xspace}
\newcommand{\calC}{\ensuremath{\mathcal{C}}\xspace}
\newcommand{\calD}{\ensuremath{\mathcal{D}}\xspace}
\newcommand{\calE}{\ensuremath{\mathcal{E}}\xspace}
\newcommand{\calG}{\ensuremath{\mathcal{G}}\xspace}
\newcommand{\calI}{\ensuremath{\mathcal{I}}\xspace}
\newcommand{\calP}{\ensuremath{\mathcal{P}}\xspace}
\newcommand{\calQ}{\ensuremath{\mathcal{Q}}\xspace}
\newcommand{\calS}{\ensuremath{\mathcal{S}}\xspace}

\newcommand{\bigO}{\ensuremath{\mathcal{O}}}

\newcommand{\blank}{\ensuremath{\sqcup}}

\newcommand{\leftidx}[3]{{\vphantom{#2}}#1#2#3}
\newcommand{\lequiv}[3][]{\ensuremath{\tensor*[_{#2}]{\equiv}{^{#1}_{#3}}}}

\newcommand{\id}{\ensuremath{\text{id}}}

\newcommand{\hatE}{\ensuremath{\hat{E}}}
\newcommand{\hate}{\ensuremath{\hat{e}}}
\newcommand{\hatP}{\ensuremath{\hat{P}}}
\newcommand{\hatQ}{\ensuremath{\hat{Q}}}
\newcommand{\hatS}{\ensuremath{\hat{S}}}
\newcommand{\hatG}{\ensuremath{\hat{G}}}
\newcommand{\ol}[1]{\overline{#1}}

\newcommand{\posbool}{\ensuremath{\mathcal{B}^+}}

\newcommand{\Sigmat}{\tilde{\Sigma}}

\newcommand{\smin}[1]{\ensuremath{\left[#1\right]_{\text{min}}}}

\newcommand{\Exp}{\text{Exp}}


\def\pone{\textsc{Juliet}\xspace}
\def\ptwo{\textsc{Romeo}\xspace}
\def\pones{\textsc{Juliet}'s\xspace} 
\def\ptwos{\textsc{Romeo}'s\xspace}
\def\ponea{\textsc{J}\xspace} 
\def\ptwoa{\textsc{R}\xspace}

\def\playerA{\textsc{Adam}\xspace}
\def\playerE{\textsc{Eve}\xspace}
\def\playerAs{\textsc{Adam's}\xspace}
\def\playerEs{\textsc{Eve's}\xspace}

\newcommand{\funcsymb}{\Gamma}
\newcommand{\ponesymb}{\funcsymb_{\ponea}}
\newcommand{\ptwosymb}{\funcsymb_{\ptwoa}}

\newcommand{\movesingle}[1]{\ensuremath{\stackrel{#1}{\longrightarrow}}}
\newcommand{\movesingles}{\movesingle{\sigma}}
\newcommand{\move}[2]{\ensuremath{\stackrel{#1,#2}{\longrightarrow}}}
\newcommand{\moves}[1]{\move{\sigma}{#1}}
\newcommand{\movest}{\move{\sigma}{\tau}}

\newcommand{\tmovesingle}[1]{\ensuremath{\stackrel{#1}{{\longrightarrow}^*}}}
\newcommand{\tmovesingles}{\tmovesingle{\sigma}}
\newcommand{\tmove}[2]{\ensuremath{\stackrel{#1,#2}{\longrightarrow}^*}}
\newcommand{\tmoves}[1]{\tmove{\sigma}{#1}}
\newcommand{\tmovest}{\tmove{\sigma}{\tau}}

\newcommand {\play}      {{\ensuremath{\Pi}}\xspace}

\newcommand{\passes}{\ensuremath{\text{LS}}\xspace}

\newcommand{\selG}{\ensuremath{\tilde{G}}\xspace}
\newcommand{\Glr}{\ensuremath{G_{\text{L2R}}}\xspace}

\newcommand{\Depth}[1][G]{\ensuremath{\text{Depth}^{#1}}}


\newcommand{\algprobname}[1]{\ensuremath{\text{\sc{#1}}}\xspace}
\newcommand{\lrall}{\algprobname{L2RAll}\xspace}
\newcommand{\corridor}{\algprobname{TwoPlayer2ExpCorridorTiling}\xspace}
\newcommand{\AIT}{\algprobname{AIT($\epsilon$-NWT)}\xspace}
\newcommand{\coiiiSAT}{\algprobname{co-3-SAT}\xspace}

\providecommand  {\myclass} [1]  {\ensuremath{\mbox{\sc #1}}\xspace}

\newcommand{\PTIME}{\myclass{PTIME}}
\newcommand{\NP}{\myclass{NP}}
\newcommand{\coNP}{\myclass{co-NP}}
\newcommand{\PSPACE}{\myclass{PSPACE}}
\newcommand{\NPSPACE}{\myclass{NPSPACE}}
\newcommand{\APSPACE}{\myclass{APSPACE}}
\newcommand{\LOGSPACE}{\myclass{LOGSPACE}}
\newcommand     {\EXPTIME}  {\myclass{EXPTIME}}
\newcommand     {\iiEXPTIME}  {\myclass{2-EXPTIME}}
\newcommand     {\kEXPTIME}  {\myclass{$k$-EXPTIME}}
\newcommand     {\EXPSPACE}  {\myclass{EXPSPACE}}
\newcommand     {\NEXPSPACE}  {\myclass{NEXPSPACE}}
\newcommand     {\AEXPSPACE}  {\myclass{AEXPSPACE}}
\newcommand     {\NEXPTIME}  {\myclass{NEXPTIME}}
\newcommand     {\kNEXPTIME}  {\myclass{$k$-NEXPTIME}}
\newcommand     {\coNEXPTIME}  {\myclass{co-NEXPTIME}}
\newcommand     {\cokNEXPTIME}  {\myclass{co-$k$-NEXPTIME}}
\newcommand     {\cokpNEXPTIME}  {\myclass{co-$(k+1)$-NEXPTIME}}
\newcommand     {\iiiEXPTIME}  {\myclass{3-EXPTIME}}

\newcommand{\QBF}{\algprobname{QBF}}

\newcommand{\icup}{\ensuremath{\oplus}} 
\newcommand{\bigicup}{\ensuremath{\bigoplus}} 


\nc{\win}{JWin}

\newcommand{\safeu}{\ensuremath{\text{\win}}}
\newcommand{\safelrrf}{\ensuremath{\text{\win}_{L2R}^{\text{rf}}}}
\newcommand{\safelrol}[1]{\mbox{\ensuremath{\Sigma^*\setminus\safelr(#1)}}}
\newcommand{\safelrp}{\ensuremath{\text{\win}_{\text{L2R}^+}}}
\newcommand{\safelsi}{\ensuremath{\text{\win}_{LS1}}}
\newcommand{\safeop}[1][3]{\ensuremath{\text{\win}_{1P(#1)}}}
\newcommand{\safeR}{\safeop[4]}


\newcommand{\ltrp}{\ensuremath{\text{L2R}^+}}
\newcommand{\stratu}{\ensuremath{\text{STRAT}}}
\newcommand{\stratlr}{\ensuremath{\text{STRAT}}} 
\newcommand{\stratlrrf}{\ensuremath{\text{STRAT}_{L2R}^{rf}}}
\newcommand{\stratlrpc}{\ensuremath{\text{STRAT}_{L2R}^{pc}}}
\newcommand{\stratrf}{\ensuremath{\text{STRAT}^\text{rf}_{L2R}}}
\newcommand{\stratls}{\ensuremath{\text{STRAT}_{LS1}}}
\newcommand{\stratp}[1][3]{\ensuremath{\text{STRAT}_{1P(#1)}}}
\newcommand{\strata}{\ensuremath{\text{STRAT}_{\ponea}}}
\newcommand{\stratac}{\ensuremath{\text{STRAT}_{\ponea,\Call}}}
\newcommand{\stratb}{\ensuremath{\text{STRAT}_{\ptwoa}}}
\newcommand{\stratr}[1][G]{\ensuremath{\stratb(#1)}\xspace}

\newcommand{\Astrat}{{\ensuremath{\sigma}}}
\newcommand{\Bstrat}{{\ensuremath{{\tau}}}}

\newcommand{\ov}[1]{\ensuremath{\overline{#1}}}
\newcommand{\ovS}{\ensuremath{\ov{S}}}
\newcommand{\ovAstrat}{\ensuremath{\ov{\Astrat}}}
\newcommand{\ovBstrat}{\ensuremath{\ov{\Bstrat}}}

\newcommand{\func}[1][]{\ensuremath{\text{func}^{#1}}}
\newcommand{\funcs}[1][]{\ensuremath{\text{funcs}^{#1}}}
\newcommand{\funcslr}[1][G]{\ensuremath{s_\text{L2R}^{#1}}}
\newcommand{\word}[1][G]{\ensuremath{\text{word}_{#1}}}
\newcommand{\words}[1][G]{\ensuremath{\text{words}_{#1}}}
\newcommand{\state}[1][G]{\ensuremath{\text{state}_{#1}}}
\newcommand{\states}[1][G]{\ensuremath{\text{states}_{#1}}}
\newcommand{\suf}[1][G]{\ensuremath{\text{suf}^{#1}}}
\newcommand{\sufstates}[1][G]{\ensuremath{\text{sufstates}^{#1}}}
\newcommand{\Bstates}[1][]{\ensuremath{\text{2-states}^{#1}}}
\newcommand{\transit}{\ensuremath{\text{transit}}}

\newcommand{\cstate}[1][C]{\ensuremath{\text{state}_{\game{#1}}}}
\newcommand{\cstates}[1][C]{\ensuremath{\text{states}_{\game{#1}}}}


\newcommand{\wset}{\ensuremath{\mathcal{L}}}
\newcommand{\wsetRead}{\ensuremath{\mathcal{L}_{Read}}}
\newcommand{\wsetCall}{\ensuremath{\mathcal{L}_{Call}}}

\nc{\Call}{\ensuremath{\text{Call}}\xspace}
\nc{\Read}{\ensuremath{\text{Read}}\xspace}
\nc{\LS}{\ensuremath{\text{LS}}\xspace}
\nc{\Stop}{\ensuremath{\text{Stop}}\xspace}

\newcommand{\deriv}{\ensuremath{\Rightarrow}}


\newcommand{\effect}[2][]{\ensuremath{\calE^{#1}[#2]}\xspace} 
\newcommand{\geffect}[2][]{\ensuremath{\calE^{#1}[G,#2]}\xspace} 
\newcommand{\ceffect}[1][G]{\ensuremath{\calC[#1]}\xspace} 
\newcommand{\cneffect}[2][G]{\ensuremath{\calC^{#2}[#1]}\xspace} 
\newcommand{\game}[1]{\ensuremath{\calG(#1)}\xspace} 

\newcommand{\safelr}[1][]{\ensuremath{\text{\win}^{#1}}}
\newcommand{\Safelr}[1][]{\ensuremath{\textsc{JWin}^{#1}}}
\newcommand{\Safelrfin}{\Safelr[\text{fin}]}

\newcommand{\Gloc}{\ensuremath{\calG_{\text{loc}}}}
\newcommand{\Gnloc}{\ensuremath{\calG_{\text{nd,loc}}}}
\newcommand{\Gxsc}{\ensuremath{\calG_{\text{xsc}}}}
\newcommand{\Gall}{\ensuremath{\calG_{\text{all}}}}
\newcommand{\Gnall}{\ensuremath{\calG_{\text{nd,all}}}}
\newcommand{\Gfin}{\ensuremath{\calG_{\text{fin}}}}
\newcommand{\Gnd}{\ensuremath{\calG_{\text{nd}}}}
\newcommand{\Gndf}{\ensuremath{\calG_{\text{ndf}}}}
\newcommand{\Gins}{\ensuremath{\calG_{\text{ins}}}}
\newcommand{\Gval}[1][]{\ensuremath{\calG_{\text{val}}^{#1}}}
\newcommand{\Gflat}{\ensuremath{\calG_{\text{flat}}}}

\newcommand{\Floc}{\ensuremath{F_{\text{loc}}}}

\newcommand    {\algproblem} [3] 
{\fbox{\parbox[t]{.95\textwidth}{\centerline{#1}\begin{tabular}{lp{.75\textwidth}} Given: & #2\\ Question: & #3\end{tabular}}}}


\newcommand{\anRC}[1]{\ensuremath{{#1}_{\text{RC}}}}
\newcommand{\anR}[1]{\ensuremath{{#1}_{\text{R}}}}
\newcommand{\anC}[1]{\ensuremath{{#1}_{\text{C}}}}
\newcommand{\anNull}[1]{\ensuremath{{#1}_{\text{0}}}}
\newcommand{\anRandC}[1]{\ensuremath{{#1}_{\text{R$\land$C}}}}
\newcommand{\anRorC}[1]{\ensuremath{{#1}_{\text{R$\lor$C}}}}

\newcommand{\unan}[1]{\ensuremath{\natural({#1})}}

\newcommand{\Contract}{\ensuremath{\text{Contract}}}
\newcommand{\Root}{\ensuremath{\text{root}}}

\newcommand{\linstate}[1]{\ensuremath{\overline{#1}}}


\newcommand{\calls}[1][G]{\ensuremath{S^{#1}_{\text{call}}}}
\newcommand{\summary}{s}
\newcommand{\seffect}[2][1]{\ensuremath{E^{#1}(#2)}} 
\newcommand{\teffect}{\ensuremath{\tilde{E}}} 
\newcommand{\releffect}[2][]{\ensuremath{e^{#1}(#2)}\xspace} 
\newcommand{\deffect}[2][]{\ensuremath{\hat{E}^{#1}[#2]}\xspace}
\newcommand{\fctE}{\ensuremath{E_{\text{fct}}}}
\newcommand{\rveffect}{\ensuremath{e_{v}}}
\newcommand{\veffect}{\ensuremath{E_{v}}}
\newcommand{\finE}{\ensuremath{E^{<\infty}}}

\newcommand{\dreleffect}[1]{\ensuremath{\hat{e}(#1)}\xspace} 
\newcommand{\ehat}{\ensuremath{\hat{e}}\xspace} 

\newcommand{\uhat}{\ensuremath{\hat{u}}\xspace}

\newcommand{\effectset}[2][2]{\ensuremath{\mathcal{E}_{#1}(#2)}\xspace}

\newcommand{\Norm}{\ensuremath{\textsc{Norm}}\xspace}
\newcommand{\Mix}{\ensuremath{\textsc{Mix}}\xspace}
\newcommand{\SMix}{\ensuremath{\textsc{SMix}}\xspace}
\newcommand{\Pick}{\ensuremath{\textsc{Shuffle}}\xspace}
\newcommand{\Shuffle}{\ensuremath{\textsc{Shuffle}}\xspace}

\newcommand{\comp}{\ensuremath{\text{comp}}}
\newcommand{\decomp}{\ensuremath{\text{decomp}}}


\newcommand{\Alr}{{\ensuremath{A_{\text{L2R}}}}}
\newcommand{\Als}{{\ensuremath{A_{\text{LS1}}}}}

\newcommand{\deltalr}{{\ensuremath{\delta_{\text{L2R}}}}\xspace}
\newcommand{\deltaslr}{{\ensuremath{\delta^*_{\text{L2R}}}}\xspace}
\newcommand{\deltalrhat}{{\ensuremath{\hat{\delta}_{\text{L2R}}}}\xspace}
\newcommand{\deltaslrhat}{{\ensuremath{\hat{\delta}^*_{\text{L2R}}}}\xspace}

\newcommand{\dAlr}{{\ensuremath{\widehat{A}_{\text{L2R}}}}}

\newcommand{\Asep}{{\ensuremath{A_{\text{sep}}}}}
\newcommand{\Aut}[2][]{{\ensuremath{A_{#1 #2}}}}

\newcommand{\ovdelta}{{\ensuremath{\overline{\delta}}}}
\newcommand{\tdelta}{{\ensuremath{\tilde{\delta}}}}

\newcommand{\reach}[1][]{\ensuremath{\stackrel{#1}{\leadsto}}}

\newcommand{\Yield}{\ensuremath{\text{Yield}}}

\newcommand{\layer}[1][]{\ensuremath{#1\text{-layer}}}


\newcommand{\Last}{\text{LAST}}

\newcommand{\wf}[1][\Sigma]{\ensuremath{\text{NW}(#1)}}
\newcommand{\rwf}[1][\Sigma]{\ensuremath{\text{rNW}(#1)}}
\newcommand{\swf}[1][\Sigma]{\ensuremath{\text{sNW}(#1)}}
\newcommand{\op}[1]{\ensuremath{\text{\small\textlangle}{#1}\text{\small\textrangle}}}
\newcommand{\cl}[1]{\ensuremath{\text{\small\textlangle}/{#1}\text{\small\textrangle}}}

\newcommand{\nw}[1]{\ensuremath{\widehat{#1}}}

\newcommand{\true}{\ensuremath{\text{true}}}
\newcommand{\false}{\ensuremath{\text{false}}}

\newcommand{\Aug}{\ensuremath{\text{Aug}}}
\newcommand{\Enc}{\ensuremath{\text{Enc}}}

\newcommand{\Strip}{\ensuremath{\text{Strip}}}
\newcommand{\Val}{\ensuremath{\text{Val}}}


\newcommand{\We}{\ensuremath{\mathcal{W}}}
\newcommand{\Se}{\ensuremath{\mathcal{S}}}
\newcommand{\Te}{\ensuremath{\mathcal{T}}}
\newcommand{\Ge}{\ensuremath{\mathcal{G}}}
\newcommand{\He}{\ensuremath{\mathcal{H}}}

\newcommand{\Hh}{\ensuremath{\mathcal{H}}}
\newcommand{\Vv}{\ensuremath{\mathcal{V}}}

\newcommand{\ovWe}{\ensuremath{\overline{\mathcal{W}}}}
\newcommand{\ovTe}{\ensuremath{\overline{\mathcal{T}}}}
\newcommand{\ovGe}{\ensuremath{\overline{\mathcal{G}}}}

\newcommand{\TT}{\ensuremath{\mathbb{T}}}
\newcommand{\GG}{\ensuremath{\mathbb{G}}}
\newcommand{\NN}{\ensuremath{\mathbb{N}}}


\newcommand{\added}[1]{\color{blue}#1\color{black}}

\newcommand{\mcomment}[2]{{\footnotesize\color{blue}(#1)}\footnote{\color{blue}#1: #2}} 

\newcommand{\tsm}[1]{\mcomment{TS}{#1}}
\newcommand{\msm}[1]{\mcomment{MS}{#1}}

\newcommand{\thomas}[1]{\ \\ \fbox{\parbox{\linewidth}{{\sc Thomas}:\\
      #1}}}
\nc{\thomasm}{\tsm}

\newcommand{\martin}[1]{\ \\ \fbox{\parbox{\linewidth}{{\sc Martin}:\\ #1}}}
\nc{\martinm}{\msm}


\newcommand{        
\begin{figure}[h]
	\begin{center}
		\scalebox{}{\input{Bilder/}} \caption{} \label{}
	\end{center}          
\end{figure}           
}[4]{        
\begin{figure}[h]
	\begin{center}
		\scalebox{#1}{\input{Bilder/#2}} \caption{#3} \label{#4}
	\end{center}          
\end{figure}           
}

\newcommand{\ignore}[1]{}

\newcommand{\skipproof}[1]{}

\newenvironment{restate}[1]
{\noindent \textbf{#1 \itshape (restated).} \itshape} 
{}

\newenvironment{proofof}[1]
{\noindent \textbf{Proof of #1.}} 
{\qed}

	\title{Transducer-based Rewriting Games for Active XML}
	\author{Martin Schuster\\ \and \small TU Dortmund University}
	\maketitle

	\begin{abstract}
		\emph{Context-free games} are two-player rewriting games that are played on nested strings representing XML documents with embedded function symbols. 
These games were introduced to model rewriting processes for intensional documents in the \emph{Active XML} framework, where input documents are to be rewritten into a given target schema by calls to external services.

This paper studies 
the setting where dependencies between inputs and outputs of service calls are modelled by transducers, which has not been examined previously. It defines transducer models operating on nested words and studies their properties, as well as the computational complexity of the winning problem for transducer-based context-free games in several scenarios. While the complexity of this problem is quite high in most settings (ranging from $\NP$-complete to undecidable), some tractable restrictions are also identified.
	\end{abstract}

		\section{Introduction}\label{sec:intro}

\paragraph*{Scientific context}

\emph{Context-free games} on strings 
are two-player games extending context-free grammars,
 with the first player (called \pone) choosing the non-terminal to be replaced and the second player (called \ptwo) choosing a replacement for that non-terminal. The winning condition for \pone is reaching, at some point during the game, some string in a given target language over the combined alphabet of non-terminals and terminals.

These games were first introduced in \cite{MuschollSS06} to model
the rewriting process of \emph{Active XML (AXML)} \cite{AbiteboulBM08} documents. 
The intention of AXML is modelling \emph{intensional} documents, i.e. documents that do not store all required information explicitly but instead contain references to external services, from which current information may be materialised on demand, as illustrated in the example below. To this end, AXML extends standard
XML with \emph{function nodes} referring to external web services that may be called to insert data into the AXML document when the document is requested. Context-free games abstract from AXML to model the uncertainty inherent in using external data.

A standard example (cf. \cite{MiloAABN05,MuschollSS06}) of an application for AXML is depicted in Figure \ref{fig:rewriting}. In this example, we consider (part of) an AXML document retained by a local online news site providing information about current weather and events. Initially, the server-side document looks like the one in Figure \ref{subfig:before}. The nodes labelled @weather\_svc and @events\_svc are function nodes referring to external weather and event services.

\begin{figure}[h]
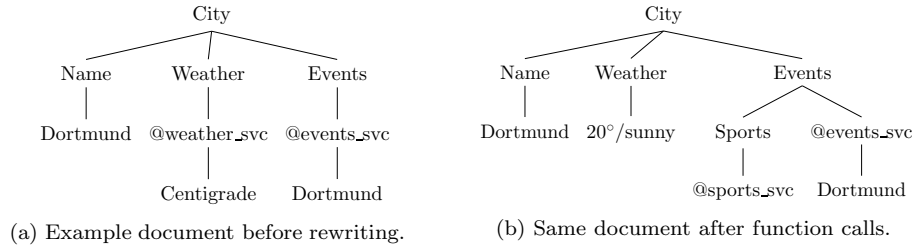

	\centering
	\subfloat[c][Example document before rewriting.]{
		\label{subfig:before}
		\resizebox{.4\textwidth}{!}{		
			\Tree [.City [.Name Dortmund ] [.Weather [.@weather\_svc Centigrade ] ] [.Events [.@events\_svc Dortmund ] ] ]
		}
	}
	\qquad
	\subfloat[c][Same document after function calls.]{
		\label{subfig:after}
		\resizebox{.49\textwidth}{!}{		
			\Tree [.City [.Name Dortmund ] [.Weather 20$^\circ$/sunny ] [.Events [.Sports @sports\_svc ] [.@events\_svc Dortmund ] ] ]	
		}	
	}\\
	\caption{Example of Active XML rewriting. After calls to function nodes @weather\_svc and @events\_svc in Fig. \ref{subfig:before}, external data is materialised to yield the document in Fig. \ref{subfig:after}. }
	\label{fig:rewriting}
\end{figure}

Figure \ref{subfig:after} shows the document from Figure \ref{subfig:before} after both function nodes have been called, replacing them by the external services' results. As exemplified by the function node labelled @weather\_svc, call results replace the entire subtree rooted at the called function node, with that subtree being passed to the external service as a parameter. Concretely, the @weather\_svc node's child tells the weather service that temperatures returned should be in centigrade. As the call result of the @event\_svc node shows, returns of external services may contain further function nodes, even copies of the called function node.

The \emph{safe rewriting problem} \cite{MiloAABN05} of determining whether a given AXML document can always be rewritten into a target schema 
was abstracted in \cite{MuschollSS06} into the problem of determining whether \pone has a winning strategy in a given context-free game on strings, with \pone representing a rewriting algorithm and \ptwo representing the uncertainty inherent in function calls. This research assumed DTDs
as schema formalisms. Allowing more expressive schema languages such as XML Schema \cite{SchusterS15} then led to research into context-free games on \emph{nested strings} (i.e. XML-like linearisations of trees) \cite{AlurM09}. Even though none of these previous works modelled dependencies between parameters and outputs of function calls, they already showed that the winning problem for \pone can be undecidable or of a very high complexity, unless strategies for \pone and allowed schema languages are seriously restricted.

The impact of service call parameters
has so far only been studied in a limited fashion. In \cite{SchusterS15} (and in \cite{MiloAABN05}, for AXML rewriting with DTDs), external services were modelled by \emph{input} (or \emph{validation}) and \emph{output} (or \emph{replacement}) schemas, the semantics being that a function node could only be called if its parameter subtree was valid with regard to its corresponding input schema, which would then yield a return conforming to its output schema. This is a purely syntactic handling of input parameters which models a rather simple relationship between input and output of function calls; for instance, an @event\_svc call reproducing its input parameters in its output as shown in Figure \ref{fig:rewriting} cannot be enforced in this model.

While dependencies between parameters and outputs of service calls have always been implicit in the AXML model, they have not been studied in detail so far. Therefore, we extend the context-free games on nested words from \cite{SchusterS15} by (generally non-deterministic) transformations relating function parameters to possible outputs. We define \emph{nested word transducers} (NWT) as a comparatively simple finite representation for transformations on nested words that naturally extends the nested word automata used in \cite{SchusterS15}. We then study the complexity of the winning problem for \pone in various restrictions of context-free games with transducer-based replacement. As auxiliary results of potentially independent interest, we also examine closure properties and basic algorithmic problems of NWT.

\paragraph*{Contributions}

In light of prior undecidability and complexity results, the main objective of this paper is finding suitable restrictions to transducer-based context-free games that render the winning problem for \pone decidable with as low complexity as possible. The two basic types of restrictions we examine are strategy restrictions (i.e. restrictions to \pones capabilities of calling function symbols) and restrictions to the type of NWT used for rewriting. To avoid undecidability, we only allow left-to-right strategies, i.e. once a function node has been called, no function calls to  nodes preceding it (in post-order)
 are possible (cf. \cite{MuschollSS06}). 

The most important class of strategy restrictions considered here are restrictions to games with limited \emph{replay}. In general context-free games, after a function call, \pone continues her rewriting on that function call's result; we call this the \emph{unbounded replay} case, as \pone may continue rewriting function call results for as long as new function nodes are returned.
In the \emph{replay-free} case, \pone is instead forbidden to call any function nodes inside results of function calls. As an intermediate case between unbounded replay and replay-free games, we also consider \emph{bounded replay} games, where \pone may only call functions returned by function calls up to a fixed maximum recursion depth. For instance, in a replay-free game, \pone could call neither of the function nodes labelled @sports\_svc and @events\_svc in the situation of Figure \ref{subfig:after}; in a bounded replay game of depth 2, on the other hand, she could call these nodes, but not any function nodes returned by those secondary calls.

The second type of restrictions comes from limiting expressiveness of the transducer used in games. Generally, transducers are allowed to be \emph{non-functional}, i.e. any input string, may have several transducts (to model the fact that function call results are dependent upon, but not uniquely determined by, input parameters). The main types of transducers examined here are the following:
\begin{itemize}
\item \emph{Nested word transducers (NWT)} allow for transforming input strings into output strings that are arbitrarily long, regardless of the input string's size. 
\item \emph{Nested word transducers without $\epsilon$-transitions ($\epsilon$-free NWT)} may only increase the size of an input string by no more than a linear factor.
\item \emph{Relabelling transducers} may only change labels of input strings, not their structure.
\item As a special case, \emph{functional relabelling transducers} are relabelling transducers whose output string is uniquely determined by their input.
\end{itemize}

This paper's main complexity results are summarised in Table \ref{tab:results}. The central insight here is that the least restricted settings yield an undecidable or non-elementary winning problem, and even for strong restrictions, the complexity of the winning problem is generally quite high, with no tractable case among the standard settings.
For this reason, we also study several limitations of these settings, derived from our lower bound proofs, in order to reduce complexity:
\begin{itemize}
\item \emph{Depth-bounded NWT} lower the complexity of the replay-free case to $\EXPSPACE$-complete (in comparison to $\iiEXPTIME$ for general NWT).
\item Strategies with \emph{bounded \Call width} lower the complexity of the bounded-replay case for $\epsilon$-free NWT from non-elementary to $\coNEXPTIME$-complete or $\coNP$-complete, depending on the precise formalisation of bounded \Call width.
\item \emph{Write-once} strategies yield a tractable case for functional relabelling transducers in a setting even more restrictive than the replay-free one.
\end{itemize}

\begin{table}[t]
\footnotesize
  \centering
          \begin{tabular}{|p{4cm}|c|c|c|}
            \hline
           			& No replay & Bounded & Unbounded\\\hline
NWT 		& $\iiEXPTIME$ & undecidable & undecidable \\\hline
$\epsilon$-free NWT 					& $\coNEXPTIME$&  non-elementary & undecidable\\\hline
Relabelling 			& $\PSPACE$ &  $\PSPACE$ & $\EXPTIME$\\\hline
Functional relabelling & $\NP$&  $\NP$ & $\PSPACE$\\\hline
          \end{tabular}
        \caption{Summary of complexity results. All results are completeness results.}
    \label{tab:results}
  \end{table}

\paragraph*{Related Work}

Beyond the work already discussed, further complexity and decidability results for context-free games can be found in \cite{AbiteboulMB05,BjorklundSSK13}, and \cite{MuschollSS06} contains further references to related work.

Our concept of nested word transducers is based on Visibly Pushdown Transducers \cite{RaskinS08,ThomoVY08}, specifically the well-nested VPT of \cite{FiliotRRST10}. The original definitions of VPT in \cite{RaskinS08,ThomoVY08} included $\epsilon$-transitions, which were dropped from later definitions, as they caused several algorithmic problems, such as functionality and equivalence, to become undecidable (cf. \cite{Servais11}). Different from these approaches, this paper combines $\epsilon$-transitions with the restriction to well-nested words, which is (to the best of the author's knowledge) new research. 

\paragraph*{Organisation}

Section \ref{sec:prelim} gives basic notation and definitions. Section \ref{sec:epsilon} defines nested word transducers and examines their structural and algorithmic properties.
The next three sections give results on the complexity of the winning problem for games with transducer-based replacement, from most to least expressive -- general nested word transducers (Section \ref{sec:unbounded}), nested word transducers without $\epsilon$-transitions (Section \ref{sec:bounded}), and relabelling transducers (Section \ref{sec:relabel}). Each of these sections also discusses one of the restrictions with reduced complexity mentioned above.
Section \ref{sec:conclusion} concludes the paper. Due to space limitations, proofs and technical definitions are deferred to the appendix. The author is grateful to Gaetano Geck and Thomas Schwentick for careful proof-reading and valuable suggestions, and to the anonymous reviewers of MFCS 2016 for their insightful and constructive comments.
			
		\section{Preliminaries}\label{sec:prelim}

For any natural number $n \in \mathbb{N}$, we denote by $[n]$ the set $\{1, \ldots, n\}$.
For finite sets $M$, $\Pot(M)$ denotes the powerset of $M$, i.e. the set of all subsets of $M$.
For an alphabet $\Sigma$, we denote the set of finite strings over $\Sigma$ by $\Sigma^*$ and  $\epsilon$ denotes the empty string.

\paragraph*{Nested words}

For a finite alphabet $\Sigma$, $\op{\Sigma} \mydef \{\op{a} \mid a \in \Sigma\}$ denotes the set of all \emph{opening $\Sigma$-tags} and $\cl{\Sigma} \mydef \{\cl{a} \mid a \in \Sigma\}$ the set of all \emph{closing $\Sigma$-tags}. We denote by $\hat{\Sigma} \mydef \op{\Sigma} \cup \cl{\Sigma}$ the set of all $\Sigma$-tags. The set $\wf \subs \hat{\Sigma}^*$ of \emph{(well-)nested words} (or \emph{(well-)nested strings}) over $\Sigma$ is the smallest set such that $\epsilon \in \wf$, and if $u,v \in \wf$ and $a \in \Sigma$, then also $u \op{a}v\cl{a} \in \wf$. We (informally) associate with every nested word $w$ its \emph{canonical forest representation}, such that words  $\op{a}\cl{a}$, $\op{a}v\cl{a}$ and $uv$ correspond to an $a$-labelled leaf, a tree with root $a$ (and subforest corresponding to $v$), and the forest of $u$ followed by the forest of $v$, respectively.  
A nested string $w$ is \emph{rooted} if its corresponding forest is a tree. We denote the set of rooted nested strings over $\Sigma$ by $\rwf$. In a string $w = w_1 \ldots w_n \in \hat{\Sigma}^*$, two tags $w_i \in \op{\Sigma}$ and $w_j \in \cl{\Sigma}$ with $i<j$ are \emph{associated} if the substring $w_i \ldots w_j$ of $w$ is a rooted nested string. An opening (closing) tag $w_i$ in $w$ is \emph{unmatched}, if it has no associated closing (opening) tag in $w$.
To stress the distinction from nested strings in $\wf$, we refer to strings in $\Sigma^*$ as \emph{flat strings}. 

\paragraph*{Nested word automata}

A \emph{nested word automaton (NWA)} $A = (Q, P, \Sigma, \delta, q_0, F)$ \cite{AlurM09} is basically a pushdown automaton which performs a push operation on every opening tag and a pop operation on every closing tag, and in which the pushdown symbols are just states.
More formally, $A$ consists of a set $Q$ of \emph{linear states}, a set $P$ of \emph{hierarchical states}, an alphabet $\Sigma$, a \emph{transition relation} $\delta$, an \emph{initial state} $q_0 \in Q$, and a set  $F \subs Q$ of \emph{accepting (linear) states}.
The relation $\delta$ is a subset of the union of sets  $(Q \times \op{\Sigma} \times Q \times P)$ and $(Q \times P \times \cl{\Sigma} \times Q)$. We sometimes interpret $\delta$ as the union of two functions from $(Q \times \op{\Sigma})$ to $\Pot(Q \times P)$ and from $(Q \times P \times \cl{\Sigma})$ to $\Pot(Q)$ and write accordingly $(q',p) \in \delta(q,\op{a})$ for $(q,\op{a},q',p) \in \delta$ and $q' \in \delta(q,p,\cl{a})$ for $(q,p,\cl{a},q') \in \delta$. The semantics of NWA as well as the language $L(A)$ decided by a NWA $A$ are defined in the natural way, with a NWA accepting if it reaches a configuration with an accepting state and empty stack. If $A$ is a NWA, we call $L(A)$ a \emph{regular language} (of nested words).
A NWA is \emph{deterministic} (or DNWA) if $|\delta(q, \op{a})| = 1 = |\delta(q,p,\cl{a})|$ for all $q \in Q$, $p \in P$ and $a \in \Sigma$. In this case, we simply write $\delta(q, \op{a}) = (q', p')$ instead of $\delta(q, \op{a}) = \{(q', p')\}$ (and accordingly for $\delta(q,p,\cl{a})$).

	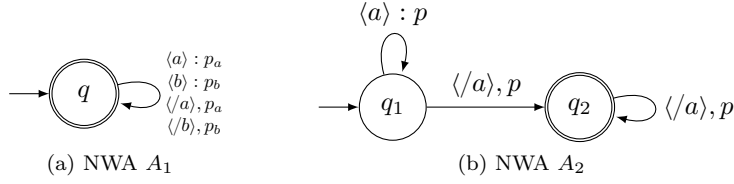
\begin{figure}[t]
	\centering
	\subfloat[c][NWA $A_1$]{
		\label{subfig:wf}
	\begin{tikzpicture}[scale=0.5,node distance=5em,initial text={},>=latex,initial distance=3em,auto]
		\node [state,initial,accepting] (q) {$q$};
		
		\path[->] (q) edge	 [loop right]	node	[align=center,scale=0.7]{$\op{a}:p_a$\\$\op{b}:p_b$\\$\cl{a},p_a$\\$\cl{b},p_b$}	();
	\end{tikzpicture}
	}
	\qquad
	\subfloat[c][NWA $A_2$]{
		\label{subfig:path}
				\begin{tikzpicture}[scale=0.5,node distance=7em,initial text={},>=latex,initial distance=3em,auto]
		\node [state,initial] (q) {$q_1$};
		\node [state,accepting] (qq) [right of=q] {$q_2$};
		
		\path[->] (q)	edge	 []		node	{$\cl{a},p$}	(qq)
						edge	 [loop above]	node	[align=center]{$\op{a}:p$}	();
		\path[->] (qq)	edge	 [loop right]	node	[align=center]{$\cl{a},p$}	();
	\end{tikzpicture}
	}\\
	\caption{NWAs $A_1$ and $A_2$ from Example \ref{ex:nwa}}
	\label{fig:nwa}
\end{figure}

\begin{example}\label{ex:nwa}
The NWA $A_1$ (Fig. \ref{subfig:wf}) checks that its input string is well-nested by pushing hierarchical state $p_a$ (resp. $p_b$) to the stack on each opening $\op{a}$ (resp. $\op{b}$) tag and popping an according hierarchical state with each matching closing tag. In this manner, $A_1$ decides the set of all well-nested strings over $\{a,b\}$. The NWA $A_2$ (Fig. \ref{subfig:path}) initially pushes a hierarchical state $p$ each time it reads $\op{a}$ in linear state $q_1$, then changes linear state to $q_2$ on reading the first $\cl{a}$ and accepts iff each initial $\op{a}$ is matched by a $\cl{a}$. In this manner, it decides the language $\{\op{a}^n \cl{a}^n \mid n \geq 1\}$. 
\end{example}

\paragraph*{Context-free games}

A \emph{context-free game (with transduction) on nested words (cfG)} $G=(\Sigma,\funcsymb,R,T)$ consists
of a finite alphabet $\Sigma$, a set $\Gamma\subs\Sigma$ of \emph{function symbols}, a \emph{(replacement) rule set} $R\subs \rwf \times \wf$ and a \emph{target language} $T\subs\wf$. 
We will only consider the case where $T$ is a non-empty regular nested word language and replacement rules are given by nested word transducers, to be defined in Section \ref{sec:epsilon}.
A play of $G$ is played by two players, \pone\/ and \ptwo, on a word $w\in \wf$. In a nutshell, \pone moves the focus along $w$ from left to right and decides for each closing tag $\cl{a}$, whether she plays a \emph{Read} or, in case $a\in\Gamma$,  a \emph{Call} move. In the latter case, \ptwo then replaces the rooted word $u$ ending at the position of $\cl{a}$ with some word $v$ with $(u,v) \in R$ and the focus is set on the first symbol of $v$. If no such word $v$ exists, \ptwo immediately wins the play.
 In case of a \Read move,
 the focus just moves further on. \pone wins a play if the word obtained at its end is in $T$.

\paragraph*{Strategies}

A \emph{strategy} for player $p\in\{\ponea,\ptwoa\}$ maps 
game states 
where player $p$ is to move into allowed moves for player $p$, i.e. strategies $\Astrat$ for \pone return moves in $\{\Read,\Call\}$ while strategies $\Bstrat$ for \ptwo return replacement strings in $\wf$. Given an initial word $w$ and strategies  $\Astrat,\Bstrat$ the play $\play(\Astrat,\Bstrat,w)$ according to $\Astrat$ and $\Bstrat$ on $w$ is uniquely determined. A \emph{winning strategy} for \pone is a strategy $\Astrat$ such that \pone wins the play $\play(\Astrat,\Bstrat,w)$, for every $\Bstrat$ of \ptwo.
By $\safelr(G)$ we denote the set of all words for which \pone has a winning strategy in $G$.

The \emph{\Call depth} of a play $\play$ is the maximum nesting depth of \Call moves in $\play$, 
if this maximum exists. That is, the \Call depth of a play is zero, if no \Call is played at all, and one, if no \Call is played inside a string yielded by a replacement move.
For a  strategy $\Astrat$ of \pone and a string $w\in\wf$, the \emph{\Call depth} $\Depth(\Astrat,w)$ of $\Astrat$ on $w$ is the maximum \Call depth in any play $\play(\Astrat,\Bstrat,w)$. A strategy $\Astrat$ has \emph{$k$-bounded \Call depth} if $\Depth(\Astrat, w) \leq k$ for all $w \in \wf$.
As a more intuitive formulation, we use the concept of \emph{replay}:
Strategies for \pone  of \Call depth one are called \emph{replay-free}, and strategies of $k$-bounded \Call depth, for any $k$, have \emph{bounded replay}. 

\paragraph*{Algorithmic problems}

In this paper, we study the following algorithmic problem $\Safelr(\calG)$ for various
classes $\calG$ of context-free games with replacement transducers. 

\begin{centering}
  \algproblem{$\Safelr(\calG)$}{A context-free game $G\in\calG$ and a string
    $w$.}{Is $w \in \safelr(G)$?}\\
\end{centering}

A class $\calG$ of context-free games in $\Safelr(\calG)$ generally comes with three parameters:
\begin{itemize}
\item the representation of the target language $T$,
\item the representation of the replacement relation $R$, and
\item to which extent replay is restricted.
\end{itemize}

We generally assume target languages to be represented by DNWAs, because the complexity of the winning problem is already quite high under that assumption, and our main interest is in finding classes $\calG$ for which $\Safelr(\calG)$ is tractable.
Replacement relations will be given as different types of nested word transducers (defined in Section \ref{sec:epsilon}). 
By a slight abuse of notation, the replacement transducer implementing a replacement relation $R$ will also be referred to as $R$.
 
In each setting, we consider the cases of unrestricted replay, bounded replay (\Call depth $k$, for some $k$), and no  replay (\Call depth $1$). We note that replay depth is formally not an actual game parameter, but the algorithmic problem can be restricted to strategies of \pone of the stated kind. If the class $\calG$  of games is clear from the context, we often simply write $\Safelr$ instead of $\Safelr(\calG)$.
		
		\section{Nested Word Transducers}\label{sec:epsilon}

In this section, 
we define nested word transducers and examine their closure properties and complexities of algorithmic problems. Thanks to our definition putting some rather severe restrictions on the use of $\epsilon$-transitions and the allowed output of transducers, we obtain  advantageous closure properties and comparatively low complexities.

Intuitively, a NWT $T$ works much like a NWA with output and additional $\epsilon$-transitions -- $T$ reads its input from left to right and decides nondeterministically which available transition to use; on an opening (resp. closing) transition, it reads an opening (closing) input tag, changes its linear state and pushes (pops) a hierarchical state while producing an output. Opening (closing, internal) $\epsilon$-transitions do not consume input symbols but induce state changes and outputs. $T$ only produces an output string if it accepts the input string.

\begin{definition}
	A \emph{nested word transducer} (or \emph{NWT}) is a tuple \mbox{$T = (Q,P,P_{\epsilon},\Sigma,\delta,q_0,F)$} consisting of a set $Q$ of \emph{linear states}, a set $P$ of \emph{hierarchical states}, a set $P_{\epsilon} \subs P$ of \emph{hierarchical $\epsilon$-states}, an alphabet $\Sigma$, a \emph{transition relation} $\delta$, which is the union of three relations from \mbox{$(Q \times (\op{\Sigma} \cup \{\op{\epsilon}\}) \times Q \times P \times \hat{\Sigma}^*)$} (called \emph{opening} transitions), $(Q \times \{\epsilon\} \times Q \times \wf)$ (called \emph{internal} transitions) and \mbox{$(Q \times P \times (\cl{\Sigma} \cup \{\cl{\epsilon}\}) \times Q \times \hat{\Sigma}^*)$} (called \emph{closing} transitions), an \emph{initial state} $q_0 \in Q$, and a set of \emph{accepting states} $F \subs Q$, such that for all $q,q',r,r' \in Q$, $p \in P$, $a \in \Sigma \cup \{\epsilon\}$ and $u,v \in \hat{\Sigma}^*$ it holds that\footnote{These three conditions make NWT roughly correspond to \emph{synchronized visibly pushdown transducers} \cite{RaskinS08}; we mainly require them to ensure closure of regular nested word languages under NWT transduction.}
	\begin{itemize}
		\item $(q,\op{\epsilon}, q',p,u) \in \delta$ or $(q,p,\cl{\epsilon},q',u) \in \delta$ if and only if $p \in P_\epsilon$ (\emph{$\epsilon$-consistency}),
		\item if $(q,\op{a}, q',p,u) \in \delta$ and $(r,p,\cl{a},r',v) \in \delta$, then $uv \in \wf$ (\emph{well-formedness}), and
		\item for each $(q,\op{a}, q',p,u) \in \delta$ (resp. $(r,p,\cl{a},r',u) \in \delta$) with $u \neq \epsilon$, $u$ contains at least one unmatched opening (resp. closing) tag (\emph{synchronisation}).
	\end{itemize}
	As for standard NWA, we also write $(q',p,u) \in \delta(q, \op{a})$ (resp. $(q',u) \in \delta(q,p,\cl{a})$, $(q',u) \in \delta(q,\epsilon)$) instead of $(q,\op{a},q',p,u) \in \delta$ (resp. $(q,p,\cl{a},q',u), (q,\epsilon,q',u) \in \delta$).
\end{definition}

A detailed semantics definition can be found in the appendix.

\begin{figure}[t]
	\centering
	\begin{tikzpicture}[scale=0.5,node distance=5em,initial text={},>=latex,initial distance=3em,auto]
		\node [state,initial] (i) {$i$};
		\node [state] (a) [above right of=i] {$a_1$};
		\node [state] (aa) [right of=a] {$a_2$};
		\node [state,accepting] (f) [below right of=aa] {$f$};
		\node [state] (b) [below right of=i] {$b_1$};
		\node [state] (bb) [right of=b] {$b_2$};
		
		\path[->] (i)	edge	 [swap]		node	{$\epsilon:\epsilon$}	(a)
						edge	 []	node	{$\epsilon:\epsilon$}	(b);
		\path[->] (a)	edge	 [swap]		node	{$\epsilon:\epsilon$}	(aa)
						edge	 [loop left]	node	[align=center,scale=0.7]{$\op{a}:p_a,\op{a}$\\$\op{b}:p_b,\op{a}$\\$\cl{a},p_a:\cl{a}$\\$\cl{b},p_b:\cl{a}$}	();
		\path[->] (aa)	edge	 [swap]		node	{$\epsilon:\epsilon$}	(f)
						edge	 [loop right]	node	[align=center]{$\op{\epsilon}:p^{\epsilon}_a,\op{a}$}	();
		\path[->] (b)	edge	 []	node	{$\epsilon:\epsilon$}	(bb)
						edge	 [loop left]	node	[align=center,scale=0.7]{$\op{a}:p_a,\op{b}$\\$\op{b}:p_b,\op{b}$\\$\cl{a},p_a:\cl{b}$\\$\cl{b},p_b:\cl{b}$}	();
		\path[->] (bb)	edge	 []	node	{$\epsilon:\epsilon$}	(f)
						edge	 [loop right]	node	[align=center]{$\op{\epsilon}:p^{\epsilon}_b,\op{b}$}	();
		\path[->] (f)	edge	 [loop right]	node	[align=center]{$\cl{\epsilon},p^{\epsilon}_a:\cl{a}$\\$\cl{\epsilon},p^{\epsilon}_b:\cl{b}$}	();
	\end{tikzpicture}
	\caption{Nested Word Transducer $T_{ab}$ from Example \ref{ex:nwt}.}
	\label{fig:nwt}
\end{figure}
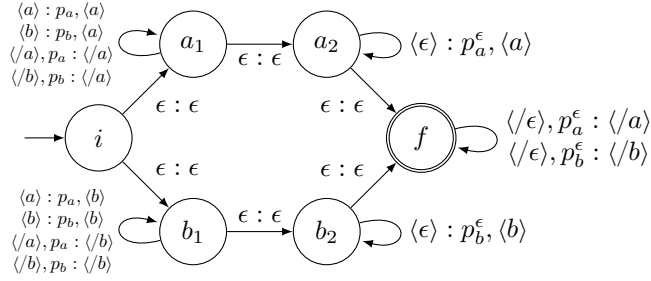

\begin{example}\label{ex:nwt}
	Figure \ref{fig:nwt} shows a NWT $T_{ab}$, with linear states displayed as circles and transitions as arrows. From the initial state $i$, $T_{ab}$ branches nondeterministically into either state $a_1$ or $b_1$. In state $a_1$, $T_{ab}$ checks that the input string is well-nested just as the NWA $A_1$ from Example \ref{ex:nwa}. During this check, $T_{ab}$ outputs $\op{a}$ (resp. $\cl{a}$) for each opening (resp. closing) input tag, effectively relabelling the input string to consist exclusively of $a$-labelled tags. In state $a_2$, $T_{ab}$ inserts into the output string an arbitrary number of opening $\op{a}$ tags, for which a matching number of $\cl{a}$ tags are inserted in state $f$ before $T_{ab}$ accepts. The behaviour of $T_{ab}$ in states $b_1$ and $b_2$ is analogous, but outputs consist only of $b$-labelled tags. Altogether, $T_{ab}$ chooses nondeterministically some $x \in \{a,b\}$, relabels all tags of a well-nested input string into $x$-labelled tags and then appends a string of the form $\op{x}^n\cl{x}^n$.
\end{example}

	The \emph{image} $T(w)$ of a well-nested string $w \in \wf$ under $T$ is the set of all outputs of $T$ on $w$ according to some accepting run 
	of $T$ on $w$. 
	This definition extends to sets of input strings in the natural way: For a set $S \subseteq \wf$, we define $T(S) = \bigcup_{w \in S} T(w)$.
	The \emph{domain} $\Dom(T)$ of $T$ is the set of all strings $w$ such that $T(w) \neq \emptyset$, 
	and the \emph{range} $\Rng(T)$ of $T$ is the set of all strings $u$ such that there exists a $w \in \wf$ with $u \in T(w)$, i.e. the set of all possible outputs of $T$.

We next define several restrictions on the expressiveness of NWT.

\begin{definition}
	Let $T = (Q,P,P_{\epsilon},\Sigma,\delta,q_0,F)$ be a NWT. We call $T$
		\begin{itemize}
		\item \emph{$\epsilon$-free} if $P_{\epsilon} = \emptyset$ and $\delta$ contains no $\epsilon$-transitions.
		\item \emph{non-deleting} if the output component of every non-internal transition in $\delta$ is a non-empty string;
		\item \emph{deterministic} (or a DNWT) if for every $q \in Q, p \in P$ and $a \in \Sigma$, it holds that $|\delta(q,\op{a})| = |\delta(q,p,\cl{a})| = 1$;
		\item a \emph{relabelling} transducer if it is $\epsilon$-free and for every $q,q' \in Q, p \in P$, $a \in \Sigma$ and $u \in \Sigma^*$, if $(q',p,u) \in \delta(q, \op{a})$, then $u \in \op{\Sigma}$, and if $(q',u) \in \delta(q,p,\cl{a})$, then $u \in \cl{\Sigma}$;
		\item \emph{functional}, if for every $w \in \wf$, it holds that $|T(w)| = 1$.
	\end{itemize}
\end{definition}

It is easy to see that 
the length of any output of an $\epsilon$-free NWT is at most linear in the length of the input string, while outputs of general NWT may grow to an arbitrary length. 
We note that functionality, unlike the other restrictions defined here, is a \emph{semantic} condition. We do not investigate in this paper the decidability or complexity of determining whether a NWT is functional; likely, techniques for Visibly Pushdown Transducers in \cite{FiliotRRST10} could be adapted for this purpose. Also, while determinism implies functionality, the converse does not hold.

The following lemma shows that we can assume without loss of generality that each transition of a NWT involves at most one input and at most one output symbol, i.e. each opening (closing) transition outputs at most one opening (closing) tag 
and each internal $\epsilon$-transition outputs nothing. 

\begin{lemma}\label{lemma:normalform}
	Each NWT $T = (Q,P,P_\epsilon, \Sigma, \delta, q_0, F)$ can be transformed in polynomial time into an NWT $T' = (Q',P',P'_\epsilon, \Sigma, \delta', q_0, F)$ with $T(w) = T'(w)$ for each $w \in \wf$, such that for any transition in $\delta'$ with output $u$, it holds that $|u| \leq 1$.
	
	We say that a NWT of this shape is in \emph{normal form}.
\end{lemma}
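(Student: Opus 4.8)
The plan is to replace, independently, every transition whose output has length at least two by a short gadget: a chain of fresh linear states whose transitions each emit a single tag, using hierarchical $\epsilon$-states to keep the emitted output well-nested. The central observation driving the construction is a structural normal form for the outputs themselves. Well-formedness ($uv\in\wf$) forces the output $u$ of an opening transition $(q,\op a,q',p,u)$ to contain no unmatched closing tag, so it decomposes as $u = u_0\op{c_1}u_1\cdots\op{c_k}u_k$ with each $u_j\in\wf$ and $\op{c_1},\dots,\op{c_k}$ its unmatched opening tags; symmetrically, the output $v$ of the paired closing transition $(r,p,\cl a,r',v)$ has only unmatched closing tags, which $uv\in\wf$ forces to be exactly $\cl{c_k},\dots,\cl{c_1}$; and the output of an internal transition is already well-nested.

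First I would treat internal transitions $(q,\epsilon,q',u)$ with $u = x_1\cdots x_m\in\wf$: introduce fresh linear states $s_1,\dots,s_{m-1}$ and emit $x_1,\dots,x_m$ one per $\epsilon$-transition along the chain $q\to s_1\to\cdots\to q'$, realising an opening tag $\op x$ by an opening $\epsilon$-transition that pushes a hierarchical $\epsilon$-state $p^\epsilon_x$ and a closing tag $\cl x$ by a closing $\epsilon$-transition that pops $p^\epsilon_x$. Since $u$ is well-nested, these pushes and pops are balanced, so the gadget has the same (null) stack effect as the original internal transition. Opening and closing transitions are then handled by coupling their gadgets through the stack: reading $\op a$ now pushes $p$ and emits nothing, after which an $\epsilon$-chain emits $u$, leaving $p^\epsilon_{c_1},\dots,p^\epsilon_{c_k}$ on the stack above $p$; symmetrically, an $\epsilon$-chain emits $v$, popping $p^\epsilon_{c_k},\dots,p^\epsilon_{c_1}$ in turn, before reading $\cl a$ pops $p$ and emits nothing. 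Labelling every $\epsilon$-state by the single tag it carries makes the new NWT satisfy $\epsilon$-consistency, synchronisation (a single emitted tag is always unmatched) and, crucially, well-formedness, since any compatible pair of $\epsilon$-transitions outputs $\op x\cl x\in\wf$; and because $uv\in\wf$ guarantees the $c$-labels match in reverse order, the cross-bracket pops always find the correct state on top.

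For correctness I would argue both inclusions on runs. In the forward direction, an accepting run of $T$ is simulated step by step, replacing each long transition by a traversal of its gadget; the computation above shows that the emitted string and the net stack effect are unchanged, so the output is preserved. In the backward direction, I would use that all gadget-internal states are fresh and lie on a single deterministic path: once a run of $T'$ enters a gadget it is forced along the whole chain, and its emitted tags and stack moves contract back to a single application of the original transition, yielding an accepting run of $T$ with the same output. Finally, each transition of output length $m$ contributes $O(m)$ new states and transitions and at most $|\Sigma|$ distinct $\epsilon$-states, so the whole construction runs in polynomial time.

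I expect the main obstacle to be the backward direction, specifically ruling out spurious accepting runs created by the freely firing closing $\epsilon$-transitions. The key point to make precise is that a closing gadget, entered spontaneously, can only be completed when the matching children have already been fully processed, so that the expected $\epsilon$-state is on top of the stack, and must be immediately followed by reading the matching closing input tag; otherwise the run reaches a configuration with no applicable transition and dies. Establishing that these timing constraints are exactly enforced by the stack discipline together with the freshness of the intermediate states is the delicate part of the argument.
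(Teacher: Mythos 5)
Your proposal is correct and follows the same overall strategy as the paper's proof: replace every transition with a long output by a chain of fresh linear states whose transitions emit one tag each, use fresh hierarchical $\epsilon$-states to carry the inserted brackets, and let well-formedness of the original transducer guarantee that the cross-gadget pops always find the matching states. The difference lies in the gadget design. You read the input tag at the very start of an opening gadget (dually, at the very end of a closing gadget) with \emph{empty} output, pushing the original hierarchical state $p$ underneath the per-label states $p^\epsilon_{c_1},\dots,p^\epsilon_{c_k}$; the paper instead reads $\op{a}$ at the position of the \emph{last unmatched opening tag} of the output (dually, $\cl{a}$ at the first unmatched closing tag), emitting that tag on the reading transition, and uses a single fresh $\epsilon$-state $p_{p',a}$ for all remaining positions. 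Both designs are sound, and your treatment of the delicate backward direction (spontaneously entered closing gadgets die unless the stack and the next input tag cooperate, by freshness of the chain states) matches the level of rigour of the paper's own argument. However, the paper's placement of the reading transition has a consequence yours lacks: every transition of the transformed transducer emits exactly one tag, so a non-deleting NWT remains non-deleting after normalisation. In your construction the reading transitions output $\epsilon$, so normalising a non-deleting NWT can yield a deleting one. The lemma as stated does not promise this, so your proof is a valid proof of the statement; but the paper later applies the normal form to \emph{non-deleting} NWT (Proposition \ref{prop:nwtcomposition}, Lemma \ref{lemma:regularrange}) and implicitly relies on non-deletingness being preserved. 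Your construction is easily repaired to have this property by shifting the input-reading transition to the last unmatched opening (resp.\ first unmatched closing) position of the output, exactly as the paper does.
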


In most of this paper, we restrict our attention to non-deleting transducers. This is because regular nested word languages are closed under transduction by non-deleting NWT, which does not hold in the presence of deletions (consider, for instance, a NWT deleting all matched opening and closing $c$-labelled tags on the regular input language $\{(\op{a}\cl{a}\op{c})^n(\cl{c}\op{b}\cl{b})^n \mid n \geq 0\}$). The practical motivation for desiring this property is the fact that the AXML setting assumes that function call results can be specified by standard XML schema languages, which are subclasses of regular nested word languages.

Moreover, for most of the transducer models examined here, non-deleting transducers are not a significant restriction when it comes to context-free games, as the following result states.

\begin{lemma}\label{lemma:nondeleting}
	Any context-free game $G = (\Sigma, \funcsymb, R, T)$ with NWT $R$ can be transformed in polynomial time into a game $G' = (\Sigma', \funcsymb, R', T')$ such that $R'$ is non-deleting and it holds that $\safelr(G') \cap \wf = \safelr(G)$. 
\end{lemma}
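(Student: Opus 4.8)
The plan is to remove deletions from a NWT while preserving the winning set of the associated context-free game, up to intersection with well-nested words over the original alphabet. The central difficulty is that a deleting NWT can output the empty string on non-internal transitions, and these deletions can erase tags in a way that is impossible to simulate by a non-deleting transducer on the \emph{same} alphabet --- the example in the text (a NWT deleting matched $c$-tags on a regular input language whose image is not regular) shows that simple simulation over $\Sigma$ cannot work. To get around this, I would enlarge the alphabet: introduce a fresh ``marker'' symbol (say $\blank \notin \Sigma$, or a distinguished $d$) so that $\Sigma' = \Sigma \cup \{d\}$, and have $R'$ emit a matched pair $\op{d}\cl{d}$ (or a single padding tag) wherever the original $R$ would have deleted an opening/closing tag. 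Thus every erasing transition of $R$ is replaced by a transition producing a harmless, clearly-recognisable dummy output, making $R'$ non-deleting while faithfully recording ``where a deletion happened''.

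First I would invoke Lemma~\ref{lemma:normalform} to put $R$ in normal form, so that each transition carries at most one output symbol; then the only transitions I must repair are the non-internal ones whose output is $\epsilon$, and each can be patched by emitting one $d$-tag of the matching polarity (an opening transition that deleted emits $\op{d}$, a closing one emits $\cl{d}$). By $\epsilon$-consistency and well-formedness of the original $\delta$, the matched opening/closing structure is preserved, so $R'$ is a legitimate non-deleting NWT; one checks that $\Rng(R') \subseteq \wf[\Sigma']$ and that stripping all $d$-tags from any output of $R'$ recovers exactly an output of $R$. Next I would adjust the target language: set $T' = \Strip^{-1}(T)$, i.e. the regular language of all $\Sigma'$-nested words that become a word of $T$ after deleting every $\op{d},\cl{d}$ pair. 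Since $T$ is regular (given by a DNWA) and the $d$-stripping map is a simple letter-to-$\epsilon$ projection on a visibly-nested alphabet, $T'$ is again regular and a DNWA for it is computable in polynomial time. Care is needed here so that the new DNWA can nondeterministically (or deterministically, by absorbing $d$-tags into $\epsilon$-like loops) ignore $d$-pairs while simulating the original automaton on the $\Sigma$-projection.

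The correctness argument proceeds by relating plays of $G$ and $G'$ move for move. Given any \pone strategy in $G$ I would lift it to the ``same'' strategy in $G'$, noting that the extra $d$-tags never carry function symbols (as $d \notin \funcsymb$), so \pones \Call/\Read decisions on genuine $\Sigma$-tags are unaffected and the two focus-sweeps stay in lockstep modulo the inserted $d$-pairs; conversely a \ptwo replacement $(u,v) \in R$ corresponds to a replacement $(u,v') \in R'$ with $\Strip(v') = v$, and vice versa. Hence a word $w \in \wf$ is won by \pone in $G$ if and only if it is won in $G'$, because at every game-end the produced word lies in $T$ exactly when its $d$-padded counterpart lies in $T' = \Strip^{-1}(T)$. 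Restricting to inputs $w \in \wf$ (the original alphabet) yields $\safelr(G') \cap \wf = \safelr(G)$, as claimed.

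I expect the main obstacle to be the \textbf{bookkeeping for the target language and the \Call structure}, not the transducer surgery itself. Two subtleties demand attention: (i) ensuring that the inserted $d$-tags never create or destroy matched pairs that \ptwo could exploit --- this is why padding with \emph{matched} $\op{d}\cl{d}$ pairs (rather than unmatched single tags) and keeping $d \notin \funcsymb$ is important, so that the $d$-tags are inert for \pones calling choices; and (ii) verifying that the focus position after a replacement lands correctly, since inserting $d$-tags shifts positions --- here I would argue that because $d$-tags are skipped as ordinary \Read moves, the \emph{relative} order of genuine $\Sigma$-tags, and thus the eventual membership test against $T$, is preserved. Confirming that the construction remains polynomial-time --- in particular that building the DNWA for $\Strip^{-1}(T)$ does not blow up the state space --- is the final check, and it goes through because stripping a visibly-nested padding symbol is a structure-preserving projection.
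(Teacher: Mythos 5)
Your construction is the same in spirit as the paper's (replace deleted tags by fresh marker tags, make the target automaton ignore markers, keep markers out of $\funcsymb$), but it has a genuine gap: you never give $R'$ any transitions that \emph{read} the marker tags, and this breaks the lemma for exactly the plays it must cover. After one replacement in $G'$, the current word contains $\op{d}/\cl{d}$ tags. Whenever \pone later plays \Call on a function symbol whose called subtree encloses such a previously replaced part --- which happens under replay, and even in replay-free play via nested calls (the ``hidden replay'' the paper discusses in Section \ref{sec:relabel}: call $\cl{b}$ first, then an enclosing $\cl{a}$) --- the string handed to $R'$ contains $d$-tags. Your $R'$, obtained by only patching the $\epsilon$-output transitions of $R$, has no transitions on $\op{d}$ or $\cl{d}$, so no replacement exists, and by the game semantics \ptwo \emph{immediately wins} such a play. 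In $G$, however, the corresponding call is made on the string with those tags genuinely deleted, and $R$ may well have valid outputs there. Hence a winning strategy for \pone in $G$ need not lift to $G'$; the inclusion $\safelr(G) \subseteq \safelr(G') \cap \wf$ fails, and with it the claimed equality. The paper's proof contains precisely the step you are missing: it adds, at \emph{every} linear state of $R'$, pass-through transitions that read a marker tag, output it unchanged, and leave the state untouched (``to ensure that iterated transductions respect deleted tags''). With those transitions added, your play-for-play correspondence argument goes through.

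A smaller point: your proposal wavers between emitting a matched pair $\op{d}\cl{d}$ at each deletion site and emitting a single tag of the matching polarity. Only the latter (which you state in your second paragraph, and which the paper uses with per-letter strike-out tags) is legitimate: an opening transition whose output is $\op{d}\cl{d}$ violates the synchronisation condition of NWT, since that output contains no unmatched opening tag. Note also that polarity-preserving single tags already yield matched $d$-pairs in the overall output, because the well-formedness and synchronisation conditions on $R$ force deleting opening and closing transitions to come in matched pairs. The remaining ingredients of your proposal --- normal form via Lemma \ref{lemma:normalform}, taking $T'$ to be the preimage of $T$ under stripping markers (realised by a DNWA that absorbs $d$-tags without changing its linear state), keeping $d \notin \funcsymb$, and restricting to inputs in $\wf$ --- coincide with the paper's argument.
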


Using Lemma \ref{lemma:normalform}, it is comparatively easy (if tedious) to prove that non-deleting NWTs are closed under composition. This proof, like most proofs for properties of NWT in this section, follows proof ideas used in \cite{FiliotRRST10,RaskinS08} adapted to the specifics of NWT.

\begin{proposition}\label{prop:nwtcomposition}
	Let $T_1$, $T_2$ be non-deleting NWT. Then there exists a non-deleting NWT $T$ such that for all $w \in \wf$, it holds that $T(w) = (T_2 \circ T_1)(w) \mydef T_2(T_1(w))$. This NWT $T$ can be computed from $T_1$ and $T_2$ in polynomial time and is of size $\bigO(|T_1| \cdot |T_2|)$.
\end{proposition}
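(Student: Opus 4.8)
The plan is to build $T$ by a product construction that runs $T_1$ on the input $w$ while feeding $T_1$'s output on the fly into $T_2$, simulating both transducers in lockstep. By Lemma~\ref{lemma:normalform} I would first bring $T_1$ and $T_2$ into normal form, so that every transition emits at most one tag. In normal form an opening transition outputs either $\epsilon$ or a single opening tag, a closing transition outputs either $\epsilon$ or a single closing tag (by well-formedness a length-$\leq 1$ output that is well-formed against its partner must be a matching tag or empty), and every internal transition outputs $\epsilon$ (the only well-nested string of length $\leq 1$). This normalisation is exactly what makes the synchronisation between the single symbol produced by $T_1$ and the single symbol consumed by $T_2$ manageable.

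Next I would take the linear state set of $T$ to be $Q_1 \times Q_2$, with initial state the pair of initial states and accepting set $F_1 \times F_2$; its hierarchical states combine a hierarchical state of $T_1$ with one of $T_2$, together with a flag recording whether $T_1$ emitted a tag at the matching opening, so that the closing move knows whether $T_2$ should pop. The transitions arise by composing a single move of $T_1$ with the (at most one) move of $T_2$ that consumes the tag $T_1$ emits: a visible opening or closing move of $T$ on a tag of $w$ simulates the corresponding visible move of $T_1$, coupled with a visible move of $T_2$ when $T_1$ emits a tag and with $T_2$ idle otherwise; the $\epsilon$-moves of $T$ simulate either an $\epsilon$-move of $T_1$ (again coupled with the matching $T_2$-move if it emits a tag) or an $\epsilon$-move of $T_2$ with $T_1$ held fixed. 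The output of each composed transition is just the output of the participating $T_2$-move, which keeps $T$ non-deleting, since $T_2$ is non-deleting and every non-internal move of $T$ carries a non-internal move of $T_2$. Counting $|Q_1\times Q_2|$ states and $\bigO(|\delta_1|\cdot|\delta_2|)$ transitions yields the claimed size $\bigO(|T_1|\cdot|T_2|)$ and a polynomial-time construction. Correctness is then an induction showing that accepting runs of $T$ on $w$ are in bijection with pairs consisting of an accepting run of $T_1$ on $w$ producing some $u \in \wf$ and an accepting run of $T_2$ on $u$; reading off outputs and applying the definition of $T_2$ on a set gives $T(w) = T_2(T_1(w))$.

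The main obstacle — and the reason the structural conditions on NWT are imposed — will be verifying that the two transducers can share a single, properly nested stack. The visible pushes and pops of $T$ are dictated by $w$, so they faithfully track $T_1$'s stack; the difficulty is that $T_2$'s stack is driven partly by the tags of $u$ that $T_1$ emits and partly by $T_2$'s own $\epsilon$-opening and $\epsilon$-closing moves, and I must rule out that $T_2$ is ever forced to pop while an unmatched $T_1$-push still sits above it on the common stack. This is precisely where well-formedness and synchronisation enter: because the closing transition matching a given opening transition of $T_1$ emits exactly the closing tag matching the opening tag emitted at that opening transition (and emits $\epsilon$ exactly when the opening did), the nesting structure of $u=T_1(w)$ mirrors that of $w$. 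Consequently a well-nested factor of $u$ — in particular the factor $T_2$ reads between one of its $\epsilon$-pushes and the matching $\epsilon$-pop, whose well-nestedness follows from $\epsilon$-consistency — corresponds to a well-nested factor of $w$, over which $T$'s visible stack operations are balanced. Hence $T_2$'s $\epsilon$-brackets align with $w$'s bracketing and the shared stack stays consistent. Carrying this factor-correspondence uniformly through all the transition cases is the technical heart of the argument, and I expect the bookkeeping for the flag and the $\epsilon$-transition interleavings to be the most delicate part to get right.
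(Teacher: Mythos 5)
Your proposal is correct and follows essentially the same route as the paper's proof: normalise both transducers via Lemma~\ref{lemma:normalform}, then build a product transducer on $Q_1 \times Q_2$ that feeds $T_1$'s single-tag outputs on the fly into $T_2$, with paired hierarchical states carrying a placeholder for pushes in which only one transducer participates (your flag plays the role of the paper's $\Null$ component, which in the non-deleting setting is needed only for $T_2$'s autonomous $\epsilon$-pushes, since non-deletingness rules out $T_1$ pushing without emitting a tag), and correctness established by a correspondence between accepting runs of $T$ and pairs of accepting runs of $T_1$ and $T_2$ (which the paper phrases via interleaved $\epsilon$-extensions and 1-/2-producing positions). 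The construction, the size bound, and the correctness argument all coincide with the paper's.
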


Since we are solely interested in NWTs operating on well-nested strings, we restrict our attention to NWTs with well-nested domains. The following corollary to Proposition \ref{prop:nwtcomposition} justifies this restriction.

\begin{corollary}\label{cor:wellformeddomain}
	Let $T$ be a non-deleting NWT and $A$ a NWA over alphabet $\Sigma$. Then, there exists a non-deleting NWT $T'$ of size $\bigO(|T| \cdot |A|)$ such that $\Dom(T') = \Dom(T) \cap L(A)$ and $T'(w) = T(w)$ for each $w \in \Dom(T) \cap L(A)$.
\end{corollary}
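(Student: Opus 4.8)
The plan is to realise the NWA $A$ as an identity transducer on $L(A)$ and then invoke Proposition \ref{prop:nwtcomposition}. Concretely, I would first build from $A = (Q,P,\Sigma,\delta_A,q_0,F)$ a non-deleting NWT $T_A = (Q,P,\emptyset,\Sigma,\delta'_A,q_0,F)$ that copies its input verbatim to its output while simulating $A$: each opening transition $(q,\op{a},q',p) \in \delta_A$ becomes the transition $(q,\op{a},q',p,\op{a})$ of $T_A$, and each closing transition $(q,p,\cl{a},q') \in \delta_A$ becomes $(q,p,\cl{a},q',\cl{a})$. No $\epsilon$-transitions are introduced and $P_\epsilon = \emptyset$, so $T_A$ is in fact a relabelling transducer (the identity relabelling) and hence $\epsilon$-free and non-deleting, with $|T_A| = \bigO(|A|)$.

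Next I would check that $T_A$ is a legal NWT. The $\epsilon$-consistency condition holds trivially because $P_\epsilon = \emptyset$ and no $\epsilon$-transitions occur; synchronisation holds because each non-internal transition outputs a single unmatched tag of the correct (opening resp. closing) kind; and well-formedness holds because an opening transition outputting $\op{a}$ paired with a closing transition that pops the same $p$ and reads $\cl{a}$ outputs $\cl{a}$, giving $\op{a}\cl{a} \in \wf$. Since acceptance of a NWA requires an empty stack, we have $L(A) \subseteq \wf$, so the unique output of any accepting run of $T_A$ on $w \in L(A)$ is exactly $w$; thus $T_A(w) = \{w\}$ for $w \in L(A)$ and $T_A(w) = \emptyset$ otherwise.

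I would then set $T' = T \circ T_A$ by applying Proposition \ref{prop:nwtcomposition} with $T_1 = T_A$ and $T_2 = T$, both of which are non-deleting. This yields a non-deleting NWT with $T'(w) = T(T_A(w))$ and $|T'| = \bigO(|T| \cdot |T_A|) = \bigO(|T| \cdot |A|)$. The two claimed properties follow by a short case distinction: if $w \in L(A)$ then $T'(w) = T(\{w\}) = T(w)$, while if $w \notin L(A)$ then $T_A(w) = \emptyset$ and hence $T'(w) = \emptyset$. Consequently $\Dom(T') = \{w \in L(A) : T(w) \neq \emptyset\} = \Dom(T) \cap L(A)$, and $T'(w) = T(w)$ for every $w$ in this set, as required.

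There is essentially no hard step here; the only point demanding care is the verification that the copying transducer $T_A$ meets the three structural constraints of a NWT — in particular that the well-formedness condition is checked only against tag pairs sharing a label, which is precisely what the identity construction produces — together with the remark that $L(A) \subseteq \wf$, so restricting to well-nested inputs loses nothing.
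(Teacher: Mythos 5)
Your proof is correct and follows essentially the same route as the paper's: construct an identity transducer $T_A$ with $\Dom(T_A) = L(A)$ and $T_A(w) = \{w\}$, compose it with $T$ via Proposition \ref{prop:nwtcomposition}, and conclude by the same case distinction on membership in $L(A)$ and $\Dom(T)$. The only difference is that you spell out the construction of $T_A$ and verify the three structural NWT conditions explicitly, which the paper dismisses as ``easy to construct''.
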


In order to prove closure of regular nested word languages under transduction by non-deleting NWT, we observe another helpful property of these transducers.

\begin{lemma}\label{lemma:regularrange}
	Let $T$ be a non-deleting NWT with $\Dom(T) \subs \wf$. Then $\Rng(T)$ is a regular language of nested words.
\end{lemma}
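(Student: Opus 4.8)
The plan is to construct a NWA $A$ with $L(A) = \Rng(T)$, essentially by ``projecting'' $T$ onto its output component: a run of $A$ on an output word $u$ will simulate an accepting run of $T$, reading at each step the tag that $T$ would emit in place of the input tag that $T$ would consume.

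First I would apply Lemma~\ref{lemma:normalform} to assume $T$ is in normal form, so every transition outputs at most one tag. Since $T$ is non-deleting, its opening and closing transitions then output exactly one tag each, while internal $\epsilon$-transitions (whose output must be a well-nested string of length $\leq 1$) output $\epsilon$. Moreover, the synchronisation condition forces the single output tag of an opening transition to be an opening tag $\op{b}$ and that of a closing transition to be a closing tag $\cl{b}$. Thus, in normal form, $T$'s stack operations stand in exact one-to-one correspondence with the opening and closing tags of its output.

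Next I would define $A$ on $T$'s linear states $Q$, with initial state $q_0$ and accepting set $F$, and with hierarchical states recording $T$'s hierarchical state $p$. For each opening transition $(q',p,\op{b}) \in \delta(q,\op{a})$ of $T$ (with $\op{a} \in \op{\Sigma} \cup \{\op{\epsilon}\}$), $A$ gets an opening transition reading $\op{b}$ from $q$ to $q'$ and pushing (the encoding of) $p$; for each closing transition $(q',\cl{b}) \in \delta(q,p,\cl{a})$, $A$ gets a closing transition reading $\cl{b}$ from $q$ to $q'$ and popping $p$. The internal transitions $(q',\epsilon) \in \delta(q,\epsilon)$ produce no output and therefore become silent linear-state moves; since NWA have no internal transitions, I would remove them by the standard $\epsilon$-closure over $Q$ (these moves never touch the stack, so the closure is the usual NFA construction applied to the linear states, also propagating the accepting condition). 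The well-formedness condition is what makes the hierarchical component well defined: whenever $p$ is pushed by an opening transition emitting $\op{b}$ and popped by a closing transition emitting $\cl{c}$, well-formedness forces $\op{b}\cl{c} \in \wf$, i.e. $b = c$, so the push/pop discipline of $A$ on the encoded $p$ faithfully matches the associated opening and closing tags of $u$.

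Finally I would prove $L(A) = \Rng(T)$ by a correspondence between accepting runs. For the inclusion $\Rng(T) \subs L(A)$, an accepting run of $T$ on some $w \in \wf$ emitting $u$ directly induces an accepting run of $A$ on $u$. For the converse, an accepting run of $A$ on $u$ singles out a sequence of $T$-transitions; reading off their input components yields a candidate input $w$, and the $\epsilon$-consistency condition (which keeps $\epsilon$-pushes and $\epsilon$-pops on $P_\epsilon$ separate from the non-$\epsilon$ ones) guarantees that the genuine input tags are themselves balanced, so $w \in \wf$, whence $u \in T(w) \subs \Rng(T)$; here the hypothesis $\Dom(T) \subs \wf$ ensures we need not worry about ill-nested inputs. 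The main obstacle I anticipate is the bookkeeping around $\epsilon$-transitions: separating the genuine output-producing $\op{\epsilon}$/$\cl{\epsilon}$ moves (which become real stack-manipulating NWA moves) from the output-free internal moves (which must be $\epsilon$-eliminated), and verifying that $\epsilon$-consistency together with well-formedness makes both the input reconstruction and the hierarchical matching go through.
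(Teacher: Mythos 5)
Your overall strategy (projecting $T$ onto its output component and translating runs back and forth) is the same as the paper's, and your normal-form preprocessing and your handling of internal $\epsilon$-transitions via $\epsilon$-closure on the linear states are fine. The genuine gap is in the completeness direction ($L(A) \subs \Rng(T)$), exactly at the point where you appeal to $\epsilon$-consistency: your automaton's hierarchical states record only $T$'s hierarchical state $p$, so nothing forces a push and its matching pop of $A$ to come from $T$-transitions whose \emph{input} labels agree. $\epsilon$-consistency only separates $\epsilon$-tag transitions (those using states in $P_\epsilon$) from real-tag transitions; it says nothing about two real-tag transitions that share the same $p$ but read $\op{a}$ and $\cl{a'}$ with $a \neq a'$. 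Well-formedness does not help either, since it constrains only the \emph{output} components. Hence the ``candidate input $w$'' you read off an accepting run of $A$ may pair $\op{a}$ with $\cl{a'}$ at associated positions; such a $w$ is not well-nested, the transition sequence is not a run of $T$ on an $\epsilon$-extension of any element of $\wf$, and the accepted string need not lie in $\Rng(T)$.

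Concretely, let $T$ have linear states $q_0,q_1,q_2$ (initial $q_0$, accepting $q_2$), one hierarchical state $p$, and exactly the two transitions $(q_0,\op{a},q_1,p,\op{b})$ and $(q_1,p,\cl{a'},q_2,\cl{b})$ with $a \neq a'$. This $T$ is $\epsilon$-free, non-deleting, and satisfies well-formedness (the two outputs concatenate to $\op{b}\cl{b} \in \wf$) and synchronisation; but no well-nested input admits a run, so $\Dom(T)=\emptyset$ and $\Rng(T)=\emptyset$. Your $A$ pushes and pops the same symbol $p$ and therefore accepts $\op{b}\cl{b}$, so $L(A) \neq \Rng(T)$. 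The repair is what the paper's proof does: take hierarchical states $P \times (\Sigma \cup \{\epsilon\})$, push $(p,a)$ on the $A$-transition derived from an opening transition of $T$ reading $\op{a}$ (resp.\ $(p,\epsilon)$ for $\op{\epsilon}$), and pop $(p,a)$ only on $A$-transitions derived from closing transitions reading $\cl{a}$ (resp.\ $\cl{\epsilon}$). Matched pushes and pops then agree on the input label as well as on $p$; this both makes your input reconstruction produce a genuine well-nested word (so the rest of your argument goes through) and subsumes the $\epsilon$/non-$\epsilon$ separation you wanted from $\epsilon$-consistency.
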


\begin{corollary}\label{cor:regulartransduction}
	Regular nested word languages are closed under transduction by non-deleting NWT, i.e. if $L \subs \wf$ is regular and $T$ an NWT, then $T(L)$ is regular. 
\end{corollary}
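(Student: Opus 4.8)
The plan is to reduce Corollary \ref{cor:regulartransduction} to the two results that immediately precede it, namely Corollary \ref{cor:wellformeddomain} and Lemma \ref{lemma:regularrange}, so that almost no new work is needed. First I would observe that the statement as phrased is slightly more general than what Lemma \ref{lemma:regularrange} directly gives, since Lemma \ref{lemma:regularrange} computes $\Rng(T)$ for a transducer whose entire domain is well-nested, whereas here we want $T(L)$ for an arbitrary regular $L \subs \wf$ and an NWT $T$ that need not have well-nested domain. The bridge between the two is precisely Corollary \ref{cor:wellformeddomain}, which lets me intersect the domain of $T$ with a regular language given by an NWA.

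Concretely, I would proceed as follows. Given a regular $L \subs \wf$, fix an NWA $A$ with $L(A) = L$. By Corollary \ref{cor:wellformeddomain}, applied to $T$ and $A$, there is a non-deleting NWT $T'$ with $\Dom(T') = \Dom(T) \cap L(A) = \Dom(T) \cap L$ and $T'(w) = T(w)$ for every $w$ in this intersected domain. The key observation is then that $T(L) = T'(\wf)$: indeed, for $w \in L$ either $T(w) = \emptyset$ (in which case $w \notin \Dom(T')$ and $w$ contributes nothing), or $w \in \Dom(T) \cap L = \Dom(T')$ and $T'(w) = T(w)$, so the unions $\bigcup_{w \in L} T(w)$ and $\bigcup_{w \in \Dom(T')} T'(w) = \Rng(T')$ coincide. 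Since $\Dom(T') = \Dom(T) \cap L \subs \wf$ is well-nested, Lemma \ref{lemma:regularrange} applies to $T'$ and yields that $\Rng(T') = T(L)$ is regular, which is exactly the claim.

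One point I would want to state carefully is that the NWT $T$ in the hypothesis is implicitly assumed to be non-deleting, matching the standing restriction to non-deleting transducers announced just before Lemma \ref{lemma:regularrange}; both auxiliary results require non-deletion, and this is what makes $\Rng(T')$ regular. Apart from verifying that $T'$ inherits non-deletion from $T$ (which is part of the conclusion of Corollary \ref{cor:wellformeddomain}), the argument is a one-line chaining of the two cited results, so I do not anticipate any genuine obstacle; the only care needed is the bookkeeping identity $T(L) = \Rng(T')$ and making sure the domain-restriction step legitimately puts us in the hypothesis of Lemma \ref{lemma:regularrange}.
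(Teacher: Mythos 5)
Your proposal is correct and follows exactly the paper's own proof: both apply Corollary \ref{cor:wellformeddomain} to restrict $T$ to the domain $\Dom(T) \cap L$, observe that the range of the resulting transducer equals $T(L)$, and conclude regularity via Lemma \ref{lemma:regularrange}. Your extra remarks (the bookkeeping identity $T(L) = \Rng(T')$ and the implicit non-deleting assumption on $T$) only make explicit what the paper leaves tacit.
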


We now turn to the complexity of standard decision problems for NWT. The upper bounds use relatively simple constructions based on Proposition \ref{prop:nwtcomposition}, while lower bounds follow from comparable results for NWA.

\begin{theorem}\label{thm:nwtmembership}
	The membership problem for non-deleting NWT (Given a non-deleting NWT $T$ and strings $w,u \in \wf$, is $u \in T(w)$?) is in $\PTIME$.
\end{theorem}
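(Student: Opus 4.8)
The plan is to decide membership $u \in T(w)$ by reducing it to a reachability question on a product construction that synchronises a run of $T$ on the fixed input $w$ with a parse of the fixed output $u$. First I would invoke Lemma~\ref{lemma:normalform} to put $T$ into normal form in polynomial time, so that every transition outputs at most one tag. This is the key simplification: with outputs of length at most one, the output positions consumed by a run correspond one-to-one (with possible gaps for $\epsilon$-outputs) to individual transitions, so I never have to worry about a single transition emitting a long output string that must be matched piecewise against $u$.

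Next I would set up the search space. A configuration of the product is a tuple recording the current linear state $q$ of $T$, the position $i$ in the input $w$ already read, and the position $j$ in the output $u$ already produced. Because both $w$ and $u$ are fixed and finite, there are at most $|Q| \cdot (|w|+1) \cdot (|u|+1)$ such triples, which is polynomial. The hierarchical (stack) behaviour must also be tracked, but here I would exploit the fact that both $w$ and $u$ are concrete nested words: the matching structure of tags in $w$ and in $u$ is fixed in advance, so whenever a closing input tag $\cl{a}$ at position $i$ of $w$ is read, the hierarchical state to be popped is exactly the one pushed when the associated opening tag of $w$ was read, and similarly the output closing tag must match the corresponding output opening tag. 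Thus the stack contents are determined by the positions $i$ and $j$ together with the fixed nesting of $w$ and $u$, and I can verify stack consistency locally rather than carrying an unbounded stack in the search.

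The main step is then to build a directed graph on these configurations, adding an edge for each applicable transition of $T$: an opening transition reading $w_{i+1}$ (or an opening $\epsilon$-transition) advances $i$ (or leaves it fixed) and advances $j$ by the length of its output (zero or one), provided the emitted tag, if any, agrees with $u_{j+1}$ and pushes a hierarchical state consistent with the fixed matching structure; closing and internal transitions are handled analogously, with closing transitions additionally required to pop the hierarchical state dictated by the associated opening position. I would then test whether the accepting configuration $(f, |w|, |u|)$ with $f \in F$ is reachable from the initial configuration $(q_0, 0, 0)$, which is a graph-reachability problem solvable in time polynomial in the size of the product.

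The main obstacle I anticipate is the correct treatment of $\epsilon$-transitions together with the stack discipline: opening and closing $\epsilon$-transitions push and pop hierarchical $\epsilon$-states (those in $P_\epsilon$) without consuming input, so the stack can grow in ways not pinned down solely by the input word $w$. I would handle this by observing that, because $T$ is in normal form and each $\epsilon$-transition outputs at most one tag, any maximal block of consecutive $\epsilon$-transitions emits output tags that must match a contiguous block of $u$; the well-formedness and synchronisation conditions on $\delta$ then force the $\epsilon$-pushes and $\epsilon$-pops to be themselves well-nested and to align with the nesting structure of that block of $u$. Consequently the $\epsilon$-stack portion is again determined by the output position $j$ and the fixed nesting of $u$, so it can be folded into the same local consistency check. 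Verifying this alignment rigorously is the technical heart of the argument; once it is established, the reachability bound gives membership in $\PTIME$.
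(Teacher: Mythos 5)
Your reduction to plain graph reachability over triples $(q,i,j)$ has a genuine gap: the claim that ``the stack contents are determined by the positions $i$ and $j$ together with the fixed nesting of $w$ and $u$'' is false. What the fixed nesting of $w$ and $u$ determines is at best the \emph{height} of the stack (indeed, for a non-deleting NWT in normal form every push emits one opening output tag and every pop one closing output tag, so the height equals the depth of position $j$ in $u$), but not its \emph{contents}: the hierarchical state pushed at an opening transition is a nondeterministic choice among possibly several transitions $(q',p,\op{b}) \in \delta(q,\op{a})$, and this choice is not a function of the positions. Plain reachability over $(q,i,j)$ forgets which state was pushed, so it conflates runs with incompatible stacks and can report false positives. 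Concretely, suppose from $q_0$ on $\op{a}$ the transducer can push $p_1$ going to $q_1'$ or push $p_2$ going to $q_1$, both outputting $\op{x}$, and the only closing transition from $q_1$ pops $p_1$ (not $p_2$). Then the product graph contains the edges $(q_0,0,0) \to (q_1,1,1)$ (via the $p_2$-push) and $(q_1,1,1) \to (q_f,2,2)$ (via the $p_1$-pop), so the accepting triple is reachable, yet no actual run of $T$ exists. Your ``local consistency check'' would have to remember, at the pop, which state was pushed at the matching opening -- but that is exactly the information a memoryless reachability search discards.

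The repair is to replace plain reachability by a summary (CFL-reachability) computation over the nesting structure of $u$: for every pair of matching positions $(j,\bar\jmath)$ in $u$, compute the set of tuples $(q,i,q',i')$ such that $T$ can run from state $q$ at input position $i$ to state $q'$ at input position $i'$, producing exactly $u[j..\bar\jmath]$ with balanced stack; these summaries are computed bottom-up over $u$ and compose across siblings, giving a polynomial bound. This is sound precisely because it pairs each push with its matching pop, carrying the pushed hierarchical state through the summary. Note that the paper sidesteps this bookkeeping entirely by a compositional argument: it intersects $\Dom(T)$ with $\{w\}$ (Corollary~\ref{cor:wellformeddomain}), extracts a polynomial-size $\epsilon$-NWA for the range $T(w)$ (Lemma~\ref{lemma:regularrange}), and then checks $u \in T(w)$ as an $\epsilon$-NWA membership test, which is in $\PTIME$ by Theorem~\ref{thm:epsnwacomplexity}(a). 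Either route works, but as written your argument is unsound at its central step.
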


\begin{theorem}\label{thm:nwtemptiness}
	The nonemptiness problem for non-deleting NWT (Given a non-deleting NWT $T$, is there a string $w \in \wf$ with $T(w)\neq \emptyset$?) is $\PTIME$-complete with regard to logspace reductions.
\end{theorem}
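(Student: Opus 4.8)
The plan is to observe that nonemptiness of a non-deleting NWT $T = (Q,P,P_\epsilon,\Sigma,\delta,q_0,F)$ is really a statement about $\Dom(T)$: by definition $\Dom(T) \neq \emptyset$ iff $T$ has an accepting run on \emph{some} input, and whether such a run exists does not depend on the output components of the transitions at all. I would therefore discard the outputs and view $T$ as a nested word automaton augmented with $\epsilon$-transitions, so that the task reduces to deciding whether this automaton accepts some well-nested word. The non-deleting restriction plays no role for this particular problem and can simply be ignored.

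For the upper bound I would compute, by a standard CYK-style least-fixpoint saturation, a \emph{summary} relation $\reach$ on pairs of linear states, where $(q,q') \in \reach$ is intended to hold iff $T$ admits a run from $q$ to $q'$ that reads some well-nested word (possibly interspersed with $\epsilon$-moves) and has net-zero effect on the stack. The closure rules are: $(q,q)\in\reach$; composition, i.e.\ $(q,q'),(q',q'')\in\reach \Rightarrow (q,q'')\in\reach$; propagation along internal transitions $(q,\epsilon,q')\in\delta$; and two matching rules, one pairing a real opening transition $(q',p,u)\in\delta(q,\op{a})$ with a closing transition $(q''',v)\in\delta(q'',p,\cl{a})$ carrying the \emph{same} letter $a$ and hierarchical state $p$, and one pairing an opening $\epsilon$-transition with a closing $\epsilon$-transition via a state $p\in P_\epsilon$, in each case requiring $(q',q'')\in\reach$ to license $(q,q''')\in\reach$. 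Since there are only $|Q|^2$ pairs and each rule is checkable in polynomial time, the fixpoint is reached in polynomial time, and $\Dom(T)\neq\emptyset$ iff $(q_0,q_f)\in\reach$ for some $q_f\in F$. Equivalently, one can phrase this construction as building in logspace a context-free grammar $G_T$ with nonterminals $[q,q']$ whose productions mirror the rules above, so that $\Dom(T)=L(G_T)$, and then invoke the fact that context-free grammar emptiness is in $\PTIME$.

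For the matching $\PTIME$-hardness I would reduce from nonemptiness of nested word automata, which is itself $\PTIME$-hard; this follows from the $\PTIME$-completeness of context-free grammar emptiness, since the set of (encodings of) derivation trees of a context-free grammar is a regular nested word language recognisable by a logspace-constructible NWA that is nonempty exactly when the grammar generates some word. Given such a NWA $A$, I would turn it into an $\epsilon$-free non-deleting NWT $T_A$ over the same alphabet by equipping every transition with the identity relabelling output (each opening transition outputs its own opening tag, each closing transition its own closing tag); this keeps $T_A$ non-deleting and well-formed and guarantees $\Dom(T_A) = L(A)$. Hence $L(A)\neq\emptyset$ iff $\Dom(T_A)\neq\emptyset$, and the construction is clearly computable in logspace.

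The main obstacle I anticipate is the careful treatment of $\epsilon$-transitions in the summary computation: because $\epsilon$-opens and $\epsilon$-closes push and pop hierarchical states interleaved with the real tags, I must argue that the matching rules count exactly those accepting runs whose read (non-$\epsilon$) tags form a \emph{well-nested} word. This is where $\epsilon$-consistency (ensuring $\epsilon$-states are pushed and popped only by $\epsilon$-transitions) and the requirement that matched real transitions carry equal letters are essential; verifying that the fixpoint is sound and complete with respect to genuine well-nested inputs is the one step that needs real care, whereas the polynomial-time bound and the hardness reduction are routine.
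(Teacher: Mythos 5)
Your proof is correct, but it follows a genuinely different route from the paper's on both bounds. For the upper bound the paper never analyses the input side directly: it composes $T$ with an identity transducer for $\wf$ to restrict the domain (Corollary \ref{cor:wellformeddomain}, which rests on closure under composition, Proposition \ref{prop:nwtcomposition}), notes that $\Dom(T')\neq\emptyset$ iff $\Rng(T')\neq\emptyset$, extracts an $\epsilon$-NWA for the range (Lemma \ref{lemma:regularrange}), and tests that automaton for emptiness; your direct summary saturation (equivalently, reduction to context-free grammar emptiness) is more elementary and self-contained, and it has the side benefit of not using non-deletingness anywhere, whereas the paper's chain of lemmas is stated only for non-deleting NWT. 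For the lower bound the paper reduces from emptiness of DNWA (Theorem \ref{thm:dnwaptime}, obtained there from deterministic top-down tree automata rather than from CFG derivation trees as you do), and its gadget is different: each transition of the DNWA $A$ becomes an $\epsilon$-input transition of $T$ that outputs the corresponding tag, so that $T(\epsilon)=L(A)$ and hence $\Dom(T)\neq\emptyset$ iff $L(A)\neq\emptyset$; because all of these transitions read the same letter $\epsilon$, the paper has to tag hierarchical states with alphabet symbols (using $P\times\Sigma$) to preserve well-formedness. Your identity-relabelling transducer sidesteps that issue entirely, since matched opening and closing transitions read and write the same letter, and the well-formedness condition only constrains same-letter pairs sharing a hierarchical state; it moreover establishes hardness already for the smaller class of $\epsilon$-free relabelling NWT, which is slightly stronger than what the paper's gadget (an NWT consisting solely of $\epsilon$-transitions) gives. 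Both arguments are sound; yours is more elementary and marginally more general, while the paper's is shorter given the machinery it has already developed for the surrounding results.
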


\begin{theorem}\label{thm:nwttypechecking}
	The type checking problem for non-deleting NWT (Given a non-deleting NWT $T$ and NWA $A_1, A_2$, is $T(L(A_1)) \subs L(A_2)$?) is 
	\begin{enumerate}[(a)]
		\item $\EXPTIME$-complete in general, and
		\item $\PTIME$-complete (w.r.t. logspace reductions) if $A_2$ is a DNWA.
	\end{enumerate}
\end{theorem}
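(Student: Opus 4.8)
The plan is to reduce type checking to an emptiness test for nested word automata, isolating the single expensive step — complementation of $A_2$ — so that it disappears when $A_2$ is deterministic.

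\textbf{Upper bounds.} First I would restrict the domain of $T$ to $L(A_1)$: by Corollary~\ref{cor:wellformeddomain} there is a non-deleting NWT $T'$ of size $\bigO(|T|\cdot|A_1|)$ with $\Dom(T') = \Dom(T)\cap L(A_1)$ and $T'(w)=T(w)$ on that domain. Since an output can only arise from a string in the domain, $T(L(A_1)) = \Rng(T')$, so the question becomes whether $\Rng(T') \subseteq L(A_2)$, equivalently $\Rng(T') \cap \overline{L(A_2)} = \emptyset$. By Lemma~\ref{lemma:regularrange}, $\Rng(T')$ is regular; putting $T'$ into normal form (Lemma~\ref{lemma:normalform}) and projecting each transition onto its single output tag, followed by $\epsilon$-elimination, yields a NWA $B$ for $\Rng(T')$ of size polynomial in $|T'|$. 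It then remains to complement $A_2$, form the product of $B$ with $\overline{A_2}$, and test this product for emptiness, which can be done in time polynomial in the sizes of the two automata. The only non-polynomial step is complementation: for a general NWA $A_2$ this requires determinisation and costs a single exponential, giving the $\EXPTIME$ upper bound of~(a); if $A_2$ is already a DNWA, we complete it and swap accepting states in polynomial time, so the whole procedure runs in polynomial time, giving the $\PTIME$ upper bound of~(b).

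\textbf{Lower bounds.} For~(a) I would reduce from the universality problem for NWA, which is $\EXPTIME$-complete. Given an NWA $A_2$, take $A_1$ to be a fixed NWA accepting all of $\wf$ and let $T$ be the identity transducer (which is $\epsilon$-free and non-deleting). Then $T(L(A_1)) = \wf$, so $T(L(A_1)) \subseteq L(A_2)$ holds iff $L(A_2) = \wf$, i.e. iff $A_2$ is universal; this is clearly a logspace reduction. For~(b) I would reduce from the emptiness problem for non-deleting NWT, which is $\PTIME$-complete: it is the complement of nonemptiness, which is $\PTIME$-complete by Theorem~\ref{thm:nwtemptiness}, and $\PTIME$-completeness under logspace reductions is preserved under complementation. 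Given a non-deleting NWT $T$, take $A_1$ to be an all-accepting DNWA for $\wf$ and $A_2$ a DNWA recognising $\emptyset$; then $T(L(A_1)) \subseteq L(A_2)$ holds iff $T(\wf) = \emptyset$, i.e. iff $T$ is empty. This already establishes $\PTIME$-hardness for deterministic $A_2$.

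\textbf{Main obstacle.} The conceptual content lies entirely in the upper bound. The two delicate points are verifying that the range automaton $B$ really can be built in polynomial size in the presence of $\epsilon$-transitions — which the normal form of Lemma~\ref{lemma:normalform} together with the construction behind Lemma~\ref{lemma:regularrange} is meant to guarantee — and pinpointing the complementation of $A_2$ as the \emph{sole} source of the exponential blow-up, so that determinism of $A_2$ removes it cleanly and leaves a fully polynomial pipeline. The matching lower bounds are then routine, resting on the known $\EXPTIME$-completeness of NWA universality and on Theorem~\ref{thm:nwtemptiness}.
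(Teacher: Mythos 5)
Your proof is correct and takes essentially the same approach as the paper: restrict the domain via Corollary \ref{cor:wellformeddomain}, obtain a polynomial-size automaton for the range $T(L(A_1))$ via Lemma \ref{lemma:regularrange}, and reduce inclusion in $L(A_2)$ to an emptiness test whose sole exponential step is complementing $A_2$ (the paper packages this inclusion test as Theorem \ref{thm:epsnwacomplexity}(b)/(c), working directly with the $\epsilon$-NWA instead of eliminating $\epsilon$-transitions). Your lower bounds differ only cosmetically from the paper's — NWA universality instead of NWA inclusion for (a), and emptiness of non-deleting NWT (whose hardness itself stems from DNWA emptiness) instead of a direct reduction from DNWA emptiness for (b).
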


		\section{Games with general NWT replacement}\label{sec:unbounded}

Having laid the foundation with basic results on NWT, we now examine context-free games with NWT-based replacement. The main characteristic distinguishing general NWT from $\epsilon$-free NWT is the fact that, for any input string $w$ and NWT $T$, transducts in $T(w)$ may be arbitrarily large in the size of $w$. This behaviour is necessary if we want to simulate games with regular replacement languages (in the sense of \cite{SchusterS15}) by transducer-based games. As it turns out, however, NWT-based replacement is much more complex than that: the winning problem in games with replay becomes undecidable (as opposed to $\iiEXPTIME$ with regular replacement languages), which may be proven by a rather straightforward reduction from the complement of the halting problem for Turing machines.

\begin{theorem}\label{theo:unboundednotre}
	For the class of games with NWT and \Call depth $k \geq 2$, $\Safelr$ is not recursively enumerable.
\end{theorem}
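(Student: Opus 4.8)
The theorem claims that for games with NWT replacement and \Call depth $k \geq 2$, the winning problem $\Safelr$ is not recursively enumerable. Since the excerpt tells us this follows from a reduction from the \emph{complement} of the halting problem, the natural strategy is to show that from a Turing machine $M$ one can compute (in a decidable way) a context-free game $G_M$ and a starting word $w_M$ such that $M$ does \emph{not} halt (on the empty input, say) if and only if $w_M \in \safelr(G_M)$. Since non-halting is $\Pi_1$-complete (not recursively enumerable), a computable many-one reduction of this form transfers non-r.e.-ness to $\Safelr$. The plan is therefore to arrange the game so that \pone has a winning strategy exactly when the computation of $M$ runs forever.

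\textbf{Encoding computations as replay.}
The key idea is to let the nesting of \Call moves simulate the successive configurations of $M$. I would encode a configuration of $M$ as a rooted nested word over an alphabet $\Sigma$ containing tape symbols, state markers and a designated function symbol, say $c \in \funcsymb$, marking where the next configuration is to be produced. The replacement transducer $R$ would be designed so that, given (the linearisation of) one configuration as its input parameter, it can output the linearisation of \emph{any} legal successor configuration of $M$ together with a fresh occurrence of the function symbol $c$ enabling the next step. Crucially, this is where general (non-functional, $\epsilon$-using) NWT are essential: unbounded insertion of tags lets the transducer append the (potentially longer) successor configuration, and non-functionality lets the single computation step of a nondeterministic rule be realised, while $R$ can \emph{refuse} a transduction (produce no output) whenever the proposed predecessor was not a valid halting-free configuration. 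Because a \Call on $c$ triggers a replacement and re-exposes a new $c$, iterated calls on returned function nodes — i.e. \emph{replay} — walk through configurations $C_0, C_1, C_2, \dots$; this is exactly why \Call depth $k \geq 2$ (genuine replay) is needed and why the replay-free case behaves differently.

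\textbf{Setting up the winning condition.}
Next I would fix the target language $T$ (a regular, indeed DNWA-recognisable, language) so that \pone wins precisely by perpetuating the computation rather than by reaching a halting configuration. The cleanest way is to make \ptwo responsible for honest simulation: \ptwo, choosing replacements, may try to cheat, and $T$ should contain every word that witnesses such a cheat (an illegal transition, a corrupted tape cell), so that a dishonest replacement immediately yields a word in $T$ and \pone wins that play. Conversely, if \ptwo simulates honestly, the only way for \pone to win is to keep calling $c$ forever, which is possible exactly when $M$'s computation never reaches a halting state — if a halting configuration is ever reached, the transducer $R$ has no valid successor, so \ptwo is stuck and, by the rules stated in the excerpt (``if no such word $v$ exists, \ptwo immediately wins''), \pone \emph{loses}. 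Thus $w_M \in \safelr(G_M)$ iff $M$ does not halt. I would verify that $G_M$, $w_M$, the transducer $R$ and the DNWA for $T$ are all computable from $M$, giving the required reduction.

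\textbf{Main obstacle.}
The delicate part is enforcing \emph{faithful} simulation within a single context-free rewriting step using only the structural discipline imposed on NWT (the $\epsilon$-consistency, well-formedness and synchronisation conditions of the definition, together with the well-nested-domain requirement). A single transducer pass sees one configuration and must both validate local consistency of the claimed successor and correctly transcribe the unchanged tape, yet the standard trick of checking successive configurations against each other is awkward when each call only transduces \emph{one} string. The cleanest resolution is to shift all global consistency checking onto $T$: let $R$ be permissive (outputting any syntactically well-formed candidate successor plus a fresh $c$), and let $T$ be the regular set of all ``error'' words flagging a transition that violates $M$'s rules, so that cheating by \ptwo is punished by landing in $T$. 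Getting the alphabet bookkeeping and the $\Pi_1$ direction of the equivalence exactly right — ensuring that honest play forces the game to mirror the unique or nondeterministic run of $M$, and that only non-termination yields a \pone win — is the technical heart of the argument; the rest is routine verification that all objects are effectively constructible and that $T$ is regular.
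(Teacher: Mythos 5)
There is a genuine gap, and it is fatal to the construction rather than a matter of detail. Your winning condition has the wrong quantifier polarity and collides with the game semantics: in these games \pone wins only if the play is \emph{finite} and its final word lies in $T$ (infinite plays are won by \ptwo, and only finite strategies for \pone are considered). In your game, under honest simulation by \ptwo, \pone can therefore never win: if she keeps calling the fresh function symbol $c$ forever, the play is infinite and lost; if she stops after $i$ steps, the final word is an honest configuration encoding, which by your own design is not in $T$; and if she calls on a halting configuration, $R$ rejects and \ptwo wins immediately. So your reduction maps every machine $M$, halting or not, to a negative instance. The underlying problem is that ``$M$ does not halt'' is a universal statement over the number of steps, and this universal choice must be delegated to \ptwo rather than expressed through \pone playing forever; letting \pone decide when to stop gives an existential quantifier over steps, which is the wrong polarity. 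Moreover, your scheme produces each new $c$ inside the replacement string of the previous call, so simulating $n$ steps requires \Call depth $n$; the theorem concerns a \emph{fixed} depth bound $k \geq 2$, so a construction needing unbounded replay depth cannot prove it.

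The paper resolves both issues with one device: after \pones single depth-1 \Call on $\cl{s}$, \ptwo must commit up front to a number $r$ (his claim that $M$ halts within exactly $r$ steps) by returning $\op{t}^r\nw{v}_0\cl{t}^r$. \pone then calls the $r$ closing $\cl{t}$ tags in left-to-right order; each such call is at depth 2, since all the $t$-tags belong to the one replacement string answering the $\cl{s}$-call and the configuration strings $\nw{v}_i$ contain no function symbols, and each call deterministically advances the simulation by one step. The target language accepts exactly the encodings of non-halting configurations with no $s$- or $t$-labels remaining, and skipping a $\cl{t}$ is excluded by having $R$ reject inputs with nested $t$-tags. Thus every play is finite, the \Call depth is exactly 2, and \pone wins iff for every $r$ chosen by \ptwo the configuration after $r$ steps is non-halting, i.e. iff $M$ does not halt.
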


Even the replay-free winning problem for \pone is quite hard when using NWT for replacement -- we show that this problem is complete for doubly exponential time. The lower bound uses a rather intricate reduction from a two-player tiling problem, while the upper bound is proven by reduction to the purely NWT-based problem of \emph{alternating iterated transduction}, which can be proven to be in $\iiEXPTIME$.

\begin{theorem}\label{thm:unboundediiexptime}
	For the class of replay-free games with NWT, $\Safelr$ is $\iiEXPTIME$-complete.
\end{theorem}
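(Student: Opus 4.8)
The plan is to establish matching bounds separately: membership in \iiEXPTIME\ by reducing the replay-free winning problem to a purely transducer-theoretic problem of alternating iterated transduction \AIT, and \iiEXPTIME-hardness by reducing from the two-player tiling problem \corridor.

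For the upper bound, I would first, via Lemma \ref{lemma:nondeleting}, assume the replacement transducer $R$ is non-deleting, and then recast the replay-free game as an instance of \AIT. The key structural observation is that, with \Call depth $1$, \pone performs a single left-to-right pass in which she may call any of the original function nodes (possibly several nested ones, inner before outer), but must simply read through every string returned by \ptwo. Hence a play amounts to \pone choosing a set of call positions and \ptwo answering each call by an adversarially chosen transduct of the (already partially rewritten) subtree ending there, the winning condition being membership of the resulting word in the regular target language $T$. I would define \AIT\ to capture exactly this alternating process and argue that the translation is polynomial.

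To place \AIT\ in \iiEXPTIME, I would compute, bottom-up over the forest structure of the input word, a game summary for each subtree describing the effects on the DNWA $A_T$ representing $T$ that \pone can enforce against every transduct choice of \ptwo. Since $A_T$ is deterministic and a rooted (rewritten) subtree induces a balanced block, its effect is a function $Q_{A_T} \to Q_{A_T}$; there are $2^{\bigO(|Q_{A_T}|\log|Q_{A_T}|)}$ such functions, and the objects \pone controls are \emph{sets} of such functions, of doubly exponential size. The delicate case is a \Call: here I would handle \ptwos choice not by enumerating the (possibly huge) transducts but by composing $R$ with $A_T$. By Proposition \ref{prop:nwtcomposition}, Lemma \ref{lemma:regularrange} and Corollary \ref{cor:regulartransduction}, transducing an already-transduced subtree is again a single non-deleting NWT transduction with a regular range, so the set of $A_T$-effects realisable by the transducts of a subtree can be read off a fixed product automaton of size exponential in $|A_T|\cdot|R|$ and does not depend on the unbounded length of individual transducts. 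Propagating these effect-sets through the forest touches each node once while manipulating doubly-exponential objects, giving a \iiEXPTIME\ procedure; equivalently, the alternation between \pones and \ptwos choices can be simulated by an alternating exponential-space machine, and $\AEXPSPACE = \iiEXPTIME$.

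For the lower bound I would reduce from \corridor. Given a two-player corridor-tiling instance, I would build a replay-free NWT game whose initial word encodes an empty corridor and in which \pones \Call moves and \ptwos transduct choices simulate the two tiling players laying successive tiles, the transducer $R$ enforcing the local (horizontal and vertical) compatibility constraints and the DNWA for $T$ checking the global acceptance condition. The main obstacle, and the crux of the whole proof, is that the corridor is doubly exponentially wide while $R$ and $A_T$ may only be of polynomial size: I would exploit precisely the defining feature of general NWT, namely the ability of $\epsilon$-transitions to emit output of unbounded length, to generate and verify doubly-exponential row positions through nested binary counters that are grown and checked across successive transduction steps. Making these counters and the adversarial turn structure interact correctly, so that a successful tiling corresponds exactly to a winning strategy for \pone, is where the reduction becomes intricate; the remaining bookkeeping (well-formedness of rows, boundary rows) is routine and can be relegated to $T$.
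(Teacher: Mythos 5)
Your upper bound starts the same way as the paper (reduce the replay-free game to an alternating iterated transduction problem, cf.\ the reduction to $\algprobname{AIT(NWT)}$ used for Theorem \ref{thm:unboundediiexptime}), but the algorithm you propose for that problem has a genuine gap. Summarising a (partially rewritten) subtree by the set of effect functions $Q_{A_T}\to Q_{A_T}$ it can induce on the target DNWA discards exactly the information that matters here: when \pone later plays \Call on an \emph{enclosing} function node, the transducer $R$ runs on the actual string, and two strings with the same $A_T$-effect can have transducts with different $A_T$-effects. (With regular replacement languages as in \cite{SchusterS15}, where the replacement does not depend on the called string, $A_T$-effects do suffice; the whole difficulty of the transducer setting is that they no longer do.) To repair this, the summary of a subtree with $j$ callable ancestors must describe its behaviour under up to $j$ \emph{iterated} transductions followed by $A_T$; your ``fixed product automaton of size exponential in $|A_T|\cdot|R|$'' accounts for only one level, whereas the $j$-fold composition (Proposition \ref{prop:nwtcomposition}) has size $|A_T|\cdot|R|^{j}$ with $j$ up to $|w|/2$, and the enforceable \emph{sets} of effects over it grow by at least one further exponential -- the naive type hierarchy gains an exponential per nesting level, so the bottom-up pass does not stay within doubly exponential objects. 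Your fallback remark that the alternation can be simulated in $\AEXPSPACE$ has the same hole: \ptwos transducts are of unbounded length, so an alternating machine can only guess them if one first proves a (doubly exponential) bound on the length of transducts worth considering, which is essentially the crux. The paper avoids computing types of strings altogether: it builds one doubly-exponential product transducer that simulates all $2^{k}$ sequences of \Read/\Call choices in parallel on the same input (with blank-padding to synchronise $\epsilon$-moves), and defers everything to a single \PTIME type check against a doubly-exponential DNWA (Theorem \ref{thm:nwttypechecking}(b), Proposition \ref{prop:aitupper}).

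The lower bound has two problems. First, the parameters are wrong: the two-player corridor tiling problem that is $\AEXPSPACE=\iiEXPTIME$-complete \cite{Chlebus86} has corridors of width $2^{n}-1$ (singly exponential); the two-player problem on \emph{doubly} exponentially wide corridors is complete for alternating doubly exponential space, i.e.\ $\iiiEXPTIME$, so a polynomial-time reduction from it to a problem in $\iiEXPTIME$ cannot exist -- the doubly exponential counters you plan to build are machinery for the wrong problem. Second, and more fundamentally, a replay-free game cannot simulate the tiling players ``laying successive tiles'': in a replay-free play every \Call is on a closing tag of the \emph{original} input, so there are at most $|w|/2$ calls and hence only polynomially many alternations, while the tiling game has exponentially many moves. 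Any reduction must therefore compress the alternation, and this is the key idea your sketch is missing. In the paper's proof (Proposition \ref{prop:unboundediiexptimelower}), \ptwos single answer to \pones \Call on $\cl{s}$ is an entire, unboundedly large \emph{strategy tree} for tiling Player 1 (this is where the unbounded output length of general NWT is used), and \pone then refutes it with only $\bigO(n)$ further calls on nested $d$- and $e$-nodes, whose transductions successively delete the odd- or even-numbered columns and thereby isolate one of the $2^{n}$ columns for the target DNWA to check for a vertical, final-tile or encoding error. Without some such commit-and-challenge device the adversarial turn structure you describe cannot be realised in a replay-free game.
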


The lower bound proofs for both of these results
require replacement transducers to output nested words of arbitrary depth. Considering our practical motivation, it is rarely required that function calls in Active XML documents return arbitrarily deep trees. Therefore, we now investigate the impact of limiting replacement transducers' output depth.

For simplicity's sake, we assume depth-boundedness as a \emph{semantic} restriction, i.e. we assert that all outputs in $R(w)$ produced by a depth-bounded replacement transducer $R$ on a string $w$ obey a given upper bound on their depth, without examining the decidability and complexity of determining whether or not a given transducer is depth-bounded.

We note that NWT with an output depth linear in the size of the input string are already strictly more expressive than $\epsilon$-free NWT, so the lower bounds from Section \ref{sec:bounded} also hold for NWT with linear output depth. As these lower bounds are already quite high, we focus only on transducers whose output depth is bounded by a constant.

\begin{definition}
	An NWT $R$ is called \emph{depth-bounded} if there is some constant $d \geq 0$ such that for any $w \in \wf$ and any $w' \in R(w)$, the depth of $w'$ is at most $d$.
\end{definition}

Using depth-bounded NWT as replacement transducers places the complexity of the winning problem between those for general NWT and for $\epsilon$-free NWT. The upper and lower bounds are proven similarly to those of Theorem \ref{thm:unboundediiexptime}, but use the fact that the stack size of a depth-bounded NWT on a fixed input is bounded by a constant.

\begin{theorem}\label{thm:depthboundedexpspace}
	For the class of replay-free games with depth-bounded NWT, $\Safelr$ is $\EXPSPACE$-complete.
\end{theorem}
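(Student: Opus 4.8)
The plan is to establish matching upper and lower bounds following the blueprint of Theorem~\ref{thm:unboundediiexptime}, the only new ingredient being that a depth bound $d$ on the replacement transducer $R$ forces the target automaton $A_T$ to read every replacement output with a stack of depth at most $d$; this is the observation that collapses one level of exponentiation relative to the general-NWT case.

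For the upper bound I would first precompute a product $\hat R$ of $R$ with $A_T$ that simulates, for an input $u$, the run of $A_T$ on each output $v \in R(u)$, so that $\hat R$ exposes, for every entering linear state of $A_T$, the set of exit states reachable over all $v \in R(u)$ (this is the transduce-then-check construction underlying Theorem~\ref{thm:nwttypechecking}(b), built from the composition of Proposition~\ref{prop:nwtcomposition}). Here the depth bound is essential: since every $v$ has depth at most $d$, the segment of $A_T$-stack created and consumed between a matching pair of input brackets of $R$ has height at most $d$, so $\hat R$ needs only polynomially many hierarchical stack symbols (each encoding an $R$-state together with an $A_T$-stack fragment of height $\le d$) and is therefore of polynomial size --- in contrast to the general case, where these fragments have unbounded height and force an exponential stack alphabet. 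I would then solve the replay-free game by an effect computation over the tree of the input word: bottom-up, each subtree is summarised by the set of state transformations (relations over the linear states of $A_T$ and of $\hat R$) that \pone can force, with \pones choice between \Read and \Call contributing an existential branch and \ptwos choice of replacement a universal one (handled through $\hat R$). Because $\hat R$ is polynomial, these relations range over a polynomial state set, so each summary set has at most exponential size, and the computation runs in exponential space, placing $\Safelr$ in $\EXPSPACE$. Equivalently, one reduces to the same alternating iterated transduction problem used for Theorem~\ref{thm:unboundediiexptime} and observes that the constant stack bound keeps every configuration singly-exponential.

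For the lower bound I would adapt the tiling reduction proving $\iiEXPTIME$-hardness in Theorem~\ref{thm:unboundediiexptime}, reducing instead from a single-exponential corridor tiling problem, which is $\EXPSPACE$-complete. The two players encode the two roles of the tiling problem: \ptwos replacement nondeterminism generates a candidate tiling row by row, while the target DNWA together with \pones \Call pattern verifies the horizontal and vertical tiling constraints. The key adaptation is that positions within a row must now be indexed by counters realised in \emph{constant depth} and polynomial width, rather than by the deeply nested gadgets of the $\iiEXPTIME$ construction; vertical adjacency is then checked by matching across bounded depth. Since a depth-bounded $R$ can produce nothing deeper than depth $d$, this caps the encodable corridor width at single-exponential, which is exactly what separates this construction from the doubly-exponential one.

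The main obstacle in both directions is the precise complexity accounting showing that the depth bound removes exactly one exponential and no more. On the upper-bound side this amounts to verifying carefully that the $A_T$-stack fragment carried across each bracket of $\hat R$ really is bounded by $d$ (so that $\hat R$, and hence all effect summaries, stay polynomial), including the case of nested \Call moves in which an ancestor call reads a descendant call's bounded-depth output. On the lower-bound side the difficulty is engineering the constant-depth index gadgets and the DNWA's consistency checks so that only valid tilings survive, using only the bounded-depth matching that a depth-bounded transducer and a DNWA can enforce.
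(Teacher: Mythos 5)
Your lower bound plan is essentially the paper's: it reduces from the complement of \algprobname{Exponential Corridor Tiling}, has \ptwo produce a bounded-depth (essentially flat) exponential-length tiling candidate whose horizontal correctness is enforced by the transducer, and lets \pone plus the target DNWA expose an error; the paper realises the ``index a column'' step not with address counters but with a successive-halving gadget (calls on $e/o/d$-nodes in the input string, which may itself be deep --- only transducer \emph{outputs} must be depth-bounded), but this is an engineering difference, not a conceptual one.

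The upper bound, however, has a genuine gap. Your effect computation summarises each subtree by relations over the linear states of $A_T$ and of $\hat R = $ ($R$ with $A_T$ run on its output). This is insufficient because replay-free games still allow \emph{nested} \Call moves on the input (the ``hidden replay'' the paper discusses in Section \ref{sec:relabel}): if a subtree lies below $m$ function-symbol ancestors, its final form can be fed through $R$ up to $m$ times before $A_T$ ever reads the result, so its summary must describe its behaviour on the $m$-fold composition $A_T \circ R \circ \cdots \circ R$. A pair of relations over $A_T$ and over $\hat R$ (i.e.\ over one level of transduction) does not determine behaviour under two or more transductions, and $m$ can be as large as $|w|$. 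Even with the depth bound making each level polynomial, the $m$-fold product has exponentially many states, so single relations over it are exponential-size objects and \emph{sets} of such relations --- which is what your bottom-up summaries would have to be --- are doubly exponential, breaking the claimed $\EXPSPACE$ bound. This is exactly the obstacle you flag (``an ancestor call reads a descendant call's bounded-depth output'') but do not resolve; it cannot be fixed by ``careful accounting'' within your representation. The paper sidesteps it by never storing relations at all: it reduces to the alternating iterated transduction problem with $k$ equal to the number of function tags in $w$, and then simulates the layered transducer composition of Proposition \ref{prop:aitupper} \emph{on-the-fly} with a co-nondeterministic exponential-space algorithm that keeps, per layer, a $2^j$-tuple of configurations (each of constant-bounded stack height, hence polynomial size), covering all of \pones binary choice sequences in parallel while guessing \ptwos choices; $\NEXPSPACE = \EXPSPACE$ then closes the argument. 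Your fallback sentence about reducing to AIT points in this direction, but the essential content is not that ``every configuration is singly exponential'' --- it is that one must avoid explicitly constructing the composed transducer (which is doubly exponential even here) and replace relation-valued summaries by configuration-valued, co-nondeterministically guessed runs.
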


		\section{Games with $\boldsymbol{\epsilon}$-free NWT replacement}\label{sec:bounded}

In this section, we examine context-free games with replacement relations given by $\epsilon$-free NWT. As we shall see,
this leads to a decidable winning problem for games with bounded replay, but non-elementary complexity in all but the easiest case. For the unbounded replay case, we can construct a rather straightforward reduction from the halting problem for TMs.

\begin{theorem}\label{thm:boundedundecidable}
	For the class of games with $\epsilon$-free NWT and unbounded replay, $\Safelr$ is \mbox{undecidable}.
\end{theorem}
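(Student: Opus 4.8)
The plan is to reduce the halting problem for deterministic Turing machines to \Safelr, so that a given machine $M$ halts on the empty input if and only if \pone has a winning strategy from a suitable start word. Fix a single function symbol $f \in \funcsymb$ and encode each configuration $C$ of $M$ as a rooted nested word $\op{f}\,\langle C\rangle\,\cl{f} \in \rwf$, where $\langle C\rangle$ lists the tape cells as leaves over non-function symbols and marks the head position together with the current state on the cell under the head. Since the only function node is the root $f$, the only decision point \pone ever faces while scanning a configuration from left to right is the final $\cl{f}$, where she may either \Read (ending the play on the current word) or \Call (advancing the simulation by one step). The start word $w$ is the encoding of the initial configuration $C_0$, and the target language $T$ is the set of encodings of \emph{halting} configurations, which is regular and hence recognisable by a DNWA, as the standing assumption on $T$ demands.

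The replacement relation is realised by an $\epsilon$-free NWT $R$ computing the single-step successor function of $M$ on configuration encodings: on input $\op{f}\,\langle C\rangle\,\cl{f}$ it outputs $\op{f}\,\langle C'\rangle\,\cl{f}$, where $C'$ is the unique successor of $C$. This is the main construction, though a routine one. Scanning left to right, $R$ relabels almost every cell to itself and rewrites only the constant-size window around the head according to $M$'s transition function; when the head steps onto previously unvisited tape, $R$ appends one fresh blank cell, attaching the matched pair $\op{\blank}\cl{\blank}$ to the output of a neighbouring closing transition, which is permitted by the synchronisation condition. Since each step adds at most a constant number of tags, every output of $R$ is linear (indeed within an additive constant) in the length of its input, so $R$ is genuinely $\epsilon$-free, and we may take it to be deterministic, hence functional on its domain. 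We let $\Dom(R)$ consist exactly of the non-halting configuration encodings, so $R$ has no output on a halting configuration. Because $R$ is functional, \ptwo has no real choice: on a \Call the replacement is forced to be the correct successor, and the play deterministically traces the run $C_0, C_1, C_2, \ldots$ of $M$.

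For correctness, note that after \pone has played \Call exactly $k$ times the current word encodes $C_k$, and each such \Call is played inside the result of the previous one, so these calls are nested and the play has \Call depth $k$. If $M$ halts, say after $k$ steps, then \pone calls $k$ times and then plays \Read at the root $\cl{f}$: the final word encodes the halting configuration $C_k \in T$, so \pone wins. She must not \Call on $C_k$, since $R$ has no output there and \ptwo would win immediately. Conversely, if $M$ does not halt then every $C_k$ is non-halting, so stopping at any point yields a word outside $T$ while never stopping yields a non-terminating play, and no strategy of \pone wins. Hence $w \in \safelr(G)$ iff $M$ halts, and \Safelr is undecidable.

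The reduction crucially exploits unbounded replay: reaching $C_k$ requires \Call depth $k$, which is unbounded over halting computations, so an $\epsilon$-free transducer whose per-step growth is only linear can nonetheless drive an unbounded simulation. Under any fixed replay bound only finitely many configurations would be reachable and the problem would become decidable, consistent with the bounded-replay entry of Table \ref{tab:results}. The only genuinely technical point is the $\epsilon$-free successor transducer, together with the requirement that its output again be a single $f$-rooted configuration so that replay can continue indefinitely; everything else is bookkeeping.
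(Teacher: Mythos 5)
Your proposal is correct and follows essentially the same route as the paper's proof: a reduction from the halting problem on empty input, with configurations encoded as nested words under a single root function symbol, a functional $\epsilon$-free NWT computing the one-step successor, halting-configuration encodings as the target language, and unbounded replay supplying the unbounded number of simulation steps via nested \Call moves. The only (immaterial) nuance is your claim that the successor transducer can be taken deterministic — the paper notes it may need nondeterministic look-ahead for left head moves while remaining functional, which is all the correctness argument requires.
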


Different from games with general NWT, the winning problem for \pone in games with $\epsilon$-free NWT and fixed \Call depth is decidable; however, the complexity of deciding $\Safelr$ is already non-elementary for \Call depth 2.

\begin{theorem}\label{thm:boundednonelementary}
	For the class of games with $\epsilon$-free NWT and \Call depth bounded by $d \geq 2$, $\Safelr$ is decidable, but not decidable in elementary time.
\end{theorem}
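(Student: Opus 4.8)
The plan is to handle the two assertions separately. For decidability I would first show that $\epsilon$-freeness together with the $\Call$-depth bound forces every nested word occurring in a play to have bounded (if astronomically large) length. Since an $\epsilon$-free NWT expands its input by at most a fixed constant factor $c$ per application, and the left-to-right discipline guarantees that each function-labelled closing tag is processed at most once, the number of $\Call$ moves in any single play is finite: at depth $1$ there are at most $O(|w|)$ calls (one per function node of the original $w$, each called at most once), and, inductively, inside each result of a depth-$j$ call only finitely many depth-$(j+1)$ calls occur, which are forbidden to recurse once depth $d$ is reached. A straightforward induction on the remaining $\Call$-depth budget then gives a computable bound $B(G,w,d)$ on the length of every reachable string, so $\Safelr$ reduces to solving a \emph{finite} reachability game over the finitely many configurations (current string, focus position, call stack): \pone wins a leaf iff its string lies in $T$, and backward induction computes the winner. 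Since $T$ is a regular language given by a DNWA, membership at leaves is decidable, and the whole game is thus solvable; equivalently, one may compute $\safelr(G)$ as a fixpoint of node effects over a finite domain.

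\textbf{Why the bound is non-elementary already at $d=2$.} The key observation driving the lower bound is that $B(G,w,2)$ is genuinely non-elementary. At $\Call$ depth $1$, calling a chain of \emph{nested ancestor} function nodes innermost-first applies the transducer once per ancestor and multiplies the length, producing a single exponential blow-up; this is exactly the mechanism behind the replay-free bound. At $\Call$ depth $2$, \pone may, \emph{inside} each depth-$1$ result, run an analogous nested-ancestor chain of depth-$2$ calls (reading through each depth-$2$ result rather than recursing into it) before the enclosing depth-$1$ call lifts the blown-up subtree one structural level higher. Writing $M_i$ for the length reached after processing the $i$-th ancestor of such a chain, this interleaving yields, informally, a recurrence of the form $M_i \approx c^{\,c\,M_{i+1}}$, so iterating along an ancestor chain of length $\Theta(|w|)$ builds a tower of exponentials of height $\Theta(|w|)$, all while the $\Call$ depth never exceeds $2$.

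\textbf{Lower bound reduction.} Building on this, I would prove the non-elementary lower bound by reducing from a non-elementary-hard problem, e.g. acceptance of a Turing machine whose space is bounded by a tower of exponentials of height linear in the input length (equivalently a tiling problem of tower-bounded corridor width). The game is designed so that the tower-sized strings above encode a computation tableau: the replacement transducer, being non-functional, lets \ptwo supply the nondeterministic choices of the simulated machine and fill in successive configurations, while \pone drives the left-to-right generation of the tableau via the two nested chains; the target DNWA $T$ accepts exactly those final strings encoding a correct, accepting computation, so that any illegal transduction chosen by \ptwo is detected locally. One then argues \pone has a winning strategy iff the machine accepts, and since the simulated machine may use tower-of-height-$\Theta(|w|)$ space, $\Safelr$ cannot be decided in elementary time.

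\textbf{Main obstacle.} The hard part will be the correctness of the reduction: forcing a faithful, \emph{regularly verifiable} tower-sized encoding using only a DNWA target language and a bounded-depth, $\epsilon$-free, possibly non-functional transducer. I expect the delicate points to be engineering the transducer so that each depth-$1$ and depth-$2$ ancestor step reproduces a well-formed, correctly indexed counter block — so that positions of a tower-indexed tape can be named and compared by a regular $T$ — and arranging the \pone/\ptwo alternation so that the target language \emph{alone} penalises every faulty move despite \ptwos freedom in choosing transducts. Bounding string length for the decidability direction is routine by comparison; the real work lies in matching the tower blow-up to the construction and confirming it is tight enough to rule out any elementary decision procedure while every play still terminates.
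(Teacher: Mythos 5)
Your decidability argument and your blow-up observation are both correct and match the paper: decidability follows because $\epsilon$-freeness leaves only finitely many transducts per called substring and bounded \Call depth makes every play tree finite, and your depth-$2$ tower mechanism is exactly the paper's Lemma \ref{lemma:nonelementarysize} -- the ancestor markers live in the \emph{input} string, so calling them is always a depth-$1$ move whose replacement re-parents the blown-up subtree into a fresh depth-$1$ result, resetting the replay budget while the \Call depth stays at $2$.

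The genuine gap is in the lower-bound reduction, at precisely the point you defer as the ``main obstacle'', and the fix you gesture at would fail. With a polynomial-size DNWA as target, consistency of a tower-sized tableau cannot be checked by the target language at all: corresponding cells of consecutive configurations are tower-far apart, and a fixed regular target cannot compare two addresses separated by other material -- doing so for addresses of length $m$ forces automaton size $2^{\Omega(m)}$, while addresses for a tower-sized tape would themselves need to be of tower size, so ``counter blocks named and compared by a regular $T$'' cannot rescue a polynomial reduction. The comparison has to be carried out by the game dynamics, and this is the paper's key device (Proposition \ref{prop:coknexptimelower}): after \ptwo commits to a tiling candidate of width $\Exp(k,n)$, the input string still contains $\Exp(k-1,n)$ uncalled marker pairs; \pone calls them one by one, and each such call forces \ptwo to delete either all odd- or all even-numbered columns (\pone chooses which by optionally relabelling an $e$-node to $o$ first). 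After all halving steps a single column survives, its cells now adjacent, so a polynomial-size DNWA can check vertical constraints, aided by a small bookkeeping node recording whether the isolated column is the last, next-to-last, or an interior one.

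A second, related flaw is the quantification direction. Since \ptwo chooses the transducts and cheating can only be punished through the target language, what the game naturally expresses is ``\pone wins iff \emph{every} candidate \ptwo can produce contains an error'', i.e. a reduction from the \emph{complement} of the tiling/acceptance problem; your formulation ``\pone has a winning strategy iff the machine accepts'', with \ptwo supplying the machine's nondeterministic choices, is inconsistent unless the simulated machine is deterministic. The paper accordingly reduces, for every fixed $k$, from the complement of $\Exp(k,n)$-tiling, obtaining hardness for $\cokNEXPTIME$ at \Call depth $2$, and then derives non-elementarity by contradiction with the nondeterministic time hierarchy theorem (an elementary algorithm would put the problem in $\cokNEXPTIME$ for some $k$, contradicting $\cokpNEXPTIME$-hardness). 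Your single direct reduction from a tower-space-bounded problem could in principle replace that final step, but only after the game-driven column-isolation mechanism (or an equivalent) is supplied.
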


Even for replay-free games with $\epsilon$-free NWT, the complexity of deciding the winning problem for \pone is still rather high. The lower bound is proven by reduction from a tiling problem, and the $\coNEXPTIME$ algorithm uses non-determinism to guess moves for \ptwo while trying out all possible strategies for \pone by backtracking.

\begin{theorem} \label{thm:conexptime}
	For the class of replay-free games with $\epsilon$-free NWT, $\Safelr$ is complete for $\coNEXPTIME$.
\end{theorem}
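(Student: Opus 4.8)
The plan is to prove the two bounds separately: membership of $\Safelr$ in $\coNEXPTIME$ by placing its complement in $\NEXPTIME$, and $\coNEXPTIME$-hardness by reducing (the complement of) an exponential tiling problem. First a structural observation: in a replay-free game, \pone processes the original word $w$ once in post-order and, at each function node, commits to a \Read or a \Call move; she never rewrites inside a replacement result. Since $R$ is $\epsilon$-free, each single replacement enlarges its input by at most a linear factor, but \pone may call function nodes nested up to depth $|w|$ in $w$, so iterated replacement along a root-to-leaf path compounds to a final word of size at most $2^{\bigO(|w|)}$. Moreover, the number of function nodes, and hence of \Call moves in any single play, is at most $|w|$.

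For the upper bound I would show that the complement (``\ptwo wins'', i.e.\ $w \notin \safelr(G)$) lies in $\NEXPTIME$. The nondeterministic algorithm guesses, for every node of \pones decision tree — which has depth at most $|w|$ and binary branching at \pones \Call/\Read choices, hence at most $2^{|w|}$ leaves — \ptwos reply to each \Call: either a replacement word $v$, or a flag asserting that the called subtree $u$ lies outside $\Dom(R)$ (an immediate \ptwo win). All guessed objects have size at most exponential, so the total guess is $2^{\bigO(|w|)}$ bits. The algorithm then verifies each guess along every path — $v \in R(u)$ is checked by NWT membership (Theorem \ref{thm:nwtmembership}, polynomial in the exponential-size $u,v$) and the domain flag by an emptiness test — and finally evaluates all $\le 2^{|w|}$ strategies of \pone by reconstructing the corresponding final word and testing it against the target DNWA. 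It accepts iff every \pone strategy yields a final word outside $T$ or hits a \ptwo win. Each leaf involves an exponential-size word and an exponential-time membership check, and there are exponentially many leaves, so the whole run is nondeterministic exponential time; hence $\Safelr \in \coNEXPTIME$.

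For the lower bound I would reduce an exponential ($2^n\times 2^n$) square tiling problem, whose positive instances are $\NEXPTIME$-complete, to the complement of $\Safelr$ (equivalently, its $\coNEXPTIME$-complete complement to $\Safelr$ itself). The game word $w$ is a skeleton of nested \emph{original} function nodes of depth $\Theta(n)$ — note that all of \pones callable nodes must lie in $w$, since replay-freeness forbids calling inside \ptwos outputs — so that \ptwos replacements, compounding along the nesting as above, spell out an exponential-size candidate tiling. \ptwo uses the nondeterminism of $R$ to write the tile of each cell, while \pone uses her \Call/\Read decisions to navigate the nested skeleton and thereby single out a cell together with an adjacent cell whose horizontal or vertical constraint she challenges. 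The target DNWA $T$ is designed to accept exactly those final words in which \pone has exhibited a structural defect or a violated adjacency at the challenged pair. Consequently, if no valid tiling exists, every tiling \ptwo can produce contains a defect that \pone can navigate to and expose, so \pone wins against every \ptwo strategy; conversely, if a valid tiling exists, \ptwo simply plays it and no challenge succeeds, so \pone has no winning strategy. This makes $w \in \safelr(G)$ equivalent to the nonexistence of a valid tiling, yielding $\coNEXPTIME$-hardness.

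The main obstacle is the lower-bound encoding. The delicate points are (i) forcing the players to build a \emph{faithful} exponential-size tiling on the nested skeleton even though $R$ is only $\epsilon$-free, so that all expansion arises from controlled nesting and $T$ rejects every malformed encoding; and (ii) arranging the row and column addressing — typically through interleaved binary counters — so that a \emph{single} challenged adjacency, including a long-range vertical one between cells a full exponential-length row apart, becomes locally verifiable by a deterministic, polynomial-size, left-to-right NWA. By contrast, the upper-bound argument is comparatively routine once the $2^{\bigO(|w|)}$ size bound and the polynomial-time NWT membership test are in hand.
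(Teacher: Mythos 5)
Your upper bound is correct and is essentially the paper's own argument: the paper's algorithm \textsc{CheckWin} backtracks through \pones at most $2^{|w|}$ \Read/\Call decision sequences while universally guessing \ptwos transducts, and its analysis rests on exactly your two ingredients, namely the $|w|\cdot c^{|w|}$ bound on the length of reachable words (each $\epsilon$-free transduction inflates by at most a constant factor, and only the at most $|w|$ closing tags of the original word can ever be called) and the polynomial-time NWT membership test of Theorem \ref{thm:nwtmembership}. Whether \ptwos replies are guessed up front, as you do, or interleaved with the recursion is immaterial.

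The gap is in the lower bound, at precisely the point you label ``delicate point (ii)'' and then leave open. You want \pone to mark two cells she claims are vertically adjacent and have the target DNWA verify this via ``interleaved binary counters''. But the target language must be accepted by a \emph{polynomial-size deterministic} NWA reading the final word once, left to right, and the two challenged cells are separated by an entire row, i.e.\ by $2^n$ cells. Verifying that they carry the same $n$-bit column address forces the automaton to remember that address across the gap, which requires at least $2^n$ states; so no polynomial-size DNWA can perform this check, and once the play is over there is no game interaction left to which the comparison could be delegated. This is not a presentational detail but the crux of the hardness proof. The paper's construction (Proposition \ref{prop:coknexptimelower}, invoked with $k=1$ and without its first phase) avoids addresses entirely: horizontal correctness is enforced by the replacement transducer while \ptwo builds the candidate tiling below the nested skeleton, and vertical correctness is checked by having \pone play $n$ further hidden-replay \Call moves on $d$/$e$-labelled nodes of the \emph{input} word, each of which forces the transducer to delete either all even-numbered or all odd-numbered columns, at \pones choice. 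Each such choice fixes one bit of the selected column's index, and after $n$ halvings the surviving column's cells are physically adjacent in the final word, where a polynomial-size DNWA can check vertical constraints, line-divider placement and the final tile locally (a small marker node tracks whether the selected column is the last or next-to-last one). Your reduction needs a mechanism of this kind --- further transductions that \emph{juxtapose} the challenged cells rather than address them --- and without it, it does not go through.
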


The non-elementary lower bound in Theorem \ref{thm:boundednonelementary} follows from the fact that, in each string returned by \ptwo, \pone may play \Call arbitrarily often. On a return string corresponding to a path of length $n$, \pone may play \Call on all $n$ nodes bottom-up, with each such \Call doubling the number of nodes below the called node, inducing a non-elementary blow-up.

To avoid this, we now examine games with bounded \emph{\Call width}, where, intuitively, \pone may only play \Call for a bounded number of times in each replacement string given by \ptwo. Note that \Call width is counted within each individual replacement string -- so, in a game of \Call depth 3 and \Call width $c$, if \pone plays \Call on some position of the input string, she may then place up to $c$ calls within the string returned by \ptwo, and again up to $c$ calls in \emph{each} of the depth-2 replacement strings resulting from those calls.

More formally, the \Call width of a play $\play$ is the maximum number of times \pone plays \Call in any replacement string given by \ptwo in $\play$. This definition extends naturally into that of \Call width of a strategy.
Note that \Call width only applies to \emph{replacement} strings, so \pone may still call arbitrarily many positions of the \emph{input} string, even for games with \Call width 0. For this reason, replay-free strategies always have bounded \Call width.

The proof of Theorem \ref{thm:boundedundecidable} shows that $\Safelr$ remains undecidable for games with unbounded \Call depth, even with \Call width bounded by $1$. For bounded replay, though, the complexity of $\Safelr$ collapses to that of the replay-free case if $\Call$ width is bounded.

\begin{theorem}\label{thm:boundedcallwidth}
	For the class of games with $\epsilon$-free NWT, \Call depth bounded by $d \geq 1$ and \Call width bounded by $k \geq 1$, $\Safelr$ is $\coNEXPTIME$-complete.
\end{theorem}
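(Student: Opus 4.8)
The plan is to treat the two bounds separately: the lower bound is inherited from the replay-free case, while the upper bound comes from a resource-bounded version of the guess-and-backtrack argument behind Theorem~\ref{thm:conexptime}. For $\coNEXPTIME$-hardness I would observe that replay-free games form a subclass of the games considered here: \Call depth $1 \le d$ for every $d \ge 1$, and a replay-free strategy never plays \Call inside a replacement string, so its \Call width is $0 \le k$. Thus every replay-free instance is also an instance of the present class, and the $\coNEXPTIME$ lower bound of Theorem~\ref{thm:conexptime} transfers unchanged.

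For the upper bound the key observation is that bounding \Call depth and \Call width together bounds the total number of \Call moves in any play \emph{polynomially}. The initial word $w$ contains at most $n \mydef |w|$ function-labelled closing tags, so \pone plays at most $n$ calls at depth~$1$; each call yields a single replacement string, in which \Call width permits at most $k$ further calls, and iterating over the at most $d$ levels gives at most $N \mydef n\sum_{i=0}^{d-1}k^{i} = \bigO(n\,k^{d-1})$ calls overall, which is polynomial for fixed $d,k$. Because $R$ is $\epsilon$-free, replacing a subtree of size $s$ produces a word of length at most $c\cdot s$ with $c \le |R|$, so each call enlarges the current word by a factor of at most $c$; after $N$ calls every intermediate word therefore has length at most $M \mydef n\,c^{N} = 2^{\bigO(n\log|R|)}$, i.e. at most exponential. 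Bounded \Call depth makes every play finite, so the game is determined and the complement of $\Safelr$ is exactly the set of instances in which \ptwo has a winning strategy.

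To place this complement in $\NEXPTIME$ I would adapt the backtracking algorithm underlying Theorem~\ref{thm:conexptime}, organising each play into at most $N$ rounds, a round being a choice by \pone of the next position at which to play \Call (reading past every earlier position) together with \ptwos answering transduct. The decisive point for staying singly exponential is to branch on \emph{which} of the at most $M$ positions \pone calls next — respecting the per-replacement-string \Call-width budget — rather than on a \Read/\Call bit at each of the up to $M$ positions. This makes a strategy for \ptwo a tree of depth $N$ and branching $M$, hence of at most $M^{N} = 2^{\mathrm{poly}}$ nodes, each carrying a transduct of size at most $M$. The algorithm guesses such an exponential-size \ptwo strategy, then deterministically enumerates all $\le M^{N}$ resulting plays, simulating each and using the DNWA for $T$ to check that the final word lies outside $T$ (or that some \Call received no reply); all of this runs in exponential time, witnessing that \ptwo wins and so placing $\Safelr$ in $\coNEXPTIME$.

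The main obstacle is the bookkeeping behind the polynomial call bound $N$: one must carry a separate \Call-width budget for \emph{each} replacement string at every level while still allowing the unboundedly many (up to $n$) calls on the original input, and verify that depth-$1$ calls acting on subtrees already swollen by earlier replacements contribute only the exponential blow-up recorded in $M$ and nothing worse. The second, purely algorithmic pitfall is that one must encode \pones choices as $N$ round-indexed call positions, giving $M^{N}$ plays, rather than as one bit per position, which would give $2^{M}$ plays; this compression is exactly what separates the bounded-\Call-width case from the non-elementary behaviour of Theorem~\ref{thm:boundednonelementary}, where unboundedly many calls per replacement string are permitted.
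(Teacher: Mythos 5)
Your proposal is correct and follows essentially the same route as the paper: the lower bound is inherited from the replay-free case (Theorem \ref{thm:conexptime}) exactly as the paper does, and the upper bound rests on the same counting argument --- for fixed $d,k$ only polynomially many \Call moves are possible, hence working strings stay at most exponentially large and \pones decision tree has only exponentially many (position-indexed rather than bit-per-position) branches --- applied to the backtracking algorithm of Theorem \ref{thm:conexptime} with \ptwos transducts resolved nondeterministically. Your reorganisation (guessing \ptwos entire strategy tree upfront and then deterministically enumerating all plays, using a looser exponential string-length bound where the paper tracks a sharper one, and the more precise call count $n\sum_{i=0}^{d-1}k^i$ in place of the paper's $|w|\cdot k^d$) is a routine equivalent of the paper's inline co-nondeterministic simulation with \Call-width/depth bookkeeping.
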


As mentioned above, bounded \Call width does not affect \pones options for \Call moves on the input string, as we generally want \pone to be able to at least process \emph{all} function symbols in the input. Dropping this requirement (i.e. bounding \emph{\Call width including input}) yields at least an exponential improvement in complexity.

\begin{theorem}\label{thm:boundedinputcallwidth}
	For the class of games with $\epsilon$-free NWT, \Call depth bounded by $d$ and \Call width including input bounded by $k$, $\Safelr$ is 
	\begin{enumerate}[(a)]
	\item $\coNP$-complete for $d \geq 1$ and $k \geq 2$,
	\item $\coNP$-complete for $d \geq 2$ and $k \geq 1$, and
	\item in $\PTIME$ for $d=k=1$.
	\end{enumerate}
\end{theorem}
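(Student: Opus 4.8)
The plan is to treat the three cases separately, with one structural observation driving both upper bounds. First I would record that, because \Call width (in the per-string sense, including the input) is bounded by the constant $k$ and \Call depth by the constant $d$, the tree of \Call moves in any play has branching at most $k$ and height at most $d$, and so contains at most $k + k^2 + \cdots + k^d = \bigO(k^d) = \bigO(1)$ \Call moves in total. Since the replacement transducer is $\epsilon$-free, every \Call multiplies the current word length by at most a factor $\ell = \bigO(|R|)$, so every word arising in any play has length at most $|w| \cdot \ell^{\bigO(k^d)}$, which is polynomial in the input. For the $\coNP$ bound I would then show that the complementary question ``does \ptwo have a spoiling strategy?'' lies in $\NP$. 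With a \ptwo-strategy fixed, the distinct plays are determined by \pones choices of \Call positions; as there are $\bigO(1)$ such positions, each among polynomially many candidates, there are only polynomially many plays, and each \ptwo-response can be stored as a polynomial-size nested word, so a spoiling strategy is a polynomial-size object. To verify a guessed strategy I would enumerate the polynomially many \pone-plays, compute each final word in polynomial time, confirm every \ptwo-response is a legal transduct via the membership test of Theorem~\ref{thm:nwtmembership}, and check non-membership in the target language using the deterministic target automaton; the strategy spoils \pone iff no play ends in $T$ (a \Call on a position with empty image being an immediate \ptwo win). This puts the complement in $\NP$, hence $\Safelr \in \coNP$ for all bounded $d,k$.

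For the $\PTIME$ case (c), with $d=k=1$ at most a single \Call is ever played, and it must sit on a position of the input word. Hence \pone wins iff either $w \in T$, or there is an input position $i$ carrying a function symbol such that \emph{every} transduct of the subtree $u_i$ yields a word in $T$ after substitution. For each of the polynomially many positions $i$ I would build from the target automaton a DNWA $A_i$ accepting exactly those replacements $v$ with $w[i\mapsto v]\in T$, obtained by evaluating the target automaton over the fixed context surrounding $i$, and then decide the inclusion $R(\{u_i\}) \subseteq L(A_i)$. As the target is deterministic this is exactly the $\PTIME$ type-checking problem of Theorem~\ref{thm:nwttypechecking}(b); iterating over all $i$ keeps the whole procedure in $\PTIME$.

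For $\coNP$-hardness in cases (a) and (b) I would reduce from \coiiiSAT, engineering the game so that \pone wins iff the input $3$-CNF $\varphi$ (on variables $x_1,\dots,x_n$ and clauses $C_1,\dots,C_m$) is unsatisfiable, i.e. $\forall\alpha\,\exists j$ such that $\alpha$ falsifies $C_j$. The crux is to realise this alternation through the move order: \pone first plays a \Call on a ``variable gadget'', on which the non-functional $\epsilon$-free transducer $R$ nondeterministically emits an encoding of an arbitrary assignment $\alpha$ --- this is \ptwos adversarial $\forall\alpha$. Having seen $\alpha$, \pone then plays a second \Call whose \emph{position} selects a clause index $j$; this is \pones $\exists j$, and it may depend on $\alpha$ precisely because strategies observe the opponent's previous moves. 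The target DNWA accepts exactly the well-formed transcripts in which the selected clause $C_j$ is falsified by the recorded $\alpha$, which is a purely local test of the three literals of $C_j$ and so needs no nondeterminism. In case (a) both \Calls are depth-$1$ \Calls on the input (using $k=2$, $d=1$): the clauses appear as separate input positions and \pone selects one. In case (b) the selection happens one level deeper (using $d=2$, $k=1$): the single input \Call returns the encoding of $\alpha$ together with the clause gadgets, inside which \pone plays her single nested \Call. In both cases deviating from this protocol leaves the final word outside $T$, so \pone wins iff $\forall\alpha\,\exists j:\alpha$ falsifies $C_j$, i.e. iff $\varphi$ is unsatisfiable.

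The main obstacle is twofold. On the upper-bound side it is the careful bookkeeping needed to confirm that, under the per-string reading of \Call width, the total number of \Call moves and hence the word length really stay polynomial, so that \ptwos strategy and its verification are genuinely polynomial. On the lower-bound side it is designing the gadgets and the target DNWA so that the \emph{only} route for \pone into $T$ is the intended ``assignment-then-clause'' protocol: one must rule out any advantage from omitting a \Call, from calling an unintended position, or from \ptwos other transducts, and one must encode $\alpha$ consistently enough that a single deterministic pass can verify falsification of the chosen clause. The $d=k=1$ boundary is exactly where this alternation can no longer be installed --- with one non-nested \Call, \pone must fix the clause before \ptwo fixes the assignment, collapsing $\forall\alpha\,\exists j$ into the powerless $\exists j\,\forall\alpha$ --- which is precisely why case (c) drops to $\PTIME$.
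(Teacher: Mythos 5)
Your proposal is correct and follows essentially the same route as the paper: the identical counting argument (constantly many \Call moves, hence linearly bounded strings and polynomially many plays) drives the $\coNP$ upper bound --- the paper packages it as a co-nondeterministic backtracking simulation rather than as a guessed polynomial-size spoiling strategy for \ptwo, but the content is the same --- and the lower bounds use the very same \coiiiSAT reduction with the same two-\Call protocol (\ptwo fixes the assignment on the first \Call, \pone selects a clause with her second, realised at depth 1/width 2 for (a) and depth 2/width 1 for (b)). Your $\PTIME$ case also matches the paper's, both resting on Theorem \ref{thm:nwttypechecking}(b), with your context-specialised target DNWA $A_i$ playing the role of the paper's position-specialised transducer $R'$.
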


The upper bounds in Theorems \ref{thm:boundedcallwidth} and \ref{thm:boundedinputcallwidth} use a backtracking algorithm like the one for Theorem \ref{thm:conexptime}; in this case, however, bounded \Call width reduces the size of both the decision tree for \pone and the occurring replacement strings.

		\section{Games with relabelling replacement}\label{sec:relabel}

As seen before, even the limited amount of insertion allowed by $\epsilon$-free NWT renders the winning problem for \pone quite complex. We now examine how this changes if we disallow insertion entirely.
First, we show that the winning problem is greatly simplified by the fact that transducts of relabelling transducers do not require any additional space beyond that provided by the input. In fact, the upper bounds of Theorems \ref{thm:relabelexptime} to \ref{thm:functionalnp} all use a nigh-trivial (alternating or nondeterministic) algorithm that simply simulates the game. Lower bounds, on the other hand, are proven by reduction from the word problem for linearly bounded (alternating) Turing machines (Theorems \ref{thm:relabelexptime} and \ref{thm:functionalpspace}) and from standard logic-based problems (Theorems \ref{thm:relabelpspace} and \ref{thm:functionalnp}).

\begin{theorem}\label{thm:relabelexptime}
	For the class of games with relabelling transducers and unbounded replay, $\Safelr$ is $\EXPTIME$-complete.
\end{theorem}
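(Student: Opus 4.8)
The plan is to exploit the defining feature of relabelling transducers: if $(u,v)\in R$ then $v$ has exactly the forest structure of $u$ and $|v|=|u|$. Hence throughout any play on $w\in\wf$ the current word is always a relabelling of $w$ with the same nesting structure, and a complete game configuration is captured by the current word together with the focus position, which is of size linear in $|w|$. I would therefore read $\Safelr$ as a reachability game on the graph of these configurations, where \pone controls the \Read/\Call decisions and \ptwo controls the choice of replacement $v$ with $(u,v)\in R$; \pone wins at a configuration whose focus has passed the last symbol and whose word lies in $T$, and every infinite play is a loss for her. Since there are at most $|\Sigma|^{|w|}\cdot(|w|+1)$ configurations, this arena has exponential size, and a reachability game is solved in time polynomial in the size of its arena, yielding the $\EXPTIME$ bound. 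Equivalently—and matching the ``nigh-trivial'' simulation mentioned before the statement—I would give an alternating algorithm running in polynomial space, using $\APSPACE=\EXPTIME$: it stores only the current configuration and an exponentially bounded step counter in polynomially many bits, branches existentially on \pones moves and universally on \ptwos replacement choices, invokes Theorem~\ref{thm:nwtemptiness} to detect a \Call with no legal replacement (an immediate loss for \pone), rejects once the counter is exhausted without reaching the target, and on termination checks membership in $T$ with the target DNWA.

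\textbf{Lower bound.} For $\EXPTIME$-hardness I would reduce from acceptance of an alternating Turing machine $M$ working in linear space, which is $\EXPTIME$-complete since alternating linear space equals $\EXPTIME$. Given $M$ and an input $x$, I would encode a configuration of $M$ on $x$ (tape contents, head position, and control state) as a rooted nested word of length linear in the tape, hence polynomial in $|x|$. The replacement transducer $R$ then simulates one step of $M$ as a relabelling: because a transition rewrites only the scanned cell and shifts the head by one position, the successor configuration differs from the current one in only a constant number of adjacent cells, and such a local update is producible by a left-to-right relabelling NWT that carries the pending state and head move in its linear state. Existential states of $M$ are mapped to choices of \pone and universal states to choices of \ptwo, so that \pone wins the game exactly when $M$ accepts $x$. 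Unbounded replay is essential: after a \Call on the root the focus returns to the start of the equally long replacement word, so \pone can repeatedly rewrite the configuration and thereby simulate the possibly exponentially many steps of an accepting computation subtree of $M$. The target language $T$, recognised by a DNWA, is the set of encodings of accepting configurations.

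\textbf{Main obstacle.} The hard part will be realising \pones existential choices, since it is \ptwo who always selects the replacement word. I would resolve this by separating, at each simulated step, the choice of transition from its application: a step is driven by \pone calling one of several dedicated function symbols, one per available transition, on which the transducer is forced (functional), so that \ptwo has no real freedom and \pones decision of which symbol to \Call fixes the transition; at a universal state of $M$, by contrast, the transducer offers all applicable successors as distinct outputs and \ptwo makes the selection. The remaining care is to ensure that the left-to-right discipline together with the focus reset on \Call lets \pone re-enter and reprocess the configuration for every step, and that the structure-preserving transducer can still verify and advance the encoding; both should follow from laying out the encoding so that each step of $M$ corresponds to a bounded, left-to-right computable relabelling.
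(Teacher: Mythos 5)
Your proposal is correct and takes essentially the same route as the paper: the upper bound is the same alternating polynomial-space simulation of the game (using $\APSPACE = \EXPTIME$, with relabellings guessed universally and verified via the NWT membership test), and the lower bound is the same reduction from membership for linearly bounded alternating Turing machines, encoding configurations as relabellable nested strings, simulating steps by \Call moves under unbounded replay, and realising universal branching through transducer nondeterminism. The only cosmetic difference is the gadget for \pones existential choices: the paper has her toggle a single flag substring (labelled $0/1/2$) via a \Call before calling the root, whereas you introduce one dedicated function symbol per transition—both mechanisms commit \pone to a transition before the functional step-transducer applies it.
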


With limited or no replay, the complexity decreases even further.

\begin{theorem}\label{thm:relabelpspace}
	For any $k \geq 1$, for the class of games with relabelling transducers and bounded \Call depth $k$, $\Safelr$ is $\PSPACE$-complete.
\end{theorem}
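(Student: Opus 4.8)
The plan is to prove matching $\PSPACE$ bounds, exploiting the two features that distinguish relabelling from the $\epsilon$-free case of Theorem~\ref{thm:conexptime}: a relabelling transducer $R$ never changes the shape of a word, so $|w|$ stays fixed throughout every play, and the \Call depth is capped by the constant $k$. For the upper bound I would give an alternating algorithm that decides $w\in\safelr(G)$ by structural recursion over the forest of $w$, carrying the state of the deterministic target automaton $A$ for $T$, the linear state of $R$, a budget $d\le k$ of remaining replacement levels, and a set $S\subseteq Q_A$ of acceptable exit states. At each closing tag \pones choice of \Read versus \Call is existential: \Read fires the deterministic $A$-transition on the current label and tests the result against $S$, whereas \Call is a universal branch over the $R$-relabellings of the subtree, after which the recursion re-enters the relabelled subtree at budget $d-1$. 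A subforest is scanned left to right, passing to each child the set of exit states from which \pone can still steer the remaining children into $S$. Started with entry state $q_0$ and $S=F_A$ at the root, the recursion accepts exactly when \pone can force the final $A$-run into an accepting state against every \ptwo, and so decides membership.

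The observation that brings this into $\PSPACE$ is that every computation path is short. A path treats each subtree by \emph{either} scanning its children once at the current budget \emph{or} re-entering the relabelled subtree at budget $d-1$ --- never both --- and unwinding this bound gives every path length $O(d\cdot|w|)=O(k\cdot|w|)$, so the algorithm runs in alternating polynomial time and hence in $\PSPACE$. The point is that the alternating verification never replays: whereas an actual play can be exponentially long, because after a \Call \pone re-scans the whole relabelled subforest, the recursion instead summarises each subtree by the entry/exit behaviour of $A$ that \pone can force, collapsing that blow-up. Remove the depth bound and $k$ becomes unbounded, this length estimate fails, and the complexity climbs to the $\EXPTIME$ of Theorem~\ref{thm:relabelexptime}.

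For $\PSPACE$-hardness I would reduce from $\QBF$, letting \pone and \ptwo fix $x_1,\dots,x_m$ in quantifier order as \pone scans the word left to right. Each $x_i$ is a function node whose subtree gathers the literal occurrences of $x_i$. For an existential $x_i$ the default labelling already encodes $x_i=0$, and the functional $R$ rewrites it into the $x_i=1$ labelling on a \Call, so \pones \Read/\Call decision \emph{is} the value and the single relabelling run stamps every occurrence consistently. For a universal $x_i$ the default labelling is instead a pattern that $A$ rejects, forcing a \Call, after which the non-functional $R$ lets \ptwo impose either the $x_i=0$ or the $x_i=1$ labelling uniformly on the entire subtree. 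Since each value is final by the time \pone leaves $x_i$'s block, the sequential $\exists/\forall$ alternation is reproduced faithfully; and by letting $R$ emit only non-function symbols, no call can ever be played inside a replacement, so the reduction lives entirely at \Call depth~$1$ and proves hardness for every $k\ge1$ at once.

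The step I expect to be the crux is letting the single, constant-size automaton $A$ decide $\phi$ although the by-variable grouping --- needed so that one \Call relabels all occurrences of a variable at once --- scatters the three literals of each clause across different blocks. A direct bottom-up evaluation would have to remember, while reading one block, the running truth of every clause, i.e.\ exponentially many states. A natural remedy is a short verification sub-game: once the assignment is fixed, \ptwo nominates a clause and $A$ need only confirm that this one clause holds a true literal --- a local, constant-state test --- so that \pone wins iff every nominated clause is satisfied, i.e.\ iff $\phi$ is true. Making this precise means forwarding each occurrence's clause index and truth value to the verification phase and ordering the moves so the nominated clause is testable in a single sweep; dovetailing this with the quantifier-ordered setting phase is, I expect, the real work of the lower bound. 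The mirror-image delicacy on the upper-bound side is to read \ptwos relabelling as one coherent $R$-run --- matching opening and closing relabellings through $R$'s hierarchical states on the recursion stack rather than choosing them independently --- and to confirm that the recursion really does capture \pones ability to force the $A$-run into the required exit set.
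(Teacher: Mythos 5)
The central gap is in your upper bound: the ``never both'' invariant that underlies your $O(k\cdot|w|)$ path-length bound contradicts the game semantics. \Call depth only counts nesting of \Call moves \emph{inside replacement strings}, so even in a replay-free game \pone may play \Call on closing tags inside a subtree and \emph{afterwards} on that subtree's root; the root \Call then relabels the already-modified interior. This is exactly the ``hidden replay'' the paper discusses before Theorem \ref{thm:writeonce}, and it cannot be dismissed: the game constructed in the proof of Theorem \ref{thm:functionalnp} is a relabelling game of \Call depth $1$ in which \pone must first call the variable tags $\cl{y_i^0}$ and then each enclosing $\cl{r}$; under your restriction (re-enter a relabelled subtree only if its children were not scanned at the current budget) she cannot do both, so your recursion computes a strict subset of $\safelr(G)$ and answers incorrectly on such instances. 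If you repair this by also allowing a \Call after the children have been scanned, then (i) the path length becomes roughly $O(|w|^{k+1})$ rather than $O(k\cdot|w|)$ --- still polynomial for constant $k$, so the alternating polynomial-time conclusion survives --- but (ii) your summarisation collapses: once the root is called, $R$ must be run on the \emph{modified} interior, and a summary recording only which $A$-exit states \pone can force retains no information about what that interior looks like; worse, with hidden replay a single position can undergo up to $|w|$ (not $k$) successive relabellings even at \Call depth $1$, so one would need to summarise unbounded compositions of $R$-runs jointly with $A$. Your motivating premise is also false: with relabelling and constant \Call depth, plays have polynomial length, which is why the paper needs no summarisation at all --- its upper bound is the direct alternating simulation (existential choice of \pones moves, universal choice of \ptwos relabellings, applied to the stored string, whose length never changes), which runs in alternating polynomial time and hence in $\PSPACE$.

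Your lower bound is on a workable track but, as you concede, the crux --- checking the scattered clauses with a polynomial-size DNWA --- is left open, and the left-to-right discipline makes it genuinely nontrivial: \ptwos clause nomination must come after the assignment phase in play order, hence after the variable blocks in document order, and then a single sweep again seems to require remembering the status of all clauses. A standard repair is to wrap the whole formula in one function node whose closing tag \pone calls last, the non-functional relabelling letting \ptwo mark exactly the occurrences of one clause (the clause index being part of each occurrence's label), so that $A$ only has to check that some marked occurrence is true; but note that this repair is itself a hidden-replay strategy, i.e., it relies on precisely the moves your upper-bound recursion excludes. The paper takes a different route altogether: it cites the $\PSPACE$-hardness construction of \cite{SchusterS15} (Proposition 10(b) there), observing that that proof already uses only relabelling.
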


As the winning problem for \pone remains intractable (assuming $\PTIME \neq \PSPACE$) for replay-free games with relabelling transducers, we now turn to the even more limited class of functional relabellings. Note that games with functional transducers are essentially ``solitaire games'' for \pone, as they do not allow for any choice of transducts by \ptwo.

\begin{theorem}\label{thm:functionalpspace}
	For the class of games with functional relabelling transducers and unbounded replay, $\Safelr$ is $\PSPACE$-complete.
\end{theorem}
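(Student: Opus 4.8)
The plan is to prove the two bounds separately, exploiting the two defining features of the setting: a relabelling transducer preserves the structure (hence the length) of its argument, and a functional transducer leaves \ptwo no genuine choice.

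For the upper bound, I would first observe that, since $R$ is a relabelling, every \Call move replaces a rooted subword by one of exactly the same length; consequently the word manipulated during any play keeps its initial length $n$, and a game configuration is completely described by the current word (over the fixed tag alphabet $\hat{\Sigma}$) together with the position of the focus -- an object of size $O(n\log|\hat{\Sigma}|)$, i.e. polynomial. Because $R$ is functional, \ptwo never has a real choice, so a winning strategy for \pone exists iff there is a single play, determined by \pones \Read/\Call decisions alone, that terminates with its final word in $T$. Thus $\Safelr$ reduces to reachability of a ``terminal-and-accepting'' configuration in the configuration graph, whose vertices are polynomial-size and whose (at most two) successors are computable in polynomial time: the successor on a \Call is the unique transduct, whose membership in $\Dom(R)$ and value are obtained by running $R$, and acceptance is checked by the target DNWA. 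This graph has exponentially many vertices, so a winning play, if one exists, can be taken to have at most exponential length; guessing it move by move while storing only the current configuration and a step counter of polynomially many bits puts the problem in $\NPSPACE$, and hence in $\PSPACE$ by Savitch's theorem.

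For the lower bound, I would reduce from the word problem for deterministic linear-space Turing machines, which is $\PSPACE$-complete by a standard padding argument. Given such a machine $M$ and input $w$, I encode a configuration of $M$ (contents of the $O(|w|)$ available tape cells, head position and state) as a rooted nested word of fixed length, using one nesting level per tape cell and folding the head/state marker into the label of the cell it occupies. The initial word is the encoding of $M$'s initial configuration on $w$; the target language $T$ is the (regular, DNWA-recognisable) set of encodings of accepting configurations; and the replacement transducer $R$ computes the successor-configuration relabelling. Since $M$ is deterministic, this relabelling is single-valued, so $R$ is a functional relabelling transducer, and it is $\epsilon$-free and structure-preserving by construction. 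Under unbounded replay, \pone simulates $M$ by repeatedly playing \Call at the root, each call advancing the configuration by one step, and she can reach an accepting configuration iff $M$ accepts $w$; hence $w \in \safelr(G)$ iff $M$ accepts $w$. Unbounded replay is essential here, since a deterministic linear-space computation may run for exponentially many steps -- this is exactly what separates the $\PSPACE$ bound of this theorem from the $\NP$ bound for the bounded-replay case.

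The main obstacle is realising the successor-configuration map as a genuine relabelling NWT. The new content of a cell depends on a bounded window of old cells (the cell and its two neighbours); right-moving heads propagate information in the reading direction and are handled by bounded look-behind, but left-moving heads require information to flow leftwards, which a single left-to-right pass cannot provide. I would resolve this using the nesting itself: an NWT visits each cell both on its opening tag (cells read in order $1,\dots,n$) and on its matching closing tag (cells read in reverse order $n,\dots,1$), and the hierarchical state pushed at a cell's opening tag can carry that cell's old content to its closing tag. Thus right head-moves are resolved on the downward pass and left head-moves on the upward pass, letting one functional relabelling NWT compute the successor configuration; should keeping the two tags of each cell consistent prove awkward, one may instead simulate each Turing step by a fixed number of consecutive \Call moves, which is harmless since replay is unbounded. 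Checking that the resulting $R$ is indeed functional, $\epsilon$-free and relabelling, and that the encoding of accepting configurations is recognised by a polynomial-size DNWA, is then routine.
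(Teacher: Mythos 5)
Your upper bound is essentially the paper's: functionality makes the game solitaire, relabelling makes every reachable word have the input's length, and a nondeterministic polynomial-space simulation of the play (with $\NPSPACE=\PSPACE$) finishes the argument; your explicit step counter is a harmless refinement. Your lower bound also follows the paper's overall plan: reduce from acceptance of a deterministic linear-space Turing machine, encode configurations as nested words under a root function symbol, let each \Call at the root advance the computation by one step, and use unbounded replay to cover the possibly exponential running time. The gap lies in the one step you yourself identify as the main obstacle: realising the successor-configuration map as a functional relabelling NWT.

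The problem is the well-formedness condition on NWTs: if an opening transition reads $\op{a}$, outputs $\op{b}$ and pushes hierarchical state $p$, then every closing transition reading $\cl{a}$ and popping $p$ must output $\cl{b}$. Hence, for a relabelling transducer, the new label of a cell is irrevocably committed at the moment its \emph{opening} tag is processed; nothing about the output can be ``resolved on the upward pass'', because the closing-tag output is forced to agree with the opening-tag output. At cell $i$'s opening tag the transducer has seen only cells $1,\dots,i$, while a left move of the head from cell $i+1$ makes cell $i$'s new label depend on cell $i+1$; so a \emph{deterministic} relabelling NWT cannot compute the successor configuration, in your nested encoding or in any other. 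Your fallback of simulating one Turing step by a fixed number of consecutive \Call moves does not repair this: composing deterministic relabelling passes, the label of cell $i$ after any number of passes still depends only on the original cells $1,\dots,i$ (information never flows leftward), so no sequence of such calls implements a left head-move. The missing idea --- and the one the paper's proof relies on --- is to make the transducer nondeterministic but still \emph{functional}: at cell $i$ it guesses whether the head currently sits at cell $i+1$ and is about to move left (and via which transition), relabels cell $i$ accordingly, and verifies the guess when it subsequently reads cell $i+1$ (in your encoding the verification could equally happen at the closing tags, using the pushed hierarchical states); wrong guesses admit no accepting run, so every valid configuration encoding has exactly one transduct. With this look-ahead guessing inserted, your reduction goes through; as written, the construction fails.
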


As for general relabelling transducers, the complexity of $\Safelr$ is the same for games with bounded replay and no replay when restricted to functional relabelling transducers.

\begin{theorem}\label{thm:functionalnp}
	For any $k \geq 1$, for the class of games with functional relabelling transducers and bounded \Call depth $k$, $\Safelr$ is $\NP$-complete.
\end{theorem}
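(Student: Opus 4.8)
The plan is to prove the two directions separately. For membership in \NP, the key observation is that a \emph{functional} transducer leaves \ptwo no genuine choice: each \Call move has a unique reply, so the game is effectively a one-player (solitaire) game for \pone, and $w \in \safelr(G)$ iff \emph{some} sequence of \pones own \Read/\Call decisions yields a final word in $T$. Since relabelling transducers preserve the structure, and hence the size, of their argument, every intermediate word has exactly $|w|$ tags; combined with \Call depth bounded by the fixed constant $k$, this makes every play polynomially long: at nesting level $j$ at most $|w|^{j}$ calls can occur, so a play has $\bigO(|w|^{k+1})$ moves. I would therefore guess \pones polynomially long sequence of decisions, simulate the resulting (deterministic, because functional) play by applying $R$ at each \Call in polynomial time, and finally test the resulting word for membership in $T$ with the target DNWA. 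This is an \NP\ procedure, giving the upper bound for every $k$.

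For \NP-hardness I would reduce from $3$-SAT. Given $\varphi = C_1 \wedge \cdots \wedge C_m$ over variables $x_1,\ldots,x_n$, I build a game whose input word is the parse tree of $\varphi$ (a conjunction of clauses, each a disjunction of its literal leaves) wrapped in a chain of function nodes, i.e. of the form $\op{c_1}\cdots\op{c_n}\,\text{(parse tree)}\,\cl{c_n}\cdots\cl{c_1}$, where only the $c_i$ are function symbols and every literal leaf carries a default label encoding ``this literal is false under $x_i=\text{false}$.'' \pones only relevant choice is whether to play \Call or \Read at each $\cl{c_i}$, which I read as setting $x_i$ to true or false.

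The functional relabelling transducer $R$ is designed so that, when \pone calls $c_i$, it reads the opening tag $\op{c_i}$, enters a state remembering the index $i$, and relabels \emph{every} occurrence of $x_i$ in the argument consistently (positive occurrences to ``true,'' negative occurrences to ``false''), while copying all other labels unchanged — in particular the occurrences of the other variables, whether already set or still at their default. This is the crux: because all occurrences of one variable are set by a single \Call on a node containing the whole formula, the produced assignment is automatically \emph{globally consistent}, which a DNWA target could never enforce on its own. The target language $T$ is then the regular set of wrapped parse trees that evaluate to true, recognised by a DNWA through the standard bottom-up evaluation of a Boolean AND/OR tree (constant linear states, partial values carried on the stack). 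Hence \pone wins iff she can reach a consistent assignment satisfying $\varphi$, i.e. iff $\varphi$ is satisfiable, and all objects have size polynomial in $|\varphi|$.

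The construction uses only \Call depth $1$, so it already settles the replay-free subcase; it remains correct for every fixed $k \geq 1$ because the relabellings are monotone and idempotent — a deeper \Call on some $c_i$ inside a replacement can only (re)set $x_i$ to true, an effect already available at depth $1$ — so the set of reachable final words, and thus the answer, does not depend on $k$. I expect the main difficulty to lie precisely in this lower bound: the assignment must be consistent across all (scattered) occurrences of each variable, yet the only component with room to retain information, the target DNWA, has polynomially many states and can check only regular, essentially local or bottom-up, properties. Letting a single \Call per variable broadcast that variable's value to all its occurrences — leaving the DNWA the purely regular task of evaluating a Boolean tree — is what resolves this tension, and arranging the transducer to broadcast \emph{selectively} (touching only the intended variable while preserving \pones earlier choices) is the technically delicate part.
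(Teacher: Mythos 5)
Your proposal is correct and takes essentially the same route as the paper: the \NP{} upper bound exploits functionality to turn the game into a solitaire guess-and-simulate-and-check procedure (the paper's algorithm guesses at most $k\cdot|w|$ choices left to right), and the 3SAT lower bound uses exactly the paper's key trick of nesting function symbols around the formula so that a single relabelling \Call per variable broadcasts that variable's value to all its occurrences at once, making consistency automatic while the target DNWA only performs a regular evaluation check — the paper merely separates the assignment choice (flag nodes $y_i^0$ called by \pone) from the broadcast (calls to nested $r$-nodes), where you merge both into one \Call per variable, and your idempotence argument for independence from $k$ matches the paper's replay-free construction. One small slip to repair: your untouched default labels must encode the all-false assignment, under which a \emph{negative} literal $\overline{x_i}$ evaluates to true, not false as your phrasing suggests.
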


We see that even in this very simple class of games, we still fail to obtain a $\PTIME$ upper bound. Careful examination of lower bound proofs shows that our semantics for replay-free games still allows for a sort of ``hidden replay'': On a string of the form $\op{a}\op{b}v\cl{b}\cl{a}$, if \pone plays \Call first on $\cl{b}$ then on $\cl{a}$, the substring $v$ undergoes \emph{two} transductions -- one from the \Call to $\cl{b}$, another from the \Call to $\cl{a}$. This allows us to perform any number $d$ of transductions on a given string by enclosing it inside $d$ nested function symbols.

Excluding this hidden replay yields a very narrow restriction of context-free games, which we call \emph{write-once} games. In these, no substring may be transduced more than once, i.e. \pone may only play \Call on any closing tag $\cl{a}$ if the substring enclosed in it does not contain a substring on which \pone has played \Call before. 
Note that write-once games are always replay-free, but even weaker as far as \pones rewriting capabilities are concerned.

A slight adaptation of the proof of Theorem \ref{thm:relabelpspace} shows that $\Safelr$ remains $\PSPACE$-hard for write-once games with arbitrary relabelling transducers; for functional (and deterministic) relabelling transducers, however, we can prove tractability. 
The proof constructs from a given game $G$ a NWT $R_\ponea$ such that for each $w \in \wf$, the set of all strings into which \pone way rewrite $w$ in $G$ is given by $R_\ponea(w)$.

\begin{theorem}\label{thm:writeonce}
	For the class of write-once games with functional relabelling transducers, $\Safelr$ is in $\PTIME$.
\end{theorem}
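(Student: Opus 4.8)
The plan is to carry out exactly the idea announced just before the statement: build a single non-deleting relabelling NWT $R_\ponea$ whose image on an input $w$ is precisely the set of strings \pone can produce, and then decide the winning problem by testing $R_\ponea(w) \cap T \neq \emptyset$. First I would pin down the shape of plays. Since the transducer is functional, \ptwo has no choice, so the game is solitaire and \pone wins on $w$ iff she can steer the unique play to a final string in $T$. The write-once restriction, combined with relabelling (which preserves tree structure and only rewrites labels), forces every play into the following normal form. Because a \Call on $\cl{a}$ is forbidden once \pone has played \Call strictly inside the subtree of $a$, the set $S$ of positions on which she plays \Call forms an antichain in the forest of $w$. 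Each called subtree is read with its original labels (by write-once no inner \Call could have altered them), then relabelled once, and functionality makes this the unique word in $R(u)$ for the subtree $u$; a \Call on a subtree outside $\Dom(R)$ loses immediately, so \pone only uses antichains all of whose subtrees lie in $\Dom(R)$. Conversely every such antichain is realizable left to right, and since the relabelled subtrees are pairwise disjoint, their processing order is irrelevant and they do not interfere. Hence the set of final strings reachable by \pone on $w$ is exactly
\[
\{\,\text{outcome}(S) : S \text{ an antichain of } \funcsymb\text{-nodes of } w,\ R \text{ accepts every } S\text{-subtree}\,\},
\]
where $\text{outcome}(S)$ relabels each $S$-subtree by $R$ and leaves everything else unchanged.

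Next I would build $R_\ponea$ to enumerate exactly these outcomes. Scanning $w$ left to right, $R_\ponea$ runs in a ``copy'' mode in which it reproduces each tag verbatim and descends into children, but at every opening \funcsymb-tag it may nondeterministically switch into a ``transduce'' mode that simulates $R$ on the whole subtree and, upon $R$ reaching an accepting state at the matching closing tag, returns to copy mode. Refusing to re-enter transduce mode inside an already-transduced subtree enforces the antichain condition, while requiring $R$ to accept enforces the domain condition; the stack of $R_\ponea$ carries the current mode together with the hierarchical state of either the copy branch or of $R$, exactly as in the composition construction of Proposition \ref{prop:nwtcomposition}. This $R_\ponea$ is a non-deleting relabelling NWT of size $\bigO(|R|)$, constructible in polynomial time, and by the previous paragraph $R_\ponea(w)$ equals \pones set of reachable strings; therefore $w \in \safelr(G)$ iff $R_\ponea(w) \cap T \neq \emptyset$.

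Finally I would decide this intersection in polynomial time. Since $R_\ponea$ is relabelling, every word in $R_\ponea(w)$ shares the tree shape of the fixed word $w$, so $R_\ponea(w) \cap T \neq \emptyset$ asks whether some $R_\ponea$-consistent relabelling of $w$ is accepted by the DNWA $A_T$ for $T$. This is a reachability question in the product of $R_\ponea$ with $A_T$, both of which push and pop synchronously with the fixed structure of $w$; evaluating this product bottom-up over $w$ (the standard run of an NWA on a fixed nested word) takes time polynomial in $|w|$, $|R|$ and $|A_T|$.

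I expect the correctness of the antichain characterization to be the main obstacle: precisely, establishing the two inclusions between \pones reachable strings and $R_\ponea(w)$, and verifying that write-once exactly removes the ``hidden replay'' so that each position is relabelled at most once and each called subtree is read with its original labels. Once this normal form for plays is in hand, both the construction of $R_\ponea$ and the final \PTIME test are routine.
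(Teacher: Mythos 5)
Your proposal is correct and follows essentially the same route as the paper: you build the same nondeterministic relabelling NWT $R_\ponea$ (copy mode with nondeterministic, non-nested switches into simulation of $R$ on whole subtrees), reduce the winning problem to testing $R_\ponea(w) \cap T \neq \emptyset$, and decide that test in polynomial time. Your antichain normal form for write-once plays is just an explicit rendering of the correctness induction the paper sketches, and your bottom-up product of $R_\ponea$ with the target DNWA over the fixed word $w$ is an equivalent implementation of the paper's intersection-nonemptiness test via its $\epsilon$-NWA lemmas.
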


		\section{Conclusion}\label{sec:conclusion}

The research presented in this paper shows that a major challenge in using transducers
for context-free games is finding sensible transducer models and strategy restrictions that do not cause a prohibitive increase in the complexity of the winning problem compared to context-free games without parameter transformation. This paper has made a first step towards identifying what suitable restrictions may look like; however, the few tractable cases identified here are still so restricted that they may be only of limited practical interest.

It is possible that 
the complexity of the winning problem in games with replacement transducers may be further reduced by restricting  relevant schemas to be closer to practical schema specifications for XML (such as DTDs or XML Schema). However, since research in \cite{SchusterS15} indicates that specifications of input schemas for external services influence the complexity of the safe rewriting problem, further research might be necessary to find transducer models whose input and output schemas
can be described by DTDs or XML Schema.

\bibliographystyle{plain}
\bibliography{nwgames}

\begin{thebibliography}{10}

\bibitem{AbiteboulBM08}
Serge Abiteboul, Omar Benjelloun, and Tova Milo.
\newblock The {A}ctive {XML} project: an overview.
\newblock {\em VLDB J.}, 17(5):1019--1040, 2008.

\bibitem{AbiteboulMB05}
Serge Abiteboul, Tova Milo, and Omar Benjelloun.
\newblock Regular rewriting of active {XML} and unambiguity.
\newblock In {\em PODS}, pages 295--303, 2005.

\bibitem{AlurM09}
Rajeev Alur and P.~Madhusudan.
\newblock Adding nesting structure to words.
\newblock {\em J. ACM}, 56(3), 2009.

\bibitem{BjorklundSSK13}
Henrik Bj{\"o}rklund, Martin Schuster, Thomas Schwentick, and Joscha Kulbatzki.
\newblock On optimum left-to-right strategies for active context-free games.
\newblock In {\em ICDT}, pages 105--116, 2013.

\bibitem{ChandraKS81}
A.~K. Chandra, D.~Kozen, and L.~J. Stockmeyer.
\newblock Alternation.
\newblock {\em Journal of the ACM}, 28(1):114--133, 1981.

\bibitem{Chlebus86}
B.~S. Chlebus.
\newblock Domino-tiling games.
\newblock {\em Journal of Computer and System Sciences}, 32(3):374--392, 1986.

\bibitem{Cook73}
Stephen~A. Cook.
\newblock A hierarchy for nondeterministic time complexity.
\newblock {\em J. Comput. Syst. Sci.}, 7(4):343--353, 1973.

\bibitem{FiliotRRST10}
Emmanuel Filiot, Jean-Fran\c{c}ois Raskin, Pierre-Alain Reynier,
  Fr{\'e}d{\'e}ric Servais, and Jean-Marc Talbot.
\newblock Properties of visibly pushdown transducers.
\newblock In {\em MFCS}, pages 355--367, 2010.

\bibitem{GraedelTW02}
E.~Gr\"{a}del, W.~Thomas, and T.~Wilke, editors.
\newblock {\em Automata, Logics, and Infinite Games. A Guide to Current
  Research}.
\newblock Springer, 2002.

\bibitem{HopcroftMU01}
John~E. Hopcroft, Rajeev Motwani, and Jeffrey~D. Ullman.
\newblock {\em Introduction to automata theory, languages, and computation -
  {(2.} ed.)}.
\newblock Addison-Wesley series in computer science. Addison-Wesley-Longman,
  2001.

\bibitem{MiloAABN05}
Tova Milo, Serge Abiteboul, Bernd Amann, Omar Benjelloun, and Frederic~Dang
  Ngoc.
\newblock Exchanging intensional {XML} data.
\newblock {\em ACM Trans. Database Syst.}, 30(1):1--40, 2005.

\bibitem{MuschollSS06}
Anca Muscholl, Thomas Schwentick, and Luc Segoufin.
\newblock Active context-free games.
\newblock {\em Theory Comput. Syst.}, 39(1):237--276, 2006.

\bibitem{RaskinS08}
Jean-Fran\c{c}ois Raskin and Fr{\'e}d{\'e}ric Servais.
\newblock Visibly pushdown transducers.
\newblock In {\em ICALP (2)}, pages 386--397, 2008.

\bibitem{SchusterS15}
Martin Schuster and Thomas Schwentick.
\newblock Games for {Active XML} revisited.
\newblock In {\em {ICDT}}, pages 60--75, 2015.

\bibitem{Servais11}
Fr{\'e}d{\'e}ric Servais.
\newblock {\em Visibly pushdown transducers}.
\newblock Dissertation, ULB Belgique, 2011.

\bibitem{ThomoVY08}
Alex Thomo, S.~Venkatesh, and Ying~Ying Ye.
\newblock Visibly pushdown transducers for approximate validation of streaming
  xml.
\newblock In {\em FoIKS}.

\bibitem{Veanes97}
Margus Veanes.
\newblock On computational complexity of basic decision problems of finite tree
  automata.
\newblock Technical report, UPMAIL Technical Report 133, Uppsala University,
  Computing Science Department, 1997.

\end{thebibliography}

\counterwithin{theorem}{section}

\renewcommand{\thesection}{\Alph{section}}
\setcounter{section}{0}
\newpage

		\section{Appendix}
For easier reference, we restate the results that were already stated
in the body of the paper. Definitions and results not stated in the body can be
identified by their number of the type A.xxx. 

		\section*{Additional preliminaries}

This section gives additional notations and definitions needed only for proofs in the following sections.

\paragraph*{Nested words}

For the purpose of reductions, we will sometimes need to encode flat strings as nested strings; for a string $w = w_1 \cdots w_n \in \Sigma^*$ with $w_1, \ldots, w_n \in \Sigma$, the \emph{standard nested string encoding $\nw{w}$ of $w$} is $\nw{w} = \op{w_1}\cl{w_1} \ldots \op{w_n}\cl{w_n} \in \wf$.

\paragraph*{Nested word automata}
A \emph{configuration} $\kappa$ of $A$ is a tuple $(q, \alpha) \in Q \times P^*$, with a linear state $q$ and a sequence $\alpha$ of hierarchical states, reflecting the pushdown store. A \emph{run of $A$ on $w=w_1 \ldots w_n \in \wf$} is a sequence $\kappa_0, \ldots, \kappa_n$ of configurations $\kappa_i = (q_i, \alpha_i)$ of $A$ such that for each $i \in [n]$ and $a\in\Sigma$ it holds that $\kappa_i$ is a \emph{successor configuration of $\kappa_{i-1}$ with $w_i$}, i.e. that
\begin{itemize}
	\item if $w_i = \op{a}$, then $(q_{i-1}, \op{a}, q_i, p) \in \delta$ (for some $p \in P$), and $\alpha_i = p \alpha_{i-1}$, or
	\item if $w_i = \cl{a}$, then $(q_{i-1}, p, \cl{a}, q_i) \in \delta$ (for some $p \in Q$), and $p \alpha_i = \alpha_{i-1}$. 
\end{itemize}
	In this case, we also write $\kappa_0 \reach[w]_{A} \kappa_n$.
We say that $A$ \emph{accepts} $w$ if $(q_0, \epsilon) \reach[w]_A (q', \epsilon)$ for some $q' \in F$. The language $L(A) \subs \wf$ is defined as the set of all strings accepted by $A$.

\paragraph*{Context-free games}
Towards a  formal definition of play, a \emph{configuration} is a tuple $\kappa=(p,u,v)\in\{\ponea,\ptwoa\}\times \hat{\Sigma}^* \times \hat{\Sigma}^*$ where $p$ is the player to move, $uv \in \wf$ is the \emph{current word}, and the first symbol of $v$ is \emph{the current position}. A \emph{winning configuration for \pone} is a configuration $\kappa_\ponea=(\ponea,u,\epsilon)$ with $u\in T$, and a \emph{winning configuration for \ptwo} is a configuration $\kappa_\ptwoa=(\ptwoa,u_1\op{a}v,\cl{a}u_2)$, with $\op{a}v\cl{a} \in \rwf$ such that there is no $v' \in \wf$ with $(\op{a}v\cl{a},v') \in R$. 
The configuration $\kappa'=(p',u',v')$ is a \emph{successor configuration} of $\kappa=(p,u,v)$ (Notation: $\kappa\to \kappa'$) if one of the following holds:
 \begin{enumerate}[(1)]
 \item $p'=p=\ponea$, $u' = us$, and $sv' = v$ for some $s\in \hat{\Sigma}$ (\pone plays \Read);
  \item $p=\ponea$, $p'=\ptwoa$, $u = u'$, $v=v' = \cl{a} z$ for $z \in \hat{\Sigma}^*$, $a\in\Gamma$, (\pone plays \Call);
  \item $p=\ptwoa$, $p'=\ponea$, $u=x \op{a} y$, $v = \cl{a} z$ for $x,z \in \hat{\Sigma}^*$, $\op{a}y\cl{a} \in \rwf$, $u'=x$ and $v'=y'z$ for some $y' \in \wf$ with $(\op{a}y\cl{a}, y') \in R$ (\ptwo plays $y'$).
 \end{enumerate}

The \emph{initial configuration} of game $G$ for string $w$ is $\kappa_0(w)\mydef (\ponea,\epsilon,w)$. 
A \emph{play} of $G$ is either an infinite sequence $\play=\kappa_0,\kappa_1,\ldots$ 
or a finite sequence $\play=\kappa_0,\kappa_1,\ldots,\kappa_k$ of
configurations, where, for each $i>0$, $\kappa_{i-1} \to \kappa_{i}$ and, in the finite case, $\kappa_k$ has no successor configuration. In the latter case, \pone
\emph{wins} the play if $\kappa_k$ is a winning position for \pone, in all other cases, \ptwo wins.

\paragraph*{Strategies}

A \emph{strategy} for player $p\in\{\ponea,\ptwoa\}$ maps prefixes $\kappa_0,\kappa_1,\ldots,\kappa_k$ of plays, where $\kappa_k$ is a $p$-configuration, to allowed moves. We denote strategies for \pone by $\Astrat,\Astrat',\Astrat_1,\ldots$ and strategies for \ptwo by $\Bstrat,\Bstrat',\Bstrat_1,\ldots$. 

A strategy $\Astrat$ is \emph{memoryless} if, for every prefix $\kappa_0,\kappa_1,\ldots,\kappa_k$ of a play, the selected move $\Astrat(\kappa_0,\kappa_1,\ldots,\kappa_k)$ only depends on $\kappa_k$. As context-free games are reachability games we only need to consider memoryless strategies; see, e.g., \cite{GraedelTW02}.
\begin{proposition}
  Let $G$ be a context-free game, and $w$ a string. Then either \pone or \ptwo has a winning strategy on $w$, which is actually memoryless.
\end{proposition}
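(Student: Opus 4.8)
The plan is to recast the proposition as an instance of memoryless determinacy for reachability games and then invoke the standard theory (see, e.g., \cite{GraedelTW02}). For the fixed initial string $w$, I would first build the \emph{configuration arena} whose vertices are all configurations $\kappa=(p,u,v)$ reachable from $\kappa_0(w)=(\ponea,\epsilon,w)$ under the successor relation $\to$, with edges given by $\to$ and each vertex owned by the player $p$ named in it. Because \ptwo may insert arbitrarily long replacement strings, this arena is in general infinite and has infinite out-degree at \ptwoa-vertices; I would stress at the outset that this unboundedness is the only genuine obstacle, since the finite-arena versions of the reachability results do not apply verbatim.

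The key preliminary step is to verify that the winning condition is a plain reachability objective for \pone, with \ptwo as the opposing (safety) player. Put $W=\{(\ponea,u,\epsilon)\mid u\in T\}$. I would check that the dead ends of the arena are exactly (i) the \ponea-configurations $(\ponea,u,\epsilon)$ with $v=\epsilon$ (for any nonempty $v$ the move \Read\ is always available), and (ii) the winning configurations for \ptwo, i.e. \ptwoa-configurations admitting no legal replacement. Reading off the definition of ``wins'' then yields that \pone wins a play iff the play reaches $W$: a play ending in a dead end $(\ponea,u,\epsilon)$ is won by \pone precisely when $u\in T$ (so that the dead end lies in $W$); a play ending in a \ptwoa dead end, and every infinite play, is lost by \pone. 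Since the elements of $W$ are themselves dead ends, ``reaching $W$'' and ``ending in $W$'' coincide, so the usual subtlety of reachability semantics does not arise.

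With this reformulation, I would apply the attractor construction for $W$ with \pone as the reachability player. As the arena may be infinite, the construction proceeds by transfinite induction: $\mathrm{Attr}_0=W$; a \ponea-vertex enters the next level as soon as it has a successor already in the attractor, while a \ptwoa-vertex enters only once it has at least one successor and \emph{all} of its (possibly infinitely many) successors lie in the attractor; and $\mathrm{Attr}$ is the union over all ordinal levels. Assigning each vertex its least rank, a positional strategy for \pone that always moves to a successor of strictly smaller rank forces the play into $W$ in finitely many steps, since ranks cannot descend forever in a well-order. Dually, a positional strategy for \ptwo keeping the play inside the complement $V\setminus\mathrm{Attr}$ exists, because a non-terminal \ptwoa-vertex outside the attractor has a successor outside it and every \ponea-vertex outside the attractor has all successors outside it. Both strategies depend only on the current vertex, i.e. on the current configuration, hence are memoryless in the stated sense; and $\kappa_0(w)$ lies either in $\mathrm{Attr}$ or in its complement, so exactly one player has a memoryless winning strategy from it, which transfers back through the definition of $\safelr(G)$ to give the claim. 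The main point requiring care—and the step I expect to be delicate—is confirming that memoryless determinacy, in particular for the safety player \ptwo, survives the passage to arenas of unbounded size and unbounded branching, which is exactly what the transfinite attractor argument (in place of a finite fixpoint iteration) secures.
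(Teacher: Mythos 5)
Your proposal is correct and takes essentially the same route as the paper: the paper simply observes that context-free games are reachability games and invokes memoryless determinacy from \cite{GraedelTW02}, which is exactly the reformulation you set up. The only difference is that you spell out the standard (transfinite) attractor construction that the cited theory provides, including the correct handling of dead ends and of the infinite-branching \ptwoa-vertices, rather than citing it as a black box.
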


Therefore, in the following, strategies $\Astrat$ for \pone map configurations $\kappa$ to moves $\Astrat(\kappa)\in\{\Call,\Read\}$ and strategies $\Bstrat$ for \ptwo map configurations $\kappa$ to moves $\Bstrat(\kappa)\in \wf$. 

For configurations $\kappa,\kappa'$ and strategies $\Astrat,\Bstrat$  we write $\kappa\movest
\kappa'$ if $\kappa'$ is the unique successor configuration of $\kappa$ determined by strategies $\Astrat$ and $\Bstrat$. Given an initial word $w$ and
strategies  $\Astrat,\Bstrat$ the play\footnote{As the underlying game $G$
will always be clear from the context, our notation does not
mention $G$ explicitly.}
 $\play(\Astrat,\Bstrat,w)\mydef \kappa_0(w)\movest
\kappa_1\movest 
\cdots$ is uniquely determined. If $\play(\Astrat,\Bstrat,w)$ is finite, we denote the word represented by its final configuration by $\word(w,\Astrat,\Bstrat)$.

A strategy $\Astrat$ for \pone is \emph{finite} on string $w$ if the play $\play(\Astrat,\Bstrat,w)$ is 
finite for every strategy $\Bstrat$ of \ptwo. It is a \emph{winning strategy} on
$w$ if \pone wins the play $\play(\Astrat,\Bstrat,w)$, for every $\Bstrat$ of \ptwo. A strategy $\Bstrat$ 
for \ptwo is a \emph{winning strategy} for 
$w$ if \ptwo wins $\play(\Astrat,\Bstrat,w)$, for every strategy $\Astrat$ of
\pone. We only consider finite strategies for \pone, due to \pones  winning condition.
We denote the set of all finite strategies for \pone in the game $G$ by $\strata(G)$, and the set of all strategies for \ptwo by $\stratb(G)$.

\paragraph*{Tilings}

Several of our lower bound proofs use algorithmic problems involving tilings. We will define the individual algorithmic problems in later sections as needed and only give basic common definitions here.

A \emph{tiling of height $m$ and width $\ell$} (or \emph{$m \times \ell$-tiling}) over a \emph{tile set} $U$ with \emph{vertical constraints} $V \subs U \times U$, \emph{horizontal constraints} $H \subs U \times U$, \emph{initial tile} $u_i \in U$ and \emph{final tile} $u_f \in U$ is a mapping $t:[m] \times [\ell] \to U$ such that
\begin{itemize}
\item $(t(i,j),t(i+1,j)) \in V$ for each $i \in [m-1]$ and $j \in [\ell]$,
\item $(t(i,j),t(i,j+1)) \in H$ for each $i \in [m]$ and $j \in [\ell-1]$,
\item $t(1,1) = u_i$, and
\item $t(m,\ell) = u_f$.
\end{itemize}
Intuitively, a tiling arranges $m \cdot \ell$ tiles from $U$ in $m$ rows and $\ell$ columns such that the first row starts with the initial tile, the last row ends with the final tile and horizontally or vertically adjacent tiles match (as per the horizontal and vertical constraints). Accordingly, for a tiling $t$ of width $\ell$ and height $m$, we refer to the string $t(i,1)t(i,2) \cdots t(i,\ell) \in U^*$  as the \emph{$i$-th row} and to $t(1,j)t(2,j) \cdots t(m,j) \in U^*$ as the \emph{$j$-th column} of $t$. 

We will often encode an $m \times \ell$-tiling $t$ as a string $w_t$ of the form $(U^\ell\#)^m$ using a special \emph{line divider symbol} $\# \notin U$, with the interpretation that $t(i,j)$ is the $(i-1) \cdot (\ell+1) + m$-th symbol of $w_t$.
 
		\section*{Proofs and additional results for Section \ref{sec:epsilon}}

The following result is used in several complexity proofs for Section \ref{sec:epsilon}.

\begin{theorem}\label{thm:dnwaptime}
	The emptiness problem for DNWA (given a DNWA $A$, is $L(A) = \emptyset$?) is complete for $\PTIME$ with regard to logspace reductions.
\end{theorem}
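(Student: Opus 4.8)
The plan is to prove the two directions separately: the $\PTIME$ upper bound is a routine fixpoint computation, and the $\PTIME$-hardness is the more delicate part.

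For the upper bound I would compute the \emph{balanced summary relation} $B \subs Q \times Q$, where $(q,q') \in B$ holds iff there is some $w \in \wf$ with $(q,\epsilon) \reach[w]_A (q',\epsilon)$. Since a NWA pushes exactly on opening and pops exactly on closing tags, any balanced computation never inspects the stack below its starting level, so $B$ is well-defined independently of the surrounding context. I would obtain $B$ as the least fixpoint under three rules: (i) $(q,q) \in B$ for every $q$ (the empty word); (ii) if $(q,q') \in B$ and $(q',q'') \in B$ then $(q,q'') \in B$ (concatenation of well-nested words stays well-nested); and (iii) if $(q_1,q_2) \in B$ and $\delta$ contains an opening transition $(q_0,\op{a},q_1,p)$ together with a matching closing transition $(q_2,p,\cl{a},q_3)$, then $(q_0,q_3) \in B$ (wrapping a balanced word in a matching tag pair $\op{a}\cdots\cl{a}$, pushing $p$ and popping it). There are at most $|Q|^2$ pairs, the fixpoint stabilises after polynomially many rounds, and each round is checkable in polynomial time, so $B$ is computable in $\PTIME$. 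Finally $L(A) \neq \emptyset$ iff $(q_0,q_f) \in B$ for some $q_f \in F$, which decides emptiness in $\PTIME$. I note that determinism is not exploited here; this bound already holds for nondeterministic NWA.

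For the lower bound I would give a logspace reduction from a $\PTIME$-complete problem whose closure structure mirrors the summary computation above, for instance the generability problem GEN (path systems): given a finite set $X$, sources $S \subs X$, a ternary relation $\rho \subs X \times X \times X$ and a target $t \in X$, decide whether $t$ lies in the least set containing $S$ and closed under the rule that $x,y \in G$ and $(x,y,z) \in \rho$ imply $z \in G$. A witness that $t \in G$ is a finite \emph{derivation tree} whose leaves are labelled by sources, whose internal nodes are labelled by applicable rules, and whose root generates $t$. I would encode such derivation trees as nested words, explicitly recording at each node both the rule used and the element generated, and construct a DNWA $A$ that deterministically verifies validity: processing each subtree, it carries the element generated by that subtree in the linear state reached at the subtree's closing tag, relays the left child's value across the right child through the hierarchical state popped on return, checks the combination against $\rho$, and otherwise moves to a rejecting sink; it accepts exactly when the whole word encodes a valid derivation tree for $t$. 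Then $t \in G$ iff a valid derivation tree exists iff $L(A) \neq \emptyset$. The transition table of $A$ is logspace-computable and of size polynomial in the GEN instance, and since $\PTIME$ is closed under complement, hardness of non-emptiness transfers to the emptiness problem, giving $\PTIME$-hardness under logspace reductions.

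The main obstacle is keeping the verifying automaton \emph{deterministic} while still capturing the inherently existential content of $t \in G$. The resolution is to push all nondeterministic choices -- which rule to apply at each node, and which child values to combine -- into the \emph{input word} rather than into the transitions: because every generated value and every rule application is written explicitly into the encoding, the run of $A$ on any candidate word is forced, and the genuine search over derivations is recovered precisely by quantifying over which word is accepted. This is exactly the standard correspondence between DNWA and deterministic bottom-up tree automata evaluating the tree encoding; the only care needed is in the linearisation, namely that a node can recover both children's results, which works because a closed subtree's result is available in the linear state at its closing tag and the open-time hierarchical state can carry the left child's value past the right child.
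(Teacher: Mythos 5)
Your proof is correct, and both halves hold up, but it takes a more self-contained route than the paper. The paper settles the theorem almost entirely by citation: the upper bound is quoted from Alur and Madhusudan's work on nested word automata, and \PTIME-hardness is obtained by a reduction from the emptiness problem for deterministic top-down tree automata, whose \PTIME-completeness is in turn cited from Veanes. Your upper bound is essentially the algorithm underlying the cited result --- the balanced-summary fixpoint, which as you note needs no determinism --- just written out explicitly, and your soundness/completeness argument via the inductive structure of $\wf$ is the right one. For the lower bound you go one level deeper than the paper: rather than invoking tree-automaton emptiness as a black box, you reduce directly from the \PTIME-complete path-systems problem GEN, encoding derivation trees as nested words and letting a DNWA verify them bottom-up; this effectively inlines the standard hardness proof for tree-automaton emptiness together with the tree-to-nested-word correspondence (your construction is a deterministic bottom-up evaluation, where the paper routes through deterministic \emph{top-down} tree automata). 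The insight that justifies determinism is the same in both proofs --- every existential choice (which rule is applied, which values the children produce) is written into the input word, so the run is forced --- and your relay trick, carrying the left sibling's value in the hierarchical state across the right subtree, is precisely the ``straightforward (if technical)'' content the paper elides. What your route buys is independence from the cited theorems and an explicit, checkable construction, including the closing observation that hardness of non-emptiness transfers to emptiness since \PTIME\ is closed under complement; what the paper's route buys is brevity.
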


\begin{proof}
	The upper bound was proven in \cite{AlurM09}. The lower bound can be proven by a straightforward (if technical) reduction from the emptiness problem for deterministic top-down tree automata (cf. Section 7.2 in \cite{AlurM09} and Theorem 1 in \cite{Veanes97}).
\end{proof}

In the next subsection, we extend nested word automata by (internal) $\epsilon$-transitions. These $\epsilon$-NWA will be of use in the proofs for section \ref{sec:epsilon}.

\subsection*{Nested Word Automata with $\epsilon$-transitions}

It is well known (see, for instance, \cite{HopcroftMU01}) that extending finite-state automata with $\epsilon$-transitions does not change their expressive power; (nondeterministic) finite-state automata with $\epsilon$-transitions still decide exactly the class of regular languages of flat strings.

Even though nested word automata and the class of regular nested word languages strongly parallel finite-state automata and flat regular languages, a similar investigation has so far not been performed for nested word automata. We now define and examine nested word automata with $\epsilon$-transitions, mainly as a tool for the analysis of nested word transducers with $\epsilon$-transitions.

\begin{definition}
	A \emph{Nested Word Automaton with $\epsilon$-transitions ($\epsilon$-NWA)}  $A = (Q,P,\Sigma,\delta,q_0,F)$ consists of
	\begin{itemize}
		\item a set $Q$ of \emph{linear states},
		\item a set $P$ of \emph{hierarchical states},
		\item an alphabet $\Sigma$,
		\item a \emph{transition relation} $\delta \subs (Q \times \op{\Sigma} \times Q \times P) \cup (Q \times P \times \cl{\Sigma} \times Q) \cup (Q \times \{\epsilon\} \times Q)$, 
		\item an \emph{initial state} $q_0 \in Q$, and
		\item a set of \emph{accepting states} $F \subs Q$.
	\end{itemize}
	As for standard NWA, we also write $(q',p) \in \delta(q, \op{a})$ (resp. $q' \in \delta(q,p,\cl{a})$, $q' \in \delta(q,\epsilon)$) instead of $(q,\op{a},q',p) \in \delta$ (resp. $(q,p,\cl{a},q'), (q,\epsilon,q') \in \delta$).
\end{definition}

Note that $\epsilon$-transitions are always \emph{internal} transitions that merely change the current linear state, not the hierarchical stack; allowing $\epsilon$-transitions to manipulate the stack as well would yield a strictly more expressive automaton model.

The semantics of $\epsilon$-NWA is defined almost exactly like that of NWA, by way of (accepting) runs. The only difference is that for a run $\kappa_0, \ldots, \kappa_n$ of an $\epsilon$-NWA $A = (Q,P,\Sigma,\delta,q_0,F)$ on a nested string $w \in \wf$, it merely holds that $n \geq |w|$, and each configuration $\kappa_i$ is either a successor configuration of $\kappa_{i-1}$ with the next unread symbol of $w$ (as defined in Section \ref{sec:prelim}), or a successor configuration with $\epsilon$, i.e. $\kappa_{i-1} = (q,\alpha)$, $\kappa_i = (q', \alpha)$ and $(q,\epsilon,q') \in \delta$.

The following properties of $\epsilon$-NWA follow easily from the proofs for corresponding properties of NWA in \cite{AlurM09}.

\begin{lemma}\label{lemma:epsnwadeterminisation}
	For each $\epsilon$-NWA $A$, there exists a DNWA $A'$ of size at most exponential in $|A|$ such that $L(A) = L(A')$.
\end{lemma}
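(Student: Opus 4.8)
The plan is to proceed in two steps, mirroring the classical treatment of $\epsilon$-transitions for finite-state automata: first eliminate all $\epsilon$-transitions to obtain an equivalent $\epsilon$-free NWA $A''$ of polynomial size, and then apply the determinisation construction for ordinary NWA from \cite{AlurM09}, whose exponential blow-up gives the claimed bound.

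For the elimination step I would exploit that, by definition, $\epsilon$-transitions are internal: they change only the linear state and leave the hierarchical stack untouched. For each linear state $q$, let $\mathrm{Cl}(q) \subs Q$ denote its $\epsilon$-closure, i.e. the set of all $q'$ reachable from $q$ using $\epsilon$-transitions alone; these closures are computable by transitive closure in polynomial time. I then define $A'' = (Q, P, \Sigma, \delta'', q_0, F'')$ on the same state sets by absorbing $\epsilon$-moves into the left side of every real transition: (i) $(q',p) \in \delta''(q,\op{a})$ iff $(q',p) \in \delta(q_1,\op{a})$ for some $q_1 \in \mathrm{Cl}(q)$; (ii) $q' \in \delta''(q,p,\cl{a})$ iff $q' \in \delta(q_1,p,\cl{a})$ for some $q_1 \in \mathrm{Cl}(q)$; and (iii) $F'' = \{q \in Q \mid \mathrm{Cl}(q) \cap F \neq \emptyset\}$, so that trailing $\epsilon$-moves into an accepting state (and the empty-input case) are captured by the new accepting set.

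Correctness rests on the invariant that $(q_0,\epsilon) \reach[w']_{A''} (q,\alpha)$ holds exactly when $A$ has a run on $w'$ reaching $(q,\alpha)$ whose final move reads the last symbol of $w'$ (or $w' = \epsilon$ and $q = q_0$). The key structural observation making this invariant go through is that, since $\epsilon$-moves never alter $\alpha$, any accepting run of $A$ can be reorganised so that its $\epsilon$-transitions occur only in maximal blocks sitting between consecutive symbol reads (plus a leading and a trailing block); each such block is then absorbed into the left-closure of the following real transition, the leading block into the first transition starting from $q_0$, and the trailing block into $F''$. Granting this, $L(A'') = L(A)$, and applying the NWA determinisation of \cite{AlurM09} to $A''$ yields a DNWA whose state set is of size $2^{O(|Q|^2)}$, hence exponential in $|A|$; composing the polynomial elimination with the exponential determinisation keeps the overall bound exponential.

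The main obstacle is the careful verification of this reorganisation argument and its boundary cases (leading and trailing $\epsilon$-blocks, and the empty word), rather than the determinisation, which is taken off the shelf from \cite{AlurM09}; the point to check is precisely that commuting $\epsilon$-moves past push and pop operations is sound because such moves leave the stack content unchanged.
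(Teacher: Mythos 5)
Your proof is correct, but it takes a genuinely different route from the paper's. The paper does not separate $\epsilon$-elimination from determinisation: it modifies the summary-based subset construction of \cite{AlurM09} directly, closing the second component of each summary pair $(q,q')$ under $\epsilon$-reachability, and otherwise reuses that construction and its correctness proof verbatim. You instead factor the argument into two independent steps: a polynomial-size $\epsilon$-elimination that absorbs $\epsilon$-closures into the source side of every reading transition and into the accepting set (sound precisely because the paper's $\epsilon$-transitions are internal and never touch the hierarchical stack, so the stack contents of corresponding runs of $A$ and $A''$ coincide), followed by the unmodified determinisation of \cite{AlurM09}. Both arguments hinge on the same object ($\epsilon$-closures of linear states) and give the same bound, since your intermediate automaton $A''$ keeps the linear state set $Q$, so determinisation still yields $2^{\bigO(|Q|^2)}$ states. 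What your decomposition buys is modularity and an intermediate statement of independent interest that the paper never records --- $\epsilon$-NWA and NWA are expressively equivalent with only polynomial blow-up at the nondeterministic level --- whereas the paper's route avoids having to state and verify a separate elimination lemma by piggybacking on the existing summary construction. One small remark: the ``reorganisation'' of runs that you flag as the main obstacle is vacuous. In any run of an $\epsilon$-NWA the $\epsilon$-moves already sit in maximal blocks between consecutive symbol reads (a run is just a sequence of moves, each either reading a symbol or internal), and since they never modify the stack there is nothing to commute past pushes or pops; your invariant holds directly by induction on the number of symbols read.
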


\begin{proof}
	(Sketch) The proof of this statement for NWA without $\epsilon$-transitions in Theorem 3.3 of \cite{AlurM09} uses a modified subset construction, where states of the exponential-sized DNWA $A'$ correspond to sets of pairs of states of $A$ (so-called \emph{summaries}) such that if a (not necessarily well-nested) string $w$ induces a (partial) run from the initial state of $A'$ to some summary state $S \in \Pot(Q \times Q)$, then there are (partial) runs with $w$ from each $q \in Q$ to all $q' \in \{q' \mid (q,q') \in S\}$ in $A'$.
	
	To account for $\epsilon$-transitions, we modify these summaries to include \emph{$\epsilon$-closures} of target states, i.e. for each pair $(q,q')$ of states contained in a summary $S$ as constructed in \cite{AlurM09}, we add to $S$ all pairs $(q, q'')$, where $q''$ is reachable from $q$ by a series of $\epsilon$-transitions in $A$. Otherwise, the construction (and correctness proof) is the same as in \cite{AlurM09}.
\end{proof}

\begin{lemma}\label{lemma:epsnwabool}
	For all $\epsilon$-NWA $A_1$ and $A_2$, it is possible to construct in polynomial time $\epsilon$-NWA deciding $L(A_1) \cup L(A_2)$ and \mbox{$L(A_1) \cap L(A_2)$}.
\end{lemma}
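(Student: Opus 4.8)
The plan is to adapt the standard $\epsilon$-NFA constructions for union and intersection to the nested setting, exploiting the fact that the $\epsilon$-transitions of an $\epsilon$-NWA are purely internal and hence leave the hierarchical stack untouched. Throughout I assume that the state sets of $A_1 = (Q_1, P_1, \Sigma, \delta_1, q_0^1, F_1)$ and $A_2 = (Q_2, P_2, \Sigma, \delta_2, q_0^2, F_2)$ are disjoint. For union, I would introduce a fresh initial linear state $q_0$ and build the $\epsilon$-NWA with linear states $Q_1 \cup Q_2 \cup \{q_0\}$, hierarchical states $P_1 \cup P_2$, accepting states $F_1 \cup F_2$, and transition relation $\delta_1 \cup \delta_2$ together with the two stack-preserving $\epsilon$-transitions $q_0^1 \in \delta(q_0, \epsilon)$ and $q_0^2 \in \delta(q_0, \epsilon)$. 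The first step of any run commits, via one of these $\epsilon$-transitions, to one of the two machines; since the hierarchical state sets are disjoint, the stack thereafter contains only symbols of the chosen machine, so an accepting run ending with empty stack is exactly an accepting run of $A_1$ or of $A_2$. This decides $L(A_1) \cup L(A_2)$ and is of linear size.

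For intersection, I would use a product construction with linear states $Q_1 \times Q_2$, hierarchical states $P_1 \times P_2$, initial state $(q_0^1, q_0^2)$, and accepting states $F_1 \times F_2$. On a real opening tag $\op{a}$ both components move and push simultaneously, i.e.\ $((q_1',q_2'),(p_1,p_2)) \in \delta((q_1,q_2),\op{a})$ iff $(q_1',p_1) \in \delta_1(q_1,\op{a})$ and $(q_2',p_2) \in \delta_2(q_2,\op{a})$, and symmetrically on a closing tag $\cl{a}$ both components pop the paired hierarchical symbol $(p_1,p_2)$. The $\epsilon$-transitions are interleaved: $(q_1',q_2) \in \delta((q_1,q_2),\epsilon)$ whenever $q_1' \in \delta_1(q_1,\epsilon)$, and $(q_1,q_2') \in \delta((q_1,q_2),\epsilon)$ whenever $q_2' \in \delta_2(q_2,\epsilon)$. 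This construction is quadratic in size, hence polynomial.

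The main obstacle is the correctness of the intersection construction, specifically showing that interleaving the two machines' $\epsilon$-steps does not disturb the lockstep behaviour of the stacks on real tags. I would argue by projection: any run of the product projects onto its first (resp.\ second) component by deleting the steps that move only the other component, yielding a run of $A_1$ (resp.\ $A_2$) on the same $w \in \wf$, and the product stack is exactly the componentwise pairing of the two projected stacks; hence it is empty precisely when both projected stacks are empty. This relies on the facts that real tags are consumed synchronously (both components read the same symbol, while $\epsilon$-steps consume none) and that popping a pair $(p_1,p_2)$ forces each component to pop exactly the symbol it pushed at the matching opening tag. Conversely, given accepting runs of $A_1$ and $A_2$ on $w$, I would merge them into an accepting product run by processing the input tags jointly and scheduling each component's $\epsilon$-steps in the gaps between them, which is always possible because $\epsilon$-steps are stack-preserving and the two runs agree on the positions of the input tags. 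Together these projection and merging arguments give $L(\text{product}) = L(A_1) \cap L(A_2)$.
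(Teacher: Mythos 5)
Your proposal is correct, and for intersection it coincides with the paper's proof: the paper also uses the product construction from Alur--Madhusudan (linear states $Q_1 \times Q_2$, hierarchical states $P_1 \times P_2$) extended by exactly the interleaved, unsynchronised $\epsilon$-transitions you describe, in which each $\epsilon$-step of the product simulates an $\epsilon$-step of only one component. Your projection/merging correctness argument fills in details the paper's sketch leaves implicit, and it hinges on the same key observation (internal $\epsilon$-transitions never touch the stack, so they can be scheduled freely between synchronised tag-reading steps). Where you genuinely differ is the union: the paper handles union and intersection uniformly with the same product automaton, just varying the accepting set, whereas you use the classical $\epsilon$-NFA trick of a fresh initial state with two stack-preserving $\epsilon$-transitions into $A_1$ and $A_2$. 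Your union construction is more elementary and only of linear size (versus quadratic for the product), but it is available only because the model already has internal $\epsilon$-transitions and because disjointness of the hierarchical state sets keeps the two machines from sharing stack symbols; the paper's uniform product has the advantage of needing one construction and one correctness argument for both operations. Both routes are polynomial and both are valid proofs of the lemma.
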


\begin{proof}
	(Sketch) This proof, like the one for Theorem 3.5 in \cite{AlurM09}, uses a standard product construction simulating $A_1$ and $A_2$ simultaneously on the input. The product automaton $A' = (Q_1 \times Q_2,P_1 \times P_2,\Sigma,\delta',(q_{0,1},q_{0,2}),F')$ is constructed as in \cite{AlurM09} and simply extended by $\epsilon$-transitions. Note that, unlike reading transitions, $\epsilon$-transitions do \emph{not} have to be synchronised between the two automata, i.e. an $\epsilon$-transition of $A'$ simulates an $\epsilon$-transition of only \emph{one} of the component automata $A_1$ or $A_2$. The transition relation $\delta'$ of $A'$ is therefore extended by the sets $\{((q_1, q_2),\epsilon, (q'_1, q_2)) \mid (q_1, \eps, q'_1) \in \delta_1, q_2 \in Q_2\}$ and $\{((q_1, q_2),\epsilon, (q_1, q'_2)) \mid (q_2, \eps, q'_2) \in \delta_2, q_1 \in Q_1\}$.
\end{proof}

\begin{theorem}\label{thm:epsnwacomplexity}

	\begin{enumerate}[(a)]
	\item The membership and emptiness problem for $\epsilon$-NWA are in $\PTIME$.
	\item The universality, equivalence and inclusion problem for $\epsilon$-NWA are $\EXPTIME$-complete.
	\item Deciding, given an $\epsilon$-NWA $A$ and a DNWA $B$, whether $L(A) \subs L(B)$ is $\PTIME$-complete with respect to logspace reductions.
	\end{enumerate}
\end{theorem}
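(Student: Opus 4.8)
The plan is to prove the three parts separately, reusing the determinization of Lemma \ref{lemma:epsnwadeterminisation}, the polynomial product construction of Lemma \ref{lemma:epsnwabool}, and the \PTIME-completeness of DNWA emptiness from Theorem \ref{thm:dnwaptime}. For part (a), I would first settle emptiness by a least-fixpoint computation of the \emph{balanced-reachability} relation $R \subs Q \times Q$, where $(q,q') \in R$ means that $A$ has a run from $q$ to $q'$ on some well-nested string that leaves the hierarchical stack unchanged. I seed $R$ with the reflexive pairs together with the internal $\epsilon$-edges $(q,q')$ whenever $(q,\epsilon,q') \in \delta$, and close it under concatenation ($(q,q_1),(q_1,q_2) \in R \Rightarrow (q,q_2) \in R$) and under matched tag pairs (whenever $(q,\op{a},q_1,p) \in \delta$, $(q_1,q_2) \in R$ and $(q_2,p,\cl{a},q_3) \in \delta$, add $(q,q_3)$). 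This is exactly the computation used for ordinary NWA, with $\epsilon$-transitions contributing extra internal edges; it runs in polynomial time since $|R| \le |Q|^2$, and $L(A) \ne \emptyset$ iff $(q_0,q_f) \in R$ for some $q_f \in F$. For membership I would reduce to emptiness: for fixed input $w$, build in polynomial time a deterministic straight-line automaton $A_w$ with $L(A_w) = \{w\}$, form the product $\epsilon$-NWA for $L(A) \cap L(A_w)$ via Lemma \ref{lemma:epsnwabool}, and test it for emptiness.

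For the upper bounds in part (b), the tools are the exponential determinization of Lemma \ref{lemma:epsnwadeterminisation} together with the polynomial-time emptiness test just described. To decide inclusion $L(A_1) \subs L(A_2)$, I determinize $A_2$ into a DNWA $A_2'$ of exponential size, complement it, intersect the result with $A_1$ via Lemma \ref{lemma:epsnwabool}, and check emptiness of the resulting $\epsilon$-NWA. Complementation of a DNWA is sound and cheap here because on a well-nested input every run ends with empty stack, so acceptance depends only on the final linear state and complementation amounts to replacing $F$ by $Q \setminus F$. The product has exponential size and emptiness is polynomial in it, giving \EXPTIME overall. Universality is the special case $L(A_1) = \wf$, and equivalence is two inclusion tests.

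The \EXPTIME lower bounds need not be reproved: universality, equivalence and inclusion are already \EXPTIME-hard for ordinary NWA (\cite{AlurM09}), and since NWA are a syntactic special case of $\epsilon$-NWA, the hardness transfers verbatim. For part (c), the asymmetry that only $B$ is deterministic is precisely what keeps us in polynomial time. I use $L(A) \subs L(B)$ iff $L(A) \cap \overline{L(B)} = \emptyset$; because $B$ is a DNWA I can complement it in polynomial time (again by swapping accepting states), view $\overline{B}$ as an $\epsilon$-NWA, take the polynomial-size product with $A$ via Lemma \ref{lemma:epsnwabool}, and apply the polynomial emptiness test of part (a). For \PTIME-hardness I reduce from DNWA emptiness, which is \PTIME-hard by Theorem \ref{thm:dnwaptime}: given a DNWA $C$, set $A := C$ (a DNWA is an $\epsilon$-NWA) and let $B$ be a fixed DNWA with $L(B) = \emptyset$; then $L(A) \subs L(B)$ iff $L(C) = \emptyset$, and the transformation is computable in logarithmic space.

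The one point requiring genuine care is the complexity bookkeeping: identifying where the exponential blow-up is forced versus where it must be avoided. In part (b) the blow-up is unavoidable because a nondeterministic automaton must be determinized, whereas in part (c) the determinism of $B$ lets complementation stay polynomial, so the entire procedure collapses to the polynomial emptiness test of part (a). Verifying that DNWA complementation-by-swapping is sound on well-nested inputs (where runs always terminate with empty stack), and that each product construction stays within the claimed size, is the main obstacle; everything else is an adaptation of the corresponding NWA arguments of \cite{AlurM09} with $\epsilon$-transitions treated as additional internal edges.
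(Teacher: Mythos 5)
Your proposal is correct, and its overall skeleton coincides with the paper's: for (b), determinize via Lemma \ref{lemma:epsnwadeterminisation}, complement the resulting DNWA by swapping accepting states, intersect via Lemma \ref{lemma:epsnwabool}, and test emptiness, with the lower bounds transferred verbatim from NWA; for the (c) upper bound, complement the deterministic $B$ in polynomial time and reduce to the polynomial emptiness test. Two parts genuinely differ in route. For (a), the paper disposes of membership and emptiness in one line by viewing $\epsilon$-NWA as pushdown automata and citing the known \PTIME{} bounds for PDAs; you instead spell out the saturation computation of the balanced-reachability relation (with $\epsilon$-transitions as extra internal edges) and reduce membership to emptiness via a product with a single-word automaton. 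Your argument is more self-contained -- it is essentially the algorithm underlying the PDA citation -- at the cost of redoing standard work. For the (c) lower bound, the paper reduces DNWA emptiness to inclusion by constructing an $\epsilon$-NWA $A$ deciding $\wf$ and a DNWA $B'$ deciding the complement of the given automaton, so that $L(A) \subs L(B')$ holds iff the given language is empty; you take $A$ to be the given DNWA itself and $B$ a fixed automaton with $L(B) = \emptyset$, using $L(A) \subs \emptyset$ iff $L(A) = \emptyset$. Your reduction is simpler, and it is equally valid as a logspace reduction, so nothing is lost. The one shared obligation both proofs lean on -- that complementing a DNWA relative to $\wf$ by swapping accepting states is sound because a complete deterministic automaton has exactly one run, ending with empty stack, on every well-nested input -- is addressed explicitly in your write-up, whereas the paper delegates it to the complementation result of \cite{AlurM09}.
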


\begin{proof}
	These complexity properties mostly follow from the corresponding results for NWA without $\epsilon$-transitions, proven in \cite{AlurM09} (Proposition 6.1 and Theorem 6.2). 
	
	For (a), $\epsilon$-NWA may also be interpreted as pushdown automata, whose membership and emptiness problem are decidable in polynomial time.
	
	Lower bounds for (b) directly follow from the corresponding lower bounds for NWA without $\epsilon$-transitions. Corresponding upper bounds can be proven using Lemmas \ref{lemma:epsnwadeterminisation} and \ref{lemma:epsnwabool} as well as the complement construction for DNWA (\cite{AlurM09}, Theorem 3.5) and set-theoretic formulations for universality (is $\overline{L(A)} = \emptyset$?), equivalence (is $(\overline{L(A_1)} \cap L(A_2)) \cup (L(A_1) \cap \overline{L(A_2)}) = \emptyset$?) and inclusion (is $L(A_1) \cap \overline{L(A_2)} = \emptyset$?).
	
	The upper bound in (c) also follows from the fact that DNWA can be efficiently complemented and the fact that $L(A) \subs L(B)$ holds if and only if $L(A) \cap \overline{L(B)} = \emptyset$. By Lemma \ref{lemma:epsnwabool} and part (a) of this theorem, this can be checked in polynomial time.
	
	The lower bound in (c) is proven by reduction from the emptiness problem for DNWA (cf. Theorem \ref{thm:dnwaptime}). Let $B$ be a DNWA to be checked for emptiness. We can construct in logarithmic space an $\epsilon$-NWA $A$ deciding $\wf$, and (since $B$ is deterministic) a DNWA $B'$ deciding the complement of $L(B)$. It then holds that $L(A) \subs L(B')$ if and only if $L(B') = \wf$, which is the case if and only if $L(B)$ is empty.
\end{proof}

\subsection*{Proofs for Section \ref{sec:epsilon}}

We begin by defining the formal semantics of NWT.

\begin{definition}
	Let $T = (Q,P,P_{\epsilon},\Sigma,\delta,q_0,F)$ be an NWT. A \emph{configuration} $\kappa = (q,\alpha)$ consists of a linear state $q \in Q$ and a stack $\alpha \in P^*$ of hierarchical states. 

	For any string $w \in \wf$, 	an \emph{$\epsilon$-extension of $w$} is a string $\hat{w}$ obtained by inserting symbols $\op{\epsilon}, \cl{\epsilon}$ and $\epsilon$ into $w$ such that the maximal subsequence of $\hat{w}$ consisting only of symbols from $\hat{\Sigma} \cup \{\op{\epsilon},\cl{\epsilon}\}$ is a well-nested word over $\Sigma \cup \{\epsilon\}$.
	
	A \emph{run} of $T$ on an $\epsilon$-extension $\hat{w} = \hat{w}_1 \ldots \hat{w}_n$ of a string $w \in \wf$ starting at configuration $(r_0, \alpha_0)$ is a finite sequence $\rho = (r_0, \alpha_0 )(r_1,\alpha_1) \ldots (r_n, \alpha_n)$ such that for each $i \in [n]$, one of the following holds:
	\begin{itemize}
		\item $\hat{w}_i = \op{a}$ for some $a \in \Sigma \cup \{\epsilon\}$, $(r_{i},p_i,u_i) \in \delta(r_{i-1},\op{a})$ and $\alpha_i = \alpha_{i-1} p$,
		\item $\hat{w}_i = \epsilon$, $(r_i,u_i) \in \delta(r_{i-1},\epsilon)$ and $\alpha_i = \alpha_{i-1}$, or
		\item $\hat{w}_i = \cl{a}$ for some $a \in \Sigma \cup \{\epsilon\}$, $(r_{i},u_i) \in \delta(r_{i-1},p,\cl{a})$ and $\alpha_i p = \alpha_{i-1}$.
	\end{itemize}
	The run $\rho$ is \emph{accepting} if $r_0=q_0$, $r_n \in F$, and $\alpha_0 = \alpha_n = \epsilon$; in this case, the string $u_1 u_2 \ldots u_n$ is considered the \emph{output} of $T$ on $w$ according to $\rho$.\footnote{Note that the $\epsilon$-extension $\hat{w}$ on which the output $u_1 u_2 \ldots u_n$ is produced is already implicit in the run $\rho$, so we do not specify it explicitly.}
	
\end{definition}

Note that the semantics for NWT directly carry over to $\epsilon$-free NWT, with the only $\epsilon$-extension of any $w \in \Dom(T)$ on which there are accepting runs of $T$ being $w$ itself.

\begin{restate}{Lemma \ref{lemma:normalform}}
	Each NWT $T = (Q,P,P_\epsilon, \Sigma, \delta, q_0, F)$ can be transformed in polynomial time into an NWT $T' = (Q',P',P'_\epsilon, \Sigma, \delta', q_0, F)$ with $T(w) = T'(w)$ for each $w \in \wf$, such that for any transition in $\delta'$ with output $u$, it holds that $|u| \leq 1$.
	
	We say that a NWT of this shape is in \emph{normal form}.
\end{restate}

\begin{proof}
	An arbitrary NWT $T$ is transformed into an NWT $T'$ in normal form by successively replacing each transition that is not of the required form by a sequence of new states and transitions. We only describe this procedure for an opening transition; closing and internal transitions can be handled in a similar manner.
	
	Assume for some $q \in Q$ and $a \in \Sigma$ that $(q',p',v) \in \delta(q, \op{a})$ with $v = v_1 \cdots v_n \in \hat{\Sigma}^*$ for some $n>1$. We add new states $q_1, \ldots, q_{n-1}$ to $Q$ and $p_{p',a}$ to $P_\epsilon$. Let $k \leq n$ be the position of the last unmatched opening tag in $v$, i.e. $v_1 \cdots v_{k-1} \in \hat{\Sigma}^*$, $v_k \in \op{\Sigma}$ and $v_{k+1} \cdots v_n \in \wf$. We add a transition $(q_1,p_{p',a},v_1)$ to $\delta(q,\op{\epsilon})$\footnote{Note that $v_1\in \op{\Sigma}$ due to well-formedness.}, a transition $(q_{k+1}, p', v_k)$ to $\delta(q_k,\op{a})$ and, for each $i \in [n-1]$ with $i \neq k$, a transition $(q_{i+1}, p_{p',a}, v_i)$ to $\delta(q_i, \op{\epsilon})$ if $v_i \in \op{\Sigma}$, or $(q_{i+1},v_i)$ to $\delta(q_i, p_{p',a}, \cl{\epsilon})$ if $v_i \in \cl{\Sigma}$, identifying $q_{n}$ with $q'$. Finally, we remove the original transition. This takes care of (reading or $\epsilon$-)transitions producing more than one output symbol.
		
	The resulting NWT $T'$ is obviously in normal form. Its equivalence to $T$ is relatively simple (if tedious) to prove by an induction over the structure of input strings, with the main argument using the fact that $T$ fulfils the $\epsilon$-consistency and well-formedness condition. Notably, these conditions also justify the above simplification that newly added $\epsilon$ transitions obtained from a transition reading $a$ and pushing the hierarchical state $p'$ use solely the new hierarchical $\epsilon$-state $p_{p',a}$.
\end{proof}

\begin{restate}{Lemma \ref{lemma:nondeleting}}
	Any context-free game $G = (\Sigma, \funcsymb, R, T)$ with NWT $R$ can be transformed in polynomial time into a game $G' = (\Sigma', \funcsymb, R', T')$ such that $R'$ is non-deleting and it holds that $\safelr(G') \cap \wf = \safelr(G)$. 
\end{restate}

\begin{proof}(Sketch)
	The idea behind this proof is modifying $R$ into $R'$ in such a way that, whenever $R$ would delete some tag $\op{a}$ (or $\cl{a}$), $R'$ instead replaces tag by a special ``strike-out'' version $\op{\text{\sout{a}}}$ (or $\cl{\text{\sout{a}}}$) with $\text{\sout{a}} \notin \Sigma$, instead; we therefore set $\Sigma' = \Sigma \uplus \{\text{\sout{a}} \mid a \in \Sigma \}$. To ensure that iterated transductions respect deleted tags, we add transitions to each state of $R'$ that only replace ``strike-out'' tags by themselves without changing the state of $R'$. Finally, we similarly modify the target DNWA for $T$ in order to ignore all tags with labels not in $\Sigma$ (i.e. only check them for their nesting structure without changing states). Since \pone may only play \Call on symbols from  $\Sigma$, \ptwo is unable to rewrite ``strike-out'' symbols, and the DNWA for $T'$ ignores symbols outside of $\Sigma$, it is clear that \pone has a winning strategy on any string $w \in \wf$ in $G'$ if and only if she has a winning strategy on $w$ in $G$.
\end{proof}

\begin{restate}{Proposition \ref{prop:nwtcomposition}}
	Let $T_1$, $T_2$ be non-deleting NWT. Then, there exists a non-deleting NWT $T$ such that for all $w \in \wf$, it holds that $T(w) = (T_2 \circ T_1)(w) \mydef T_2(T_1(w))$. This NWT $T$ can be computed from $T_1$ and $T_2$ in polynomial time and is of size $\bigO(|T_1| \cdot |T_2|)$.
\end{restate}

\begin{proof}
	The basic idea behind this construction is simple: The transducer $T$ simulates $T_1$ on its input and directly feeds the output of $T_1$ into $T_2$. This is done by a sort of product construction which is, for the most part, quite straightforward. However, there are some subtleties that need to be addressed, which stem from the fact that both $T_1$ and $T_2$ are NWT, so the various possibilities how one of the two transducers might perform a step while the other is idle have to be dealt with. We note that this proof could be extended to general (not necessarily non-deleting) NWT, but since the proof details are quite technical already, we restrict our attention to non-deleting NWT here.
	
	Let $T_1 = (Q^1,P^1,P_{\epsilon}^1,\Sigma,\delta^1,q^1_0,F^1)$ and $T_1 = (Q^2,P^2,P_{\epsilon}^2,\Sigma,\delta^2,q^2_0,F^2)$ be two NWT in normal form. We construct from $T_1$ and $T_2$ the NWT $T = (Q,P,P_{\epsilon},\Sigma,\delta,q_0,F)$ as follows:
	\begin{itemize}
	\item $Q = Q^1 \times Q^2$,
	\item $P = (P^1 \cup \{\Null\}) \times P^2 $ for a special symbol $\Null$ not used in $T_1$ or $T_2$,
	\item $P_\epsilon = (P^1_\epsilon \times P^2) \cup (\{\Null\} \times P^2_\epsilon)$
	\item $q_0 = (q_0^1, q_0^2)$,
	\item $F  = F^1 \times F^2$, and
	\item $\delta$ is constructed as detailed below.
	\end{itemize}
	
	The construction of $\delta$ is quite straightforward in the case where $T_1$ produces an output on which $T_2$ is simulated: If $(q_1', p_1', \op{b}) \in \delta^1(q_1, \op{a})$ (or $(q_1', p_1', \op{b}) \in \delta^1(q_1, \op{\epsilon})$, respectively) and $(q_2', p_2', \op{c}) \in \delta^2(q_2, \op{b})$, then $((q_1', q_2'), (p_1', p_2'), \op{c}) \in \delta((q_1,q_2), \op{a})$ (or $((q_1', q_2'), (p_1', p_2'), \op{c}) \in \delta((q_1,q_2), \op{\epsilon})$, respectively), and analogously for the corresponding closing ($\epsilon$-) transitions.
	
	Internal $\epsilon$-transitions for $T_1$ and $T_2$ are also easily handled, as for these transitions, neither transducer consumes an input or produces an output. If $(q_1', \epsilon) \in \delta^1(q_1, \epsilon)$, then $((q_1', q_2), \epsilon) \in \delta((q_1,q_2), \epsilon)$ for each $q_2 \in Q^2$, and if $(q_2', \epsilon) \in \delta^2(q_2, \epsilon)$, then $((q_1, q'_2), \epsilon) \in \delta((q_1,q_2), \epsilon)$ for each $q_1 \in Q^1$.
	
	The only case requiring special attention is the one where only one of the two transducers modifies the input. This happens when 
	$T_2$ produces an output symbol by an $\epsilon$-transition, without $T_1$ consuming an input symbol. In this case, $T_2$ produces a hierarchical state while $T_1$ doesn't; to this end, some states in the set $P_\epsilon$ contain a component $\Null$ to indicate a ``null transition'' for $T_1$. 
	
	More formally, 
	if for some $q_2, q'_2 \in Q^2$, $p'_2 \in P^2$ and $a \in \Sigma$, $(q'_2, p'_2, \op{a}) \in \delta^2(q_2, \op{\epsilon})$, then $((q_1, q'_2), (\Null^1, p'_2), \op{a}) \in \delta((q_1,q_2), \op{\epsilon})$ for all $q_1 \in Q^1$ (and analogously for closing transitions).
	
	It follows directly from the construction that $T$ is $\epsilon$-consistent; the well-formedness condition for $T$ follows from well-formedness of $T_1$ and $T_2$ by a simple but lengthy case distinction over all the sorts of transitions introduced here. Furthermore, since $T$ is obviously in normal form, it automatically fulfils synchronisation. It remains to be proven that $T(w) = T_2(T_1(w))$ indeed holds for all $w \in \wf$.
	
	To this end, let $w_2 \in T_2(T_1(w))$; let further $\rho_1$ be an accepting run of $T_1$ with output $w_1$ on an $\epsilon$-extension $\hat{w}$ of $w$, and let $\rho_2$ be an accepting run of $T_2$ with output $w_2$ on an $\epsilon$-extension $\hat{w}_1$ of $w_1$. We construct an $\epsilon$-extension $\hat{w}'$ of $w$ and an accepting run $\rho$ of $T$ on $\hat{w}'$ with output $w_2$.
	
	We denote all positions of $\hat{w}$ in which $T_1$ outputs some symbol as \emph{1-producing}. 	Note that there is a bijective correspondence between positions of $w_1$ and 1-producing positions of $\hat{w}$ and that all $\op{\epsilon}$- and $\cl{\epsilon}$-positions of $\hat{w}$ are 1-producing. Next, we examine all $\op{\epsilon}$- and $\cl{\epsilon}$-positions of $\hat{w_1}$; these, we call \emph{2-producing}. Due to the well-formedness and $\epsilon$-consistency restrictions on $T_1$ and $T_2$, it is possible to insert all $\op{\epsilon}$-, $\cl{\epsilon}$- and $\epsilon$-positions of $\hat{w}_1$ into $\hat{w}$ in such a way that we obtain an $\epsilon$-extension $\hat{w}'$ of $w$ that has both $\hat{w}$ and $\hat{w}_1$ as subsequences. 
	
	It is now easy to see that an accepting run $\rho$ of $T$ on $\hat{w}'$ with output $w_2$ can be obtained by combining the transitions used in $\rho_1$ and $\rho_2$ -- positions in $\hat{w}'$ that are 1-producing correspond to ``standard'' transitions of $T$, $\epsilon$-positions correspond to internal $\epsilon$-transitions, 
	and 2-producing positions correspond to $\epsilon$-transitions with a $\Null$ component in their hierarchical state. This shows that $T_2(T_1(w)) \subseteq T(w)$.
	
	For the other direction, let $\rho$ be a run of $T$ on an $\epsilon$-extension $\hat{w}'$ of $w$ with output $w_2$. We label the positions of $\hat{w}'$ according to the transitions taken by $T$ in $\rho$ -- positions where ``standard'' transitions are used are labelled as 1-producing, $\epsilon$-positions are labelled as internal,
	and positions with transitions whose hierarchical stack contains a $\Null$ component are labelled as 2-producing. Similar to the previous part of the proof, we can then use these labels to separate $\rho$ into an accepting run $\rho_1$ of $T_1$ on an $\epsilon$-extension $\hat{w}$ of $w$ with output $w_1$ and an accepting run $\rho_2$ of $T_2$ on an $\epsilon$-extension $\hat{w}_1$ of $w_1$ with output $w_2$, thus proving that $w_2 \in T_2(T_1(w))$ holds.
\end{proof}

\begin{restate}{Corollary \ref{cor:wellformeddomain}}
	Let $T$ be a non-deleting NWT and $A$ a NWA over alphabet $\Sigma$. Then, there exists a non-deleting NWT $T'$ of size $\bigO(|T| \cdot |A|)$ such that $\Dom(T') = \Dom(T) \cap L(A)$ and $T'(w) = T(w)$ for each $w \in \Dom(T) \cap L(A)$.
\end{restate}

\begin{proof}
	Let $T_{A}$ be a NWT with $\Dom(T_{A}) = L(A)$ and $T_{A}(w) = \{w\}$ for each $w \in \wf$, i.e. $T_{A}$ accepts exactly the strings in $L(A)$ and simply outputs its input string. Such a NWT is easy to construct with a size in $\bigO(|A|)$. 
	
	Then, we set $T'$ to be the NWT for $T \circ T_{A}$ as constructed in Proposition \ref{prop:nwtcomposition}. For each $w \in L(A) \cap \Dom(T)$, it holds that $T'(w) = T(T_{A}(w)) = T(w)$, for each $w \notin L(A)$, $T'(w) = T(\emptyset) = \emptyset$, and for each $w \in L(A) \setminus \Dom(T)$, it holds that $T'(w) = T(w) = \emptyset$. This proves that $\Dom(T') = \Dom(T) \cap L(A)$ and $T'(w) = T(w)$ for each $w \in \Dom(T) \cap L(A)$. The desired bound on the size of $T'$ follows directly from Proposition \ref{prop:nwtcomposition}.
\end{proof}

\begin{restate}{Lemma \ref{lemma:regularrange}}
	Let $T$ be a non-deleting NWT with $\Dom(T) \subs \wf$. Then $\Rng(T)$ is a regular language of nested words.
\end{restate}

\begin{proof}
	Let $T = (Q,P,P_\epsilon, \Sigma, \delta, q_0, F)$ be a non-deleting NWT in normal form. The basic idea behind constructing the $\epsilon$-NWA $A$ for $\Rng(T)$ is taking the input string for $A$ and verifying it against the output component of $T$. This way, it is easy to ensure that every string in $\Rng(T)$ is accepted by $A$; some care has to be taken, however, to make certain that for every string $w'$ accepted by $A$, there is a string $w \in \wf$ such that $w' \in T(w)$.
	
	Since $T$ is in normal form, all opening transitions in $\delta$ are of the form $(q,\op{a},q',p,\op{b})$ and all closing transitions of the form $(q,p,\cl{a},q',\cl{b})$ (for $a \in \Sigma \cup \{\epsilon\}$ and $b \in \Sigma$) while internal transitions are of the form $(q,\epsilon,q',\epsilon)$.
	
	The linear state set of $A$ is $Q$ and its set of hierarchical states is $P \times (\Sigma \cup \{\epsilon\})$; its starting and accepting states are those of $T$. Reading transitions in $A$ are constructed from those of $T$ by taking, for each opening transition $(q,\op{a},q',p,\op{b}) \in \delta$, a transition $(q,\op{b},q',(p,a))$, and for each closing transition $(q,p,\cl{a},q',\cl{b}) \in \delta$, a transition $(q,(p,a),\cl{b},q')$. Each internal transition $(q,\epsilon,q',\epsilon) \in \delta$ translates to an $\epsilon$-transition $(q,\epsilon,q')$ in $A$.
	
	To prove correctness of this construction, we need to show that for any string $w' \in \wf$ it holds that $w' \in L(A)$ if and only if $w' \in \Rng(T)$.
	
	For the ``if'' direction, if $w' \in \Rng(T)$, then there is some $w \in \wf$ such that $T$ has an accepting run $\rho$ on some $\epsilon$-extension of $w$ with output $w'$. Translating the transitions of $T$ taken in $\rho$ into transitions of $A$ as per the above construction naturally yields an accepting run of $A$ on $w'$.
	
	For the ``only if'' direction, assume that $w' \in L(A)$ for some $w' \in \wf$. Then, there is an accepting run $\rho$ of $A$ on $w'$. We can construct from $\rho$ an $\epsilon$-extension $\hat{w}$ of a string $w \in \wf$ by extracting the sequence of hierarchical components of transitions in $\rho$ -- every opening (closing) transition in $\rho$ with hierarchical component $(p,a)$ with some $p \in P$, $a \in \Sigma \cup \{\epsilon\}$ corresponds to a symbol $\op{a}$ ($\cl{a}$) in $\hat{w}$, and every $\epsilon$-transition taken in $\rho$ corresponds to a symbol $\epsilon$ in $\hat{w}$. It is clear to see that indeed $\hat{w}$ is an $\epsilon$-extension of some string $w \in \wf$ and that there is an accepting run of $T$ on $\hat{w}$ that outputs $w'$, which yields $w' \in \Rng(T)$ as was to be proven.
\end{proof}

\begin{restate}{Corollary \ref{cor:regulartransduction}}
	Regular nested word languages are closed under transduction by non-deleting NWT, i.e. if $L \subs \wf$ is regular and $T$ a NWT, then $T(L)$ is regular. 
\end{restate}

\begin{proof}
	Let $A$ be a NWA for $L$ and $T$ a NWT. From $A$, we can construct by Corollary \ref{cor:wellformeddomain} a NWT $T_A$ with $\Dom(T_A) = L \cap \Dom(T)$ and $T_A(w) = T(w)$ for each $w \in \Dom(T_A)$, which implies that $\Rng(T_A) = T(L)$. By Lemma \ref{lemma:regularrange}, this implies that $T(L)$ is regular as well.
\end{proof}

\begin{restate}{Theorem \ref{thm:nwtmembership}}
	The membership problem for non-deleting NWT (Given a non-deleting NWT $T$ and strings $w,u \in \wf$, is $u \in T(w)$?) is in $\PTIME$.
\end{restate}

\begin{proof}
	From $w$, we can easily compute a NWA $A$ of size $\bigO(w)$ with $L(A) = \{w\}$. By Corollary \ref{cor:wellformeddomain}, we can compute from $A$ and $T$ in polynomial time a non-deleting NWT $T_w$ with $\Dom(T_w) = \Dom(T) \cap \{w\}$ and $\Rng(T_w) = T(w)$. By Corollary \ref{cor:regulartransduction}, $T(w)$ is regular, and by Lemma \ref{lemma:regularrange} an $\epsilon$-NWA $A'$ for $T(w)$ can be computed in polynomial time. This $\epsilon$-NWA is of polynomial size, and checking $u$ for membership in $L(A)$ is possible in polynomial time by Theorem \ref{thm:epsnwacomplexity} (a).
\end{proof}

\begin{restate}{Theorem \ref{thm:nwtemptiness}}
	The nonemptiness problem for non-deleting NWT (Given a non-deleting NWT $T$, is there a string $w \in \wf$ with $T(w)\neq \emptyset$?) is $\PTIME$-complete with regard to logspace reductions.
\end{restate}

\begin{proof}
	By definition, it holds that there is some $w \in \wf$ with $T(w) \neq \emptyset$ if and only if $\Dom(T) \cap \wf \neq \emptyset$. By Corollary \ref{cor:wellformeddomain}, we can compute from $T$ a NWT $T'$ whose domain is exactly $\Dom(T) \cap \wf$. Clearly, it holds that $\Dom(T') \neq \emptyset$ if and only if $\Rng(T') \neq \emptyset$, so by Lemma \ref{lemma:regularrange} we can extract from $T'$ an $\epsilon$-NWA $A$ of polynomial size with $L(A) = \Rng(T')$. All of these transformations, as well as testing whether $L(A) \neq \emptyset$, are feasible in polynomial time. This proves the upper bound.
	
	The lower bound follows by reduction from the emptiness problem for DNWA (cf. Theorem \ref{thm:dnwaptime}). Let $A = (Q,P, \Sigma, \delta, q_0, F)$ be a DNWA to be tested for non-emptiness. We construct from $A$ a NWT $T$ with $T(\epsilon) = L(A)$ and $T(w) = \emptyset$ for all $w \neq \epsilon$. The idea behind the reduction is replacing, for each $a \in \Sigma$, each $\op{a}$ (resp. $\cl{a}$) transition of $A$ by an $\op{\epsilon}$ (resp. $\cl{\epsilon}$) transition in $T$ that writes $\op{a}$ (resp. $\cl{a}$) as an output. Formally, we set $T = (Q, P \times \Sigma, P \times \Sigma, \Sigma, \delta', q_0, F)$, where $\delta'$ contains a transition $(q,\op{\epsilon}, q', (p,a), \op{a})$ (resp. $(q, (p,a), \cl{\epsilon}, q', \cl{a})$) if and only if $\delta$ contains a transition $(q, \op{a}, q', p)$ (resp. $(q, p, \cl{a}, q')$). This construction is feasible using logarithmic space, and it holds that $T(\epsilon) = L(A)$ and $T(w) = \emptyset$ for all $w \neq \epsilon$, so $\Dom(T) \neq \emptyset$ if and only if $L(A) \neq \emptyset$.
\end{proof}

\begin{restate}{Theorem \ref{thm:nwttypechecking}}
	The type checking problem for non-deleting NWT (Given a non-deleting NWT $T$ and NWA $A_1, A_2$, is $T(L(A_1)) \subs L(A_2)$?) is 
	\begin{enumerate}[(a)]
		\item $\EXPTIME$-complete in general, and
		\item $\PTIME$-complete (w.r.t. logspace reductions) if $A_2$ is a DNWA.
	\end{enumerate}
\end{restate}

\begin{proof}
	The lower bound in (a) follows directly by a reduction from the inclusion problem for NWA, which is known to be $\EXPTIME$-complete \cite{AlurM09}. Let $B_1$ and $B_2$ be NWA to be checked for inclusion of $L(B_1)$ in $L(B_2)$. We can construct in polynomial time a NWT $T$ that simply reproduces each input symbol in the output and accepts any input (i.e. $T(w) = \{w\}$ for any $w \in \hat{\Sigma}^*$). Then, clearly, $L(B_1) \subseteq L(B_2)$ if and only if $T(L(B_1)) \subseteq L(B_2)$, as $T(L(B_1)) = L(B_1)$.
	
	The lower bound in (b) follows from $\PTIME$-hardness of the emptiness problem for DNWA by a similar reduction to that used in the proof of Theorem \ref{thm:epsnwacomplexity} (c). Let $A$ be a DNWA to be tested for emptiness. We can construct in logarithmic space an $\epsilon$-NWA $A_1$ deciding $\wf$, and (since $A$ is deterministic) a DNWA $A_2$ deciding the complement of $L(A)$. For $T$, we construct a NWT reproducing its input, i.e. with $\Dom(T) = \wf$ and $T(w) = \{w\}$ for each $w \in \wf$. It then holds that $T(L(A_1)) \subs L(A_2)$ if and only if $L(A_2) = \wf$, which is the case if and only if $L(A)$ is empty.
	
	The upper bounds in both (a) and (b) can easily be proven using prior results from this section. By Corollary \ref{cor:wellformeddomain} and Lemma \ref{lemma:regularrange}, we can construct in polynomial time an $\epsilon$-NWA deciding $T(L(A_1))$; checking this $\epsilon$-NWA for inclusion in $L(A_2)$ is generally possible in exponential time, and in polynomial time if $A_2$ is a DNWA by Theorem \ref{thm:epsnwacomplexity} (b) and (c).
\end{proof}
 \newpage
		\section*{Proofs for Section \ref{sec:unbounded}}

\begin{restate}{Theorem \ref{theo:unboundednotre}}
	For the class of games with NWT and \Call depth $k \geq 2$, $\Safelr$ is not recursively enumerable.
\end{restate}

\begin{proof}
	The proof is by reduction from the complement of the halting problem for Turing machines with an empty input. From a given Turing machine $M$ with working alphabet $\Sigma$ and state set $Q$, we construct a game $G$ and string $w$ such that \pone has a winning strategy on $w$ in $G$ if and only if $M$ does not halt on an empty input. The game $G$ uses two function symbols, $s$ and $t$, which are not in $\Sigma$ in order to avoid plays according to $G$ interfering with the workings of $M$.
	
	The idea behind the reduction is rather simple. The game begins on the input string $\op{s}\cl{s}$, where \pone is supposed to \Call $\cl{s}$ as her first move. For his reply, \ptwo then picks a number $r$ such that $M$ halts on the empty input after exactly $r$ steps (if such a number exists) and returns $\op{t}^r \nw{v}_0 \cl{t}^r$, with $\nw{v}_0$ being a string representing the initial configuration of $M$ on the empty input. \pone should then call each $\cl{t}$ in left-to-right order, making \ptwo simulate a step of $M$ on the current position of $M$ encoded in the current string. After \pone has called all $r$ closing $\cl{t}$ tags and thus rewritten $\nw{v}_0$ into a string $\nw{v}_r$ representing the configuration of $M$ after $r$ computation steps, \pone wins the game if $\nw{v}_r$ does not represent a halting configuration.

	We can represent configurations of $M$ as flat strings over the alphabet $\Sigma \times (Q \cup \{-\})$ in the standard fashion -- a string $(x_1,-) \cdots (x_{k-i},-) (x_k,q) (x_{k+1},-)\cdots (x_m,-)$ denotes that the content of $M$'s working tape is $x_1\cdots x_m$, with the head of $M$ being on the tape's $k$-th cell and $M$ being in state $q$. Each such flat string $v$ can then be represented as a nested string $\nw{v}$ using the standard nested word encoding defined in the preliminaries. Without loss of generality, we assume that $M$ always moves its head to the left-most used tape cell before halting, i.e. the flat string encoding of a halting configuration is of the form $(\Sigma \times \{h\}) (\Sigma \times \{-\})^*$ for the halting state $h$ of $M$.
	
	It is easy to construct a polynomial-sized NWT rewriting $\op{s}\cl{s}$ into a string of the form $\op{t}^r\nw{v}_0\cl{t}^r$ (for arbitrary $r \in \mathbb{N}$ and a nested string encoding $\nw{v}_0$ of $M$'s initial configuration) and every string $\op{t}\nw{v}_i\cl{t}$ where $\nw{v}_i$ is the nested word encoding of some configuration of $M$ into $\nw{v}_{i+1}$, where $\nw{v}_{i+1}$ encodes the successor configuration of the one encoded by $\nw{v}_i$; in fact, the latter rewriting is even functional (if nondeterministic for requiring a look-ahead when $M$ moves its head to the left). We can also easily construct a DNWA accepting all nested-word encodings of non-halting configurations of $M$.
	
	It only remains to be explained how we ensure that the game is played in the fashion sketched above, i.e. that \pone plays \Call first on $\cl{s}$ and then on each $\cl{t}$ in left-to-right order. To keep \pone from leaving $\cl{s}$ or some $\cl{t}$ uncalled, we simply set up the target DNWA such that it doesn't accept any nested strings containing tags with labels $s$ or $t$. Finally, to safeguard against \pone skipping some $\cl{t}$ before calling the next, we modify the replacement transducer $R$ in such a way that it rejects any input string containing two or more nested $t$ tags, i.e. strings of the form $\op{t}\op{t}v\cl{t}\cl{t}$ with $v \in \wf$.
	
	In the game $G$ thus constructed, \pone clearly has a winning strategy on $w=\op{s}\cl{s}$ if and only if there is no $r$ such that $M$ reaches a halting configuration within $r$ steps, i.e. if $M$ doesn't halt.
\end{proof}

We now prove the lower and upper bounds for Theorem \ref{thm:unboundediiexptime}. Since context-free games with general and with non-deleting NWT are polynomially equivalent by Lemma \ref{lemma:nondeleting}, we prove the lower bound for general NWT and the lower bound for non-deleting NWT.

\begin{proposition}\label{prop:unboundediiexptimelower}
	For the class of replay-free games with NWT, $\Safelr$ is $\iiEXPTIME$-hard.
\end{proposition}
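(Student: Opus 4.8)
The plan is to reduce from \corridor, a two-player corridor-tiling game that is \iiEXPTIME-complete precisely because it captures alternating exponential space: on a corridor of width $2^n$ the two players alternately extend a tiling row by row, one player (the \emph{existential} tiler) trying to force a legal tiling that eventually reaches a designated final row and the other (the \emph{universal} player) trying to prevent this. I would align the existential tiler with \pone and the universal player with \ptwo, so that \pone has a winning strategy in the constructed game on the start word if and only if the existential tiler wins the corridor game. The two kinds of choice available in a replay-free NWT game fit this correspondence: \ptwos nondeterministic replacements supply the universal player's moves and, crucially, materialise the individual tiles of a row, while \pones choice of \Read versus \Call on the function symbols of the initial word encodes her existential decisions.

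Concretely, each configuration of the tiling game -- a row of the corridor -- is represented as a nested word of exponential length in which every cell carries its position as an $n$-bit counter. The width-$2^n$ rows are produced using the \emph{unbounded output} of NWT (a transducer may append arbitrarily long structured segments through its $\epsilon$-loops, exactly as $T_{ab}$ of Example~\ref{ex:nwt} appends $\op{x}^n\cl{x}^n$), so that a single replacement of \ptwo can emit an entire exponentially wide row while the transducer, reading the previous row as input, enforces the \emph{horizontal} and \emph{vertical} tiling constraints locally, cell by cell, using the counters to align equal positions. The initial word is a stack of nested function symbols around a seed encoding the starting row; calling these symbols from the inside out forces a sequence of transductions -- the iterated-transduction phenomenon available even to replay-free strategies, since \pone never calls inside a replacement, yet the nesting re-feeds each earlier output back to the transducer. \pones choice of whether to \Call a given level, aided by a few auxiliary symbols placed there, encodes the existential move, while \ptwos nondeterminism realises the universal move; the target DNWA is designed to accept iff a final row is reached and no ``error'' flag was ever raised by the transducer.

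The step I expect to be the main obstacle is twofold, and the two difficulties interact. First, enforcing \emph{global} consistency of the exponentially large position counters and the exact corridor width with only \emph{polynomial}-size devices is impossible by direct checking, since neither a polynomial NWT nor a polynomial DNWA can verify that an exponentially long row is correctly counted $0,1,\dots,2^n-1$; I would handle this with the standard \emph{adversary-challenge} technique, letting the universal player \ptwo point to any single position where \pones claimed row or counter is malformed, so that only a \emph{local} comparison of constantly many adjacent tagged cells is needed, which is polynomial, and \pone wins only if \emph{no} violation exists anywhere. Second -- and this is what makes the reduction intricate -- a replay-free strategy makes only polynomially many \Call moves, whereas a play of the corridor game may last for doubly-exponentially many rows; the polynomially many transduction rounds must therefore simulate an unboundedly long alternating play. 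I would reconcile these by indexing rows, as well as cells, by counters and letting \ptwos single replacements encode long stretches of the play, so that the bounded \Call-budget drives only the alternation structure while the challenge mechanism certifies an arbitrarily long tiling through local checks. Making this compression respect the \Call-depth-one format -- all of \pones calls on the original word, every challenge routed through \ptwos replacements rather than through forbidden nested calls -- is the delicate heart of the construction and where I would spend most of the effort; the two directions of correctness (a winning existential-tiler strategy yields a winning \pone-strategy and conversely) would then follow by induction on the rows of the tiling.
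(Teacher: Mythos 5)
Your reduction starts from the right source problem (two-player exponential corridor tiling, complete for $\AEXPSPACE = \iiEXPTIME$), but the role assignment---\pone as the existential tiler, with the play simulated row by row through iterated transductions---contains a gap that you partly identify yourself but whose proposed repair does not work. In a replay-free game on a polynomial-size input word, \pone can play only polynomially many \Call moves, so the constructed game has only polynomially many rounds of genuine interaction; the tiling game, by contrast, alternates at \emph{every} tile placement and may need doubly exponentially many rows before its winner is determined. Your fix is to let \ptwo ``encode long stretches of the play'' inside single replacements, but this is self-defeating: any existential-tiler choices occurring inside such a stretch are then resolved by \ptwos nondeterminism, i.e.\ by the adversary, so \pone no longer controls her own moves and the equivalence ``\pone wins iff the existential tiler wins'' fails in one direction; shortening the stretches to restore \pones control brings back the impossible requirement of doubly exponentially many \Call moves. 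There is also a structural confusion in your challenge mechanism: in a context-free game \pone never produces strings---every row is materialised by \ptwo---so there is no ``\pones claimed row'' for \ptwo to challenge; moreover, when a replacement transducer rejects a called substring, \ptwo \emph{wins} immediately, so transducer rejection can only police \pone, never \ptwo. Cheating by \ptwo can only be punished through the target DNWA, and a polynomial-size DNWA cannot verify the consistency of $n$-bit position counters across an exponentially long row (nor can the first, exponentially wide row with correct counters even be produced verifiably from a polynomial seed).

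The paper resolves exactly this tension by inverting the roles and reducing from the \emph{complement}: \pone wins the constructed game iff the existential tiler has \emph{no} winning strategy. \ptwo is made to present, in a single unbounded-output transduction (this is where the arbitrary output length of general NWT is exploited), an entire \emph{strategy tree} for the existential tiler---a tree whose branching encodes all possible opponent responses and whose root-to-leaf paths are tiling candidates, with horizontal correctness and the initial tile enforced locally by the transducer. All of the tiling game's alternation is thereby folded into one static witness object, and \pone is cast as a refuter, a task that fits her limited move repertoire: using the $n$ nested pairs of auxiliary function symbols in the input word $(\op{d}\op{e})^{n}\op{s}\cl{s}(\cl{e}\cl{d})^{n}$, she forces $n$ successive delete-odd-columns/delete-even-columns transductions, thereby isolating one of the $2^n$ columns of every tiling candidate in the tree; the target DNWA then checks that single column for vertical errors, wrong final tiles, line-divider violations, or falsely flagged errors. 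This halving-based selection also eliminates any need for position counters, which is the other place your construction gets stuck. The essential ideas missing from your proposal are thus: (i) reduce from the complement, so that \ptwo is the witness-presenter and \pone the refuter, respecting the asymmetry that only \ptwo can output strings while \pone makes binary choices; and (ii) replace move-by-move simulation of the tiling play by the strategy-tree witness plus binary column selection.
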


\begin{proof}
	The proof is by reduction from the complement of the \algprobname{2-Player Exponential Corridor Tiling} problem. In this problem, we are given a tiling instance consisting of a tile set $U$, vertical and horizontal constraints $V$, $H$, initial and final tile $u_i, u_f$ and unary number $n$ and consider the following game: Player 1 and Player 2 place tiles from $U$ in an alternating fashion (starting with $u_i$ for Player 1); Player 1 wins the game if Player 2 places a tile violating some vertical or horizontal constraint, or some player places a tile completing a valid tiling of width $2^n-1$, while Player 2 wins if Player 1 places an invalid tile or if the game does not terminate with a valid tiling. \algprobname{2-Player Exponential Corridor Tiling} asks whether, given a tiling instance, Player 1 has a winning strategy in this game. This problem (and therefore also its complement problem of determining whether Player 2 has a winning strategy) is complete for $\AEXPSPACE = \iiEXPTIME$ \cite{Chlebus86}.
	
	Given a tiling instance $\calI = (U, V, H, u_i, u_f, n)$, we construct a game $G = (\Sigma, \funcsymb, R, T)$ and string $w$ such that \pone has a winning strategy in $G$ on $w$ if and only if Player 2 has a winning strategy in the exponential-width tiling game on instance $\calI$.
	
	The input string is $w = (\op{d}\op{e})^{n}\op{s}\cl{s}(\cl{e}\cl{d})^{n}$. The basic idea behind the game $G$ is that \pone is first supposed to play \Call on $\cl{s}$, to which \ptwo should give a witness for the existence of a winning strategy for Player 1 in the tiling game. \pone should then use calls to the rest of the input string to uncover a flaw in \ptwos witness and thus have a winning strategy in $G$ on $w$ if and only if \ptwo cannot prove the existence of a winning strategy for Player 1.
	
	More concretely, \ptwo should give as his witness (a slightly modified linearisation of) a \emph{strategy tree} for Player 1, in which nodes are labelled with tiles from two disjoint copies $U^1$, $U^2$ of $U$, representing moves by Player 1 and Player 2. Each node labelled with an element of $U^2$ has a single child (corresponding to a move of Player 1), each node labelled with an element of $U^1$ has $|U|$ children (corresponding to the possible moves of Player 2) and the sequence of labels on each path from the root to a leaf (called a \emph{tiling candidate}) either makes up a valid tiling or contains an invalid move by Player 2. Replacement strings for \ptwo on a \Call to $\cl{s}$ will be somewhat extended linearisations of strategy trees, with extensions that we will discuss later on.
	
	Proof trees given by \ptwo should be restricted (by construction of the replacement transducer $R$) to only represent tiling candidates that start with the initial tile $u^1_i$ and contain no horizontal errors. \pones task in trying to invalidate \ptwos proof tree therefore consists of uncovering vertical errors and incorrect final tiles (as well as encoding errors within the aforementioned extensions, to be discussed later). To this end, \pone uses \Call moves on $\cl{e}$ and $\cl{d}$ tags to select parts of the proof tree given by \ptwo that correspond to a fixed column in each tiling candidate represented in the proof tree; if this column contains an error in at least one of these tiling candidates (which should be detected by the target language DNWA), \pone wins the game.
	
	We now describe how strategy trees should be encoded. Throughout the rest of the proof, we will identify nested words and their forest representations to simplify presentation.
	
	The most intuitive approach to generate strategy trees in which all tiling candidates are horizontally correct would be to fix the replacement transducer in such a way that each node labelled with some $u^1 \in U^1$ has only children labelled by some $v^2 \in U^2$ with $(u,v) \in H$; additionally, to make certain that a strategy tree correctly represents all possible counter-strategies for Player 2, we would have to require that each node corresponding to a move by Player 2 has as children nodes with \emph{all} possible labels $v^2$ such that $(u,v) \in H$. This intuitive approach leads to two minor problems, though: 
	\begin{itemize}
	\item For some tiles $u$, there may be no tiles $v$ with $(u,v) \in H$. If this happens, we fix the replacement transducer such that it may only follow $u$ up with a special ``pseudo-tile'' $u_{\Err} \notin U$.
	\item If a tile $u$ is placed at the end of a line, the next tile $v$ does \emph{not} have to fulfil $(u,v) \in H$. In order to account for this fact and still have the replacement transducer produce horizontally correct tiling candidates, we introduce a special \emph{line divider} symbol $\# \notin U$ which may be placed by the replacement transducer at any time it could normally place a node for some player, regardless of its parent tile, and which may be followed up by either \emph{any} tile $u^1 \in U^1$ as a move for Player 1, or by \emph{all} tiles $u^2 \in U^2$ as possible moves for Player 2.
	\end{itemize}
	
	In a correct encoding of a strategy tree, the line divider symbol $\#$ should occur as a node label if and only if the corresponding node's depth is a multiple of $2^n$. However, the above construction of replacement transducer allows \ptwo to construct incorrect encodings of strategy trees. Dealing with attempts by \ptwo to cheat in this manner, as well as with vertical errors, is what we discuss next.
	
	We interpret a tree given by \ptwo as a tree whose paths from root to leaf encode tiling candidates with $2^n$ columns and an arbitrary number of lines (i.e. root-to-leaf paths should always have as their length a multiple of $2^n$), with the last column in a correctly encoded tiling candidate consisting only of line divider symbols, and no line divider symbol occurring in any other column.
	
	As mentioned above, to check for vertical errors and incorrect encodings, \pone selects some column number $\ell \in [2^n]$ and removes all nodes not corresponding to some tile in that column, i.e. reduces the original tree $t$ to one containing only the nodes at depths $k \cdot (2^n) + \ell$ for all $k \geq 0$. How this is done exactly will be explained later on. Due to the alternation between Players 1 and 2, if the column chosen by \pone is not the last column, each node with a label from $U^1$ should only have children with labels from $U^2$ in this reduced tree, or vice versa.
	
	At first glance, it seems clear that, with the considerations on encoding made above, \pone can already uncover attempts by \ptwo to cheat or give a non-winning strategy, namely
	\begin{itemize}
	\item by picking a column where $\ell \neq 2^n$ and in some path of the resulting tree, there are two directly subsequent labels from $U^1$ or from $U^2$ (\ptwo tried to cheat by disrespecting the correct player sequence after a $\#$), or 
	\item by picking a column $\ell \neq 2^n$ in which some path contains a $\#$, (\ptwo tried to cheat by placing an incorrect $\#$), or
	\item by picking a column where $\ell \neq 2^n$ and in some path of the resulting tree, there is a tile $u^2 \in U^2$ followed either by $u_{\Err}$ or some $v^1 \in U^1$ with $(u,v) \notin V$ (the strategy given by \ptwo is non-winning due to a horizontal or vertical error), or 
	\item by picking $\ell = 2^n-1$ when for some path in the resulting tree, its leaf is not $u_f$ (the strategy given by \ptwo is non-winning due to a wrong final tile), or
	\item by picking $\ell = 2^n$ if the last column contains some symbol that is not $\#$ in some path (\ptwo tried to cheat by not placing a correct $\#$).
	\end{itemize}
	All of these conditions can easily be checked by a polynomial-size DNWA.
	
	However, while these conditions are indeed sufficient for checking whether \ptwo has given an incorrect or non-winning strategy tree, our construction so far is still too restrictive in that it sometimes fails to recognise a correct winning strategy tree. Consider, for instance, the case where all paths of the strategy tree either contain a vertical error for Player 2 or end with a valid tiling. According to our construction so far, \pone could still win on a strategy tree of this shape, as our construction requires \ptwo to continue giving horizontally correct tiles even after a vertical error has occurred, and these ``irrelevant'' tiles might not always make it possible to end a line with $u_f$, thus allowing \pone to find some path in the strategy tree corresponding to column $2^n-1$ that doesn't end in $u_f$.
	
	To address this problem, we further modify the replacement transducer $R$ in such a way that, whenever it is supposed to output a node with label $u^2 \in U^2$ corresponding to a move by Player 2, it may instead nondeterministically choose to output a \emph{marked} version $\hat{u}$ of $u^2$ instead. This is supposed to indicate that choosing $u^2$ leads to a vertical error for Player 2, so below the node labelled $\hat{u}$, the replacement transducer may produce a sequence of arbitrary tiles from $U^1$ for Player 1 and from $U^2$ for Player 2 (as vertically correct ``pseudo-tiles''), terminating with $u_f$ at an odd depth followed by $\#$. If a node labelled $u_{\Err}$ ever occurs instead of a move of Player 2 due to a horizontal error, it is followed up by a similar path.
	
	In this way, we make it \ptwos responsibility to flag vertical errors for Player 2. This would allow \ptwo to cheat by claiming a vertical error when, in fact, there is none, but such attempts at cheating can be penalised by once again adapting the target language in such a way that \pone wins if she selects a column in which \ptwo has falsely flagged a vertical error (i.e. the selected column number $\ell$ is not $2^n$ and some path in the corresponding tree contains a marked tile $\hat{v}$ that is not preceded by some $u^1 \in U^1$ with $(u,v) \notin V$).
	
	Next, we examine how \pone selects a column in the strategy tree given by \ptwo. Recall that the input string is $w = (\op{d}\op{e})^{n}\op{s}\cl{s}(\cl{e}\cl{d})^{n}$, and that the first move by \pone is supposed to be a \Call on $\cl{s}$ to replace $\op{s}\cl{s}$ by an encoding of a strategy tree, in which each path is to be interpreted as a tiling candidate with $2^n-1$ columns and an additional column made up of separator symbols $\#$. \pone now selects one of these $2^n$ columns by incrementally  causing \ptwo to delete either all odd-numbered columns or all even-numbered columns (i.e. all nodes at odd or even depths in the strategy tree) until only a single column is left. This is the purpose of the nodes labelled $d$ and $e$.
	
	By a \Call to some node labelled $e$, \pone causes \ptwo to change its label to $o$ (without changing the tree nested below it). A \Call to a $d$-node with $e$-labelled child causes \ptwo to delete both of these nodes as well as all nodes at even depths in the strategy tree nested below them; similarly, a \Call to a $d$-node with $o$-child deletes these nodes and all odd-depth nodes in the strategy tree.
		
	To keep \pone from cheating in this selection process, we have to make certain that she calls \emph{all} $d$-nodes in left-to-right order. We can easily keep \pone from leaving uncalled $d$-nodes by fixing the target language to not contain any strings including the label $d$; to make certain that she does not skip any $d$-nodes, the replacement transducer rejects any strings containing a $d$-node as child of an $e$- or $o$-node. 
	
	To later check for the right kinds of vertical errors or inconsistencies, we have to keep track of the column selected by \pone, i.e. the sequence of \Call moves deleting odd or even columns. The following cases have to be distinguished:
	
	\begin{itemize}
	\item If \pone deletes odd columns on all $n$ calls, she selects column $2^n$;
	\item if \pone deletes even columns as her first call and then only odd columns on the following $n-1$ calls, she selects column $2^n-1$;
	\item otherwise, \pone selects a column with index $\ell \leq 2^n-2$.
	\end{itemize}
	
	To store this information, we use an additional node, initially labelled $0$, which \ptwo returns as the root of his chosen strategy tree after \pones \Call on $\cl{s}$. This node is then updated throughout the column-deletion process as follows:
	\begin{itemize}
	\item A label $0$ signifies that no deletions have been made so far; on a \Call to $d$ which deletes odd columns, $0$ is rewritten to $l$, on a \Call deleting even columns, it becomes $x$
	\item A label $l$ (``last'') signifies that the column \pone selects may possibly be the $2^n$-th line divider column. As long as only odd columns are deleted, the label $l$ remains, if even columns are deleted, $l$ is rewritten to $r$. 
	\item A label $x$ (``neXt-to-last'') signifies that the column \pone selects may possibly be the (final) column with number $2^n-1$. As long as only even columns are deleted, the label $x$ remains, if odd columns are deleted, $x$ is rewritten to $r$.
	\item A label $r$ (``standaRd'') signifies that the column \ptwo selects will definitely not be one of the last two columns, i.e. have a number at most $2^n-2$. The label $r$ is not rewritten by any deletion.
	\end{itemize}
	
	To summarise, the game $G=(\Sigma,\funcsymb,R,T)$ constructed from the tiling instance $(U,V,H,u_i,u_f,n)$ is as follows.
	
	The alphabet of $G$ is 
	$$\Sigma = U \cup \hat{U} \cup U^1 \cup U^2 \cup \{u_{\Err}, \#, s,d,e,o, 0, l,x,r\},$$
	(where $\hat{U} = \{\hat{u} \mid u \in U\}$, $U^1 = \{u^1 \mid u \in U\}$ and $U^2 = \{u^2 \mid u \in U\}$) with function symbols $\funcsymb = \{s,d,e\}$.
	
	The replacement transducer $R$ behaves as follows:
	\begin{itemize}
	\item $R$ rewrites $\op{s}\cl{s}$ into a string of the form $\op{0} w' \cl{0}$, where $w'$ is the linearisation of a strategy tree as described above, i.e. a tree $t$ with the following properties: 
		\begin{itemize}
		\item $t$ has a root node labelled $u_i^1$;
		\item Each node labelled with some $u^1 \in U^1$ that does not have a node with label from $\hat{U} \cup \{u_{\Err}\}$ as ancestor has as its children either a single node labelled $\#$, or nodes either labelled $v^2$ or $\hat{v}$ for each $v \in U$ with $(u,v) \in H$, or a single node labelled $u_{\Err}$ if no such $v$ exists.
		\item Each node labelled with some $u^2 \in U^2$ that does not have a node with label from $\hat{U} \cup \{u_{\Err}\}$ as ancestor has a single child labelled either $v^1$ for some $v \in U$ with $(u,v) \in H$, or $u_{\Err}$ if no such $v$ exists, or $\#$.
		\item Each node that has a node with label from $\hat{U} \cup \{u_{\Err}\}$ as its own label or as an ancestor has a single child labelled either by some $u^1 \in U^1$, or by some $u^2 \in U^2$, or by $\#$.
		\item Each node labelled with $\#$ has a single child labelled by some $u^1 \in U^1$, or by $u^2 \in U^2$, or no child at all.
		\end{itemize}
	\item $R$ rewrites strings of the form $\op{e}v\cl{e}$, for arbitrary $v \in \wf$, into $\op{o}v\cl{o}$.
	\item $R$ rewrites strings of the form $\op{d} \op{e} v_1 \cl{e} \cl{d}$ into strings $v_2$ as follows:
		\begin{itemize}
		\item If $v_1 = \op{0} v'_1 \cl{0}$ (with $v'_1 \in \wf$), then $v_2 = \op{x}v'_2\cl{x}$, where $v'_2$ is derived from $v_2$ by deleting all nodes at even depths;
		\item If $v_1 = \op{l} v'_1 \cl{l}$ or $v_1 = \op{x} v'_1 \cl{x}$ or $v_1 = \op{r} v'_1 \cl{r}$ (with $v'_1 \in \wf$), then $v_2 = \op{r}v'_2\cl{r}$, where $v'_2$ is derived from $v'_1$ by deleting all nodes at even depths.
		\end{itemize}
	\item $R$ rewrites strings of the form $\op{d} \op{o} v_1 \cl{o} \cl{d}$ into strings $v_2$ as follows:
		\begin{itemize}
		\item If $v_1 = \op{0} v'_1 \cl{0}$ or $v_1 = \op{l} v'_1 \cl{l}$ (with $v'_1 \in \wf$), then $v_2 = \op{l}v'_2\cl{l}$, where $v'_2$ is derived from $v'_1$ by deleting all nodes at odd depths;
		\item If $v_1 = \op{x} v'_1 \cl{x}$ (with $v'_1 \in \wf$), then $v_2 = \op{x}v'_2\cl{x}$, where $v'_2$ is derived from $v'_1$ by deleting all nodes at odd depths;
		\item If $v_1 = \op{r} v'_1 \cl{r}$ (with $v'_1 \in \wf$), then $v_2 = \op{r}v'_2\cl{r}$, where $v'_2$ is derived from $v'_1$ by deleting all nodes at odd depths.
		\end{itemize}
	\item All other strings are rejected by $R$.
	\end{itemize}

	The target language $T$ contains all strings $v$ of the following kinds:
	\begin{itemize}
	\item $v = \op{l}v'\cl{l}$, where the tree represented by $v'$ has some path from root to leaf containing a label different from $\#$.
	\item $v = \op{r}v'\cl{r}$, where the tree represented by $v'$ has some path from root to leaf
		\begin{itemize}
		\item containing a label $\#$, or
		\item containing two subsequent labels from $U^1$ or two subsequent labels from $U^2$, or
		\item containing some label $u^2$ from $U^2$ followed either by $u_{\Err}$ or by a $u'^1$ with $(u,u') \notin V$, or
		\item containing some label from $\hat{U}$ that is \emph{not} part of a vertical error.
		\end{itemize}	
	\item $v = \op{x}v'\cl{x}$, where the tree represented by $v'$ has some path from root to leaf
		\begin{itemize}
		\item containing a label $\#$, or
		\item containing two subsequent labels from $U^1$ or two subsequent labels from $U^2$, or
		\item containing some label $u^2$ from $U^2$ followed either by $u_{\Err}$ or by a $u'^1$ with $(u,u') \notin V$, or
		\item containing some label from $\hat{U}$ that is \emph{not} part of a vertical error
		\item containing \emph{no} vertical error, \emph{no} label $u_{\Err}$ and ending with a label different from $u^1_f, u^2_f$.
		\end{itemize}	
	\end{itemize}

	It is relatively easy to see (but tedious to prove formally) that this construction is possible in polynomial time and that \pone indeed has a winning strategy in $G$ on $w$ if and only if Player 1 has no winning strategy on the tiling instance $\calI$.
\end{proof}

For the upper bound, we first prove decidability in $\iiEXPTIME$ for a ``purely NWT-based'' problem, to which we will later reduce \Safelr.

\begin{definition}
	The \emph{alternating iterated transduction} problem for non-deleting NWT is defined as the following decision problem:\\
	\begin{centering}
  \algproblem{\textsc{AIT}(NWT)}
  					{A string $w \in \wf$, number $k$ (given in unary), DNWA $A$, and pairs of non-deleting NWTs $(T_{1,0},T_{1,1}), \ldots, (T_{k,0},T_{k,1})$}
  					{Is there an $i_1 \in \{0,1\}$ such that for every $w_1 \in T_{1,i_1}(w)$ there exists $i_2 \in \{0,1\}$ such that for every $w_2 \in T_{2,i_2}(w_1)$ \ldots there exists $i_k \in \{0,1\}$ such that for every $w_k \in T_{k,i_k}(w_{k-1})$ it holds that $w_k \in L(A)$?}
	\end{centering}
\end{definition}

\begin{proposition}\label{prop:aitupper}
	$\algprobname{AIT(NWT)} \in \iiEXPTIME$
\end{proposition}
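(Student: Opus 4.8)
The plan is to solve $\algprobname{AIT(NWT)}$ by a backward induction over the quantifier alternation, maintaining at each level a regular language represented by an $\epsilon$-NWA and reducing the final question to a single membership test. For $j \in \{0,\ldots,k\}$ set $L_k = L(A)$ and, for $j < k$, let $L_{j}$ be the set of all $u \in \wf$ for which there is an $i \in \{0,1\}$ with $T_{j+1,i}(u) \subseteq L_{j+1}$. Unfolding the definitions shows that the instance is positive if and only if $w \in L_0$. The crucial idea is that I would not store the $L_j$ but their complements $M_j \mydef \wf \setminus L_j$: since $T_{j+1,i}(u) \subseteq L_{j+1}$ is equivalent to $T_{j+1,i}(u) \cap M_{j+1} = \emptyset$, one has $M_j = \bigcap_{i \in \{0,1\}} \{u : T_{j+1,i}(u) \cap M_{j+1} \neq \emptyset\}$. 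Working with complements is what lets the induction proceed \emph{without ever complementing (hence determinising) an automaton}.

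For the inductive step, suppose $M_{j+1}$ is available as an $\epsilon$-NWA. I would build from it a non-deleting NWT $\id_{M_{j+1}}$ of size $\bigO(|M_{j+1}|)$ with domain $M_{j+1}$ that copies its input to its output (reading transitions emit the symbol just read; the internal $\epsilon$-transitions of the $\epsilon$-NWA become internal empty-output transitions, which do not violate non-deletingness). By Proposition \ref{prop:nwtcomposition} the composition $T'_{i} \mydef \id_{M_{j+1}} \circ T_{j+1,i}$ is a non-deleting NWT with $T'_i(u) = T_{j+1,i}(u) \cap M_{j+1}$, computable in polynomial time and of size $\bigO(|T_{j+1,i}| \cdot |M_{j+1}|)$; hence $\{u : T_{j+1,i}(u) \cap M_{j+1} \neq \emptyset\} = \Dom(T'_i)$. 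The sub-result I need is a \emph{domain} analogue of Lemma \ref{lemma:regularrange}: the domain of a non-deleting NWT with well-nested domain is regular, with an $\epsilon$-NWA computable in polynomial time. Granting this, I obtain $\epsilon$-NWA for $\Dom(T'_0)$ and $\Dom(T'_1)$, and by Lemma \ref{lemma:epsnwabool} an $\epsilon$-NWA for their intersection $M_j$. The base case $M_k = \wf \setminus L(A)$ is a DNWA (hence an $\epsilon$-NWA) of size $\bigO(|A|)$, as DNWA are efficiently complementable.

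It remains to analyse the sizes. Writing $s_j$ for the size of the $\epsilon$-NWA built for $M_j$ and $t$ for the maximum transducer size, each step gives $s_j \leq \bigl(\text{poly}(t \cdot s_{j+1})\bigr)^{2}$, where the square comes from the product construction for the intersection of the two domains $\Dom(T'_0)$ and $\Dom(T'_1)$. Iterating this recurrence from $j = k$ down to $j = 0$ yields $s_0 \leq (t \cdot |A|)^{c^{k}}$ for a constant $c \geq 2$; since $k$ is given in unary, $c^{k}$ is at most exponential in the input size, so $s_0$ is at most doubly exponential. Finally, $w \in L_0$ holds iff the $\epsilon$-NWA for $M_0$ rejects $w$, which by Theorem \ref{thm:epsnwacomplexity}(a) is decidable in time polynomial in $s_0 + |w|$, i.e. in doubly exponential time overall. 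This places $\algprobname{AIT(NWT)}$ in \iiEXPTIME.

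The main obstacle is the domain-regularity sub-result, specifically producing a \emph{polynomial-size} $\epsilon$-NWA for $\Dom(T')$. The difficulty is that the $\op{\epsilon}$- and $\cl{\epsilon}$-transitions of a NWT manipulate the hierarchical stack without reading input, so the naive ``output-the-input'' reduction to Lemma \ref{lemma:regularrange} produces empty-output opening and closing transitions and thereby destroys non-deletingness. The resolution I would pursue exploits that a real closing tag must pop the matching real hierarchical state, so every maximal block of $\epsilon$-stack operations occurring strictly inside a matched pair is itself balanced: pure input-free $\epsilon$-blocks can be summarised into internal $\epsilon$-transitions by a fixpoint computation, while $\epsilon$-operations that enclose genuine input are folded into the hierarchical state pushed by the enclosing real opening transition. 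This mirrors the relabelling argument of Lemma \ref{lemma:regularrange} and keeps the automaton polynomial. The only other delicate point is the bookkeeping above, where tracking complements is essential: representing the $L_j$ directly would force an $\epsilon$-NWA determinisation (Lemma \ref{lemma:epsnwadeterminisation}) at every level and make the size grow as a tower of height $k$ rather than doubly exponentially.
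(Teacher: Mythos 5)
The set-theoretic backbone of your argument is fine: the backward unfolding into languages $L_k,\ldots,L_0$, the complementation trick $M_j = \bigcap_{i}\{u : T_{j+1,i}(u)\cap M_{j+1}\neq\emptyset\}$, and the identification of that set with $\Dom(\id_{M_{j+1}}\circ T_{j+1,i})$ via Proposition \ref{prop:nwtcomposition} are all correct as statements about sets. But the sub-result you yourself flag as the main obstacle is \emph{false}, and the approach collapses with it: the domain of a non-deleting NWT with well-nested domain need not be a regular nested-word language. Consider the non-deleting NWT $T$ with linear states $q_0,q_2,q_3,q_4,q_5,q_6$, hierarchical states $p_a,p_b$ and $\epsilon$-state $p_\epsilon$, initial state $q_0$, accepting states $\{q_0,q_6\}$, and the following transitions: from $q_0$ an $\op{\epsilon}$-transition pushing $p_\epsilon$ with output $\op{x}$ to $q_2$; from $q_2$ a transition reading $\op{a}$, pushing $p_a$, outputting $\op{a}$, to $q_3$; from $q_3$ a transition reading $\cl{a}$, popping $p_a$, outputting $\cl{a}$, back to $q_0$; from $q_0$ also a $\cl{\epsilon}$-transition popping $p_\epsilon$ with output $\cl{x}$ to $q_4$; from $q_4$ a transition reading $\op{b}$, pushing $p_b$, outputting $\op{b}$, to $q_5$; from $q_5$ a transition reading $\cl{b}$, popping $p_b$, outputting $\cl{b}$, to $q_6$; and from $q_6$ a $\cl{\epsilon}$-transition popping $p_\epsilon$ with output $\cl{x}$ to $q_4$. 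All conditions of the definition ($\epsilon$-consistency, well-formedness, synchronisation) and non-deletingness hold, yet the accepting runs are exactly those on $\epsilon$-extensions $(\op{\epsilon}\op{a}\cl{a})^n(\cl{\epsilon}\op{b}\cl{b})^n$, since acceptance requires an empty stack, so every pushed $p_\epsilon$ must be popped. Hence $\Dom(T) = \{(\op{a}\cl{a})^n(\op{b}\cl{b})^n \mid n\geq 0\} \subseteq \wf$, which is not regular: an NWA has empty stack after every top-level pair, so a pigeonhole argument on its linear states forces it to also accept some $(\op{a}\cl{a})^i(\op{b}\cl{b})^j$ with $i\neq j$. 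The example also shows why your proposed repair cannot work: the $\epsilon$-pairs enclosing genuine input here lie at the top level, inside no real pair at all, and even after wrapping the input in $\op{c}\cdots\cl{c}$ unboundedly many $\epsilon$-states are pending simultaneously, so they cannot be folded into the single hierarchical state pushed by an enclosing real opening transition. Lemma \ref{lemma:regularrange} is genuinely asymmetric: non-deletingness plus normal form synchronises stack operations with the \emph{output}, never with the \emph{input}.

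The failure is moreover not merely representational but lies in the objects you induct on: taking $L(A)=\emptyset$ (so $M_k = \wf$) and $T_{k,0}=T_{k,1}=T$ as above gives $M_{k-1}=\Dom(T)$, which is non-regular, so no $\epsilon$-NWA for it exists and the induction cannot even begin; retreating to pushdown representations does not help either, since your step also requires the intersection over $i\in\{0,1\}$ and further pre-images, operations under which context-free languages are not closed. This is precisely why the paper's proof runs in the \emph{forward} direction, where the machinery is sound: it uses Proposition \ref{prop:nwtcomposition} to build, level by level, product transducers that simulate both $T_{j,0}$ and $T_{j,1}$ on all components simultaneously (with blank padding to synchronise the unsynchronised $\epsilon$-steps), obtaining a single doubly-exponential transducer $T^*$ such that the instance is positive iff $T^*(w) \subseteq L(B)$, where $B$ is a DNWA accepting exactly the tuples having at least one component in (a padded version of) $L(A)$; this last test is the $\PTIME$ type-checking problem of Theorem \ref{thm:nwttypechecking}(b). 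In short: forward images and ranges of non-deleting NWT are regular (Lemma \ref{lemma:regularrange}, Corollary \ref{cor:regulartransduction}), pre-images are not, and your proof rests on the latter.
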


\begin{proof}
We assume, without loss of generality, that each transducer $T_{j,i}$ (for $j \in [k], i \in \{0,1\}$) is in normal form, i.e. that each (reading or $\epsilon$-) transition of each transducer produces exactly one output tag (cf Lemma \ref{lemma:normalform}).

The idea behind this proof is to eliminate the existential quantification in the problem setting by constructing an NWT $T$ that simulates in parallel \emph{both} transducers $T_{j,1}$ and $T_{j,2}$ at each level $j \in [k]$. More concretely, $T$ takes as input a string $w$ and outputs a $2^k$-tuple of strings, each component of which corresponds to a sequence of existential choices $i_1 \ldots i_k \in \{0,1\}^k$ of transducers, while the non-determinism in $T$ simulates universal choice. It then holds (as we will prove after the construction of $T$) that the condition of $\algprobname{AIT(NWT)}$ is fulfilled if and only if for each possible transduct $w'$ of $w$ by $T$, at least one of the $2^k$ component strings is contained in (an appropriate modification of) the target language $L(A)$.

To construct $T$, we construct from $T_{j,0}$ and $T_{j,1}$ for each level $j \in [k]$ a transducer $T_j$ that takes as input a nested string $w_{j-1}$ over an alphabet of $2^{j-1}$-tuples of alphabet symbols and outputs a nested string $w_j$ over $2^j$-tuples, with the intuition being that $T_j$ simulates one run of both $T_{j,0}$ and $T_{j,1}$ on each of the $2^{j-1}$ input strings encoded in $w_{j-1}$ to produce a total of $2 \cdot 2^{j-1} = 2^j$ output strings, which are encoded in $w_j$. From all of the transducers $T_j$ for all $j \in [k]$, we then use Proposition \ref{prop:nwtcomposition} to construct $T$ as the transducer for $T_k \circ \ldots \circ T_1$.

The main difficulty in the construction of each $T_j$ is the fact that, while both $T_{j,0}$ and $T_{j,1}$ should be simulated for exactly one run on each of the components on the input string, all of these runs should be independent of each other. For instance, if $T_2$ wanted to simulate some run $\rho$ of $T_{2,0}$ on the first and a different run $\rho'$ of $T_{2,0}$ on the second input string component, $\rho$ might start by reading the first tag from the input string while $\rho'$ starts with an $\epsilon$-transition and reads the first input tag afterwards. In such a situation, $\rho'$ would produce an output before the output of $\rho$ even starts. We therefore need to construct each $T_j$ in such a manner that runs are \emph{synchronised}, producing an output for \emph{all} components of the output string, even if only \emph{one} run of some transducer $T_{j,i_j}$ calls for producing an output.

To address this problem, we introduce a special \emph{blank} symbol $\blank \notin \Sigma$ and construct each $T_j$ in such a way that reading transitions of $T_j$ simulate synchronous reading transitions of $T_{j,0}$ and $T_{j,1}$ on \emph{all} symbols of the next input tuple; $\epsilon$-transitions of $T_j$, on the other hand, simulate only \emph{one} of $T_{j,0}$ or $T_{j,1}$ on only \emph{one} of the $2^{j-1}$ input components and output tuples consisting of exactly one symbol from $\Sigma$ in the corresponding component and $\blank$ symbols in all the others. To keep this construction consistent and allow for later transductions, this also means that all transducers $T_{j,i}$ ($j \in [k]$, $i \in \{0,1\}$) have to be modified in such a way that, on reading an opening (closing) $\blank$ tag, they always output an opening (closing) $\blank$ tag and do not change states.

More formally, if (for each $j \in [k]$ and $i \in \{0,1\}$) $T_{j,i} = (Q_{j,i}, P_{j,i}, P^\epsilon_{j,i}, \Sigma, \delta_{j,i}, q^0_{j,i}, F_{j,i})$, then let first $T'_{j,i}$ be an extension of $T_{j,i}$ to strings containing $\blank$ tags that ignores and reproduces these tags as described above, i.e. $T'_{j,i} = (Q_{j,i}, P_{j,i}\cup \{p_\blank\}, P^\epsilon_{j,i}, \Sigma \cup \{\blank\}, \delta'_{j,i}, q^0_{j,i}, F_{j,i})$, where $p_\blank \notin P_{j,i}$ and $\delta'_{j,i}$ consists of $\delta_{j,i}$ extended by transitions $(q,\op{\blank},q,p_\blank,\op{\blank})$ and $(q,p_\blank,\cl{\blank},q,\cl{\blank})$ for each $q \in Q_{j,i}$. For any $w \in \wf[\Sigma\cup\{\blank\}]$, let $\Strip(w) \in \wf$ denote the nested string obtained from $w$ by deleting all $\blank$-labelled tags, and for any set $S \subs \wf[\Sigma\cup\{\blank\}]$, let $\Strip(S) \mydef \{\Strip(w) \mid w \in S\}$. Then, it is clear that for any $w \in \wf[\Sigma\cup\{\blank\}]$, it holds that $\Strip(T'_{j,i}(w)) = T_{j,i}(\Strip(w))$.

Now we describe in detail the construction of $T_j$ from $T'_{j,0}$ and $T'_{j,1}$. Let 
$$T_j = (Q_j, P_j, P^\epsilon_j, \Sigma_j^{\text{in}} \cup \Sigma_j^{\text{out}}, \delta_j, q^0_j, F_j),$$
 where the linear and hierarchical state sets as well as initial and final states simply derive from a $2^{j-1}$-fold product construction of $T'_{j,0}$ and $T'_{j,1}$, i.e. 
 $$Q_j = (Q_{j,0} \times Q_{j,1})^{2^{j-1}},$$
 $$P_j = ((P_{j,0} \cup \{p_\blank\}) \times (P_{j,1} \cup \{p_\blank\}))^{2^{j-1}},$$
 $$q^0_j = (q^0_{j,0} \times q^0_{j,1})^{2^{j-1}},$$ and 
 $$F_j = (F_{j,0} \times F_{j,1})^{2^{j-1}}.$$ 
 The working alphabet of $T_j$ consists of input alphabet $\Sigma_j^{\text{in}} = (\Sigma \cup \{\blank\})^{2^{j-1}}$ and output alphabet $\Sigma_j^{\text{out}} = (\Sigma \cup \{\blank\})^{2^{j}}$. The construction of $P_j^\epsilon$ is similar, but has to be adjusted slightly; since $\epsilon$-transitions of $T_j$ are supposed to simulate $T_{j,0}$ or $T_{j,1}$ on only \emph{one} component of the output string and the other components have to be filled in with $\blank$ symbols, we add a new hierarchical $\epsilon$-state $p^\epsilon_\blank \notin P_{j,0} \cup P_{j,1}$ and set $P_j^\epsilon = ((P_{j,0}^\epsilon \cup \{p_\blank^\epsilon\}) \times (P_{j,1}\epsilon \cup \{p_\blank^\epsilon\}))^{2^{j-1}}$.

In accordance with the above intuition, we construct the transition relation $\delta_j$ of $T_j$. Reading transitions in $\delta_j$ (i.e. transitions that read one input symbol and produce one output symbol) are pretty much products of transitions from $\delta'_{j,0}$ and $\delta'_{j,1}$. That is, $\delta_j$ contains an opening transition which, for $\ell \in [2^j]$, starts with linear state $q$, ends in state $q'$ and produces hierarchical state $q$ in the $\ell$-th position of the corresponding state tuples while reading $\op{a}$ in the $\lfloor\frac{\ell}{2}\rfloor$-th position of the input tuple and writing $\op{b}$ in the $\ell$-th position of the output tuple if the transition $(q,\op{a},q',p,\op{b})$ is in $\delta'_{j,0}$ (for odd $\ell$) or in $\delta'_{j,1}$ (for even $\ell$).\footnote{At first glance, it may seem more intuitive to associate even positions with $T_{j,0}$ and odd positions with $T_{j,1}$, but seeing as $\ell$ is a number between 1 and $2^j$ which will later be encoded as $\ell-1$ by a binary sequence of length $j$, the association described here is indeed the more useful one.} Closing reading transitions in $\delta_j$ are constructed accordingly.

As for $\epsilon$-transitions, $\delta_j$ contains an opening transition $(\ol{q}, \op{\epsilon}, \ol{q}', \ol{p}, \ol{a})$ if there is an $\ell \in [2^j]$ such that (with $\ol{x}_i$ denoting the $i$-th component of a tuple $\ol{x}$)
\begin{itemize}
	\item for all $\ell' \neq \ell$, it holds that $\ol{q}_{\ell'} = \ol{q}'_{\ell'}$, $\ol{p}_{\ell'} = p^\epsilon_\blank$ and $\ol{a}_{\ell'} = \blank$, and
	\item for $\ol{q}_{\ell} = q$, $\ol{q}'_{\ell} = q'$, $\ol{p}_{\ell} = p$ and $\ol{a}_{\ell} = a$, there is a transition $(q, \op{\epsilon}, q', p, \op{a})$ in $\delta'_{j,0}$ (for odd $\ell$) or in $\delta'_{j,1}$ (for even $\ell$),
\end{itemize}
and accordingly for closing $\epsilon$-transitions.

With this construction, $T_j$ is indeed an NWT in normal form (i.e. fulfils the $\epsilon$-consistency, well-formedness and synchronisation properties). This directly implies that $T_j$ transduces well-nested strings over $\Sigma_j^{\text{in}}$ into well-nested strings over $\Sigma_j^{\text{out}}$. We now identify nested strings over tuples of alphabet symbols with tuples of  nested strings as follows: For some $j$, let $\ol{w} = \ol{w}^1 \cdots \ol{w}^n$ be a well-nested string over $(\Sigma \cup \{\blank\})^{2^j}$, and for each $\ol{w}^i \in \op{(\Sigma \cup \{\blank\})^{2^j}}$, let $\ol{w}^i_\ell$ denote its $\ell$-th component interpreted as an opening tag (and analogously for closing tags). Then the nested string $\ol{w}_\ell$ is defined as $\ol{w}_\ell = \ol{w}^1_\ell \cdots \ol{w}^n_\ell$. It is clear that, if $\ol{w} \in \wf[(\Sigma \cup \{\blank\})^{2^j}]$, then $\ol{w}_\ell \in \wf[\Sigma \cup \{\blank\}]$ for each $\ell$.

Identifying strings of tuples with tuples of strings in this way, it is easy (if tedious) to prove that, for each $j$ and each string $\ol{w} \in \wf[\Sigma_j^\text{in}]$, interpreted as a $2^{j-1}$-tuple $(\ol{w}_1, \ldots, \ol{w}_{2^{j-1}})$ of nested strings, it holds that $\Strip(T_j(w)) = \prod_{\ell=1}^{2^{j-1}} T_{j,0}(\Strip(\ol{w}_\ell)) \times T_{j,1}(\Strip(\ol{w}_\ell))$, where the $\Strip$ operator is applied component-wise, i.e. $\Strip((\ol{w}_1, \ldots, \ol{w}_n)) = (\Strip(\ol{w}_1), \ldots, \Strip(\ol{w}_n))$. In other words, disregarding $\blank$ tags, every run of $T_j$ on a $2^{j-1}$-tuple of nested words simulates one run each of $T_{j,0}$ and $T_{j,1}$ on each of its component strings, and all combinations of such component runs can be simulated by a run of $T_j$.

By a simple induction argument, it follows that each transducer of the form $T_j \circ \cdots \circ T_1$ completely describes all possible series of existential choices up to the $j$-th level. More precisely, denoting by $S_\ell = \{\ol{w}_\ell \mid \ol{w} \in S\}$ the set of all $\ell$-th component strings in a set $S$ of tuples of nested strings, we get that for every $j \leq k$, $w \in \wf$ and $\ell \in [2^j]$, it holds that $\Strip((T_j \circ \cdots \circ T_1(w))_\ell) = T_{j,i_j} \circ \cdots T_{1,i_1}(w)$, where $i_1 \cdots i_j \in \{0,1\}^j$ is the binary representation of the number $\ell-1$.

To use the transducer $T = T_k \circ \cdots \circ T_1$ for solving the alternating iterated transduction problem as described initially, we now need to prove that the defining property of $\algprobname{AIT(NWT)}$ (i.e. ``There is an $i_1 \in \{0,1\}$ such that for every $w_1 \in T_{1,i_1}(w)$...'') is equivalent to the following: For every $\ol{w} \in T(w)$, there is an $\ell \in [2^k]$ such that $\Strip(\ol{w}_\ell) \in L(A)$. We denote this property by (*). As we have already seen, the existence of $\ell$ in property (*) is equivalent to the existence of an index sequence $i_1 \cdots i_k \in \{0,1\}^k$ representing $\ell-1$ in binary, so we basically need to prove that the alternation between existential and universal choices in the defining property of $\algprobname{AIT(NWT)}$ is equivalent to a single universal choice of iterated transducts for each possible index sequence of existential choices, followed by a single existential choice of index sequence.

The proof of this equivalence is by induction; the central step is proving, for any set $S$ of nested strings, the equivalence of
 $$\forall \ol{w} \in T_{j-1} \circ \cdots \circ T_1(w) \exists \ell \in [2^{j-1}] \exists i_j \in \{0,1\} \forall w' \in T_{j,i_j}(\Strip(\ol{w}_\ell)): w' \in S$$
 and
 $$\forall \ol{w}' \in T_j \circ T_{j-1} \circ \ldots \circ T_1(w) \exists \ell' \in [2^j]: \Strip(\ol{w}'_{\ell'}) \in S,$$
which follows primarily from the fact that $T_j$ simulates both $T_{j,0}$ and $T_{j,1}$ as was shown before.

To sum up the proof thus far, we have constructed from an $\algprobname{AIT(NWT)}$ instance a transducer $T$ such that the original instance is a positive one if and only if for the input string $w$ it holds that all transducts in $T(w)$ have a component string that is in $L(A)$ when stripped of all $\blank$-labelled tags. It remains to be seen how we can check for this property in doubly exponential time.

To that end, let $A'$ be a modification of $A$ that ignores $\blank$-labelled tags, constructed from $A$ using a similar construction to that one for each $T_{j,i_j}$. From $A'$, we can construct a DNWA $B$ that gets as input a nested string of tags over $2^k$-tuples of symbols from $\Sigma \cup \{\blank\}$, simulates a copy of $A'$ on each of the $2^k$ components and accepts if and only if at least one of its components accepts. It then holds that the original $\algprobname{AIT(NWT)}$ instance is a positive one if and only if $T(w) \subseteq L(B)$, which yields an instance of a type checking problem. Each level transducer $T_j$ is of size at most $\bigO((|T_{j,0}|\cdot |T_{j,1}|)^{2^k})$, therefore $T$ is of doubly exponential size, and so is $B$. Since the type checking problem with target DNWA is decidable in $\PTIME$ by Theorem \ref{thm:nwttypechecking}(b), this yields a $\iiEXPTIME$ algorithm for $\algprobname{AIT(NWT)}$, as was to be proven.
\end{proof}

A reduction to $\AIT$, along with Proposition \ref{prop:unboundediiexptimelower} now proves Theorem \ref{thm:unboundediiexptime}.

\begin{restate}{Theorem \ref{thm:unboundediiexptime}}
	For the class of replay-free games with NWT, $\Safelr$ is $\iiEXPTIME$-complete.
\end{restate}

\begin{proof}
	The lower bound was proven as Proposition \ref{prop:unboundediiexptimelower}. We prove a matching upper bound by reduction to $\AIT$, which is in $\iiEXPTIME$ according to Proposition \ref{prop:aitupper}.
	
	Let $G=(\Sigma,\funcsymb,R,T)$ be a game with NWT replacement, and let $w$ be an input string for $G$. Let furthermore $k \leq \frac{|w|}{2}$ be the number of occurrences of closing tags from $\cl{\funcsymb}$ in $w$. 
	
	The idea behind the reduction to $\AIT$ is taking $k$ rounds of alternating transduction, where the $j$-th round (with transducers $T_{j,0}$, $T_{j,1}$) corresponds to the replay-free subgame on the $j$-th function symbol in $w$ (in left-to-right order). The choice between transducers $T_{j,0}$ and $T_{j,1}$ models \pones choice between \Read and \Call; to that end, $T_{j,0}$ basically does not change its input string at all, while $T_{j,1}$ simulates the replacement transducer $R$ on the substring that \pone chose to be replaced. The only minor technical difficulty in this construction is the fact that, in the game $G$, the transducer $R$ only rewrites the called substring, while each $T_{j,i}$ rewrites the entirety of the current string. This difficulty can be solved by some minor modifications, which we will now examine.
	
	The input string $w'$ for $\AIT$ is derived from $w$ by replacing, in left-to-right order, each substring $\op{f}v\cl{f}$ of $w$ by $\op{j}\op{f}v\cl{f}\cl{j}$, where $\cl{f}$ is the $j$-th closing function tag in $w$ and $j \notin \Sigma$ for each $j \in [k]$. In other words, the substring on which \pone has to make her $j$-th strategy decision is encapsulated in $j$-tags.
	
	For each $j \in [k]$, the transducer $T_{j,0}$ simply deletes the $\op{j}$ and $\cl{j}$ tags from its input, leaving it otherwise unchanged. The transducer $T_{j,1}$, on the other hand, also directly outputs its input until it reaches the $\op{j}$ tag. It deletes this tag and then starts simulating the replacement transducer $R$. Once $T_{j,1}$ reaches the $\cl{j}$ tag, it deletes that tag as well and stops its simulation, rejecting its input if $R$ has not reached an accepting state. Afterwards, $T_{j,1}$ simply outputs its input again. Note that the simulation of $R$ in $T_{j,1}$ will never receive as input any tags with labels not in $\Sigma$, as all such tags have a label strictly less than $j$ and have therefore already been removed by earlier transductions.
	
	The target DNWA $A$ for $\AIT$ is simply the target DNWA $A(T)$ of $G$.
	
	It is easy to see that $k$, $w'$ and each $T_{j,0}$ for $j \in [k]$ can be computed from $w$ in polynomial time, as can each $T_{j,1}$ from $R$. As the alternating transduction simulates the replay-free game, it is also clear that \pone has a replay-free winning strategy on $w$ in $G$ if and only if the constructed instance for $\AIT$ is a positive one, which concludes the reduction.
\end{proof}

\begin{restate}{Theorem \ref{thm:depthboundedexpspace}}
	For the class of replay-free games with depth-bounded NWT, $\Safelr$ is $\EXPSPACE$-complete.
\end{restate}

We prove the lower and upper bounds of this theorem as separate propositions.

\begin{proposition}\label{prop:depthboundedlower}
	For the class of replay-free games with depth-bounded NWT, $\Safelr$ is $\EXPSPACE$-hard.
\end{proposition}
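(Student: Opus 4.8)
The plan is to prove $\EXPSPACE$-hardness by reduction from the complement of the \emph{(single-player) exponential-width corridor tiling problem}: given a tile set $U$ with constraints $V,H$, a fixed initial row and final row, and a unary number $n$, decide whether there is \emph{no} tiling of a corridor of width $2^n$ and arbitrary height whose first row is the initial row, whose last row is the final row, and which respects $V$ and $H$. Positive instances of this problem are decided in $\NEXPSPACE=\EXPSPACE$ by guessing the tiling row by row while storing only the current row of width $2^n$, and $\EXPSPACE$-hardness follows by the standard simulation of an $\EXPSPACE$ Turing machine (successive rows code successive configurations); since $\EXPSPACE$ is closed under complement, the complement problem is again $\EXPSPACE$-complete. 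In the constructed game $G$, \ptwo plays the role of an existential player producing a candidate tiling, while \pone tries to expose a flaw; I will arrange matters so that \pone has a winning strategy on the start word exactly when \emph{every} candidate tiling is flawed, i.e. when no valid tiling exists.

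The construction closely follows Proposition~\ref{prop:unboundediiexptimelower}, with one decisive change: since the replacement transducer must be depth-bounded, I encode a tiling not as a \emph{deep} strategy tree but as a \emph{shallow yet wide} nested word. A tiling of height $m$ is written row by row as $(U^{2^n}\#)^m$ using a line-divider symbol $\#$, realised by its standard shallow encoding (each tile and each $\#$ becoming a leaf $\op{a}\cl{a}$). The start word is $w=(\op{d}\op{e})^{n}\op{s}\cl{s}(\cl{e}\cl{d})^{n}$, exactly as before. \pones first move is a \Call on $\cl{s}$, to which $R$ lets \ptwo answer with the shallow encoding of any candidate tiling that is \emph{horizontally} consistent, so that $H$-violations are impossible by construction and need never be checked. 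Because the tiling has no opponent inside it, no branching is required and the witness is a plain sequence of rows rather than a tree; crucially, \ptwo may make this output arbitrarily (even non-elementarily) long, which is permitted for a depth-bounded NWT as long as it stays shallow.

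The remaining $2n$ nested wrappers let \pone select a column by iterated \emph{width}-halving, the exact analogue of the depth-halving of Proposition~\ref{prop:unboundediiexptimelower}. In round $i$ \pone optionally calls the innermost $\cl{e}$ (forcing $R$ to relabel $e\mapsto o$, recording one bit of the target column) and then calls the enclosing $\cl{d}$, forcing $R$ to delete every tile in an even (resp. odd) position \emph{within each row block} while preserving the $\#$ separators so that rows stay aligned. After $n$ rounds every block collapses to a single tile and the word is the shallow encoding of one column $\ell\in[2^n]$, separated by $\#$'s; a root label in $\{0,l,x,r\}$, updated exactly as in the general proof, records enough of the bit sequence to tell $\ell$'s role apart. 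The key point is that \emph{every} output $R$ produces here --- the tiling on $\cl{s}$, the $e\mapsto o$ relabelling, each halved word, and the final column --- has constant depth independent of $n$, so $R$ is depth-bounded with bound $d=\bigO(1)$. The target DNWA $T$ accepts a single-column word iff it exhibits a flaw: a vertically adjacent pair violating $V$, a first (resp. last) tile disagreeing with the fixed initial (resp. final) row in position $\ell$, or a malformed $\#$-pattern --- all regular, indeed local, conditions on a shallow word.

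Correctness then reads off as in Proposition~\ref{prop:unboundediiexptimelower}: the only genuine choice of \ptwo is the tiling returned on $\cl{s}$ (every later reply of $R$ is functional, hence forced), and the same safeguards as in the general proof --- the target forbidding the labels $s,d,e,o$ and $R$ rejecting a $d$ nested directly inside an $e/o$-node --- prevent \pone from skipping a wrapper or calling them out of order. Hence \pone wins iff for \emph{every} candidate tiling she can pick a column exposing a flaw, i.e. iff no $V/H$-consistent tiling joining the initial and final rows exists, which is precisely the complement instance. All of $w$, $R$, $T$ are of polynomial size and polynomial-time computable, and $R$ is depth-bounded by a constant. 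The main obstacle I expect is purely the bookkeeping of the width-halving transductions: verifying that deleting alternate tiles \emph{per block} (resetting the parity counter at each $\#$) is realisable by a depth-bounded NWT, that the $\{0,l,x,r\}$ label survives each halving, and that $T$ catches exactly the vertical, boundary and encoding errors while $R$ genuinely enforces horizontal consistency and a well-formed shallow shape --- each routine but tedious, just as in the general proof.
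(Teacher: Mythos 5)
Your proposal is correct and follows essentially the same route as the paper's own proof: a reduction from the complement of exponential corridor tiling in which \ptwo answers \pones \Call on $\cl{s}$ with an arbitrarily long but constant-depth (shallow) encoding of a horizontally-correct tiling candidate, after which \pone isolates a single column through $n$ parity-halving rounds driven by the nested wrapper nodes, with the target DNWA catching vertical and encoding errors in that column. The remaining deviations are cosmetic --- per-block versus global parity deletion, your $d$/$e$-relabelling wrappers versus the paper's $o_i$/$e_i$ pairs, and your ``initial/final row'' formulation, which should be stated with initial and final \emph{tiles} (as in the paper) since the constant-size root label in $\{0,l,x,r\}$ can distinguish the role of the selected column but cannot identify its index $\ell$ exactly.
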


\begin{proof}(sketch)
	This lower bound proof uses the problem \algprobname{Exponential Corridor Tiling}: Given a tiling instance consisting of a tile set $U$, vertical and horizontal constraints $V$, $H$, initial and final tile $u_i, u_f$ and unary number $n$, is there a tiling of width $2^n-1$ and arbitrary height?
	
	From any input tiling instance, we construct a game $G = (\Sigma, \funcsymb, R, T)$ with depth-bounded NWT $R$ and an input word $w$ such that \pone has a winning strategy on $w$ in $G$ if and only if there exists \emph{no} valid tiling of width $2^n-1$, i.e. we reduce from the complement of \algprobname{Exponential Corridor Tiling}. Since \algprobname{Exponential Corridor Tiling} is $\EXPSPACE$-complete, so is its complement.
	
	The basic idea behind the construction of $G$ is similar to that used in the proof of Proposition \ref{prop:unboundediiexptimelower}. The input string is $w = \op{r}\op{o_n}\op{e_n}\cdots\op{o_1}\op{e_1}\op{s}\cl{s}\cl{e_1}\cl{o_1}\cdots \cl{e_n}\cl{o_n}\cl{r}$. \pone is first supposed to \Call $\cl{s}$, allowing \ptwo to respond with the standard nested string encoding $\nw{v}_0$ of a flat string $v_0 \in (U \cup \{\#\})^*$ which is supposed to encode a valid tiling $t$ of width $2^n-1$ in the standard way (i.e. lines of $t$ are concatenated and separated by $\#$ symbols). We can fix the transducer $R$ in such a way that $v_0$ is always a concatenation of horizontally correct substrings (not necessarily of length $2^n-1$) separated by $\#$ symbols. After $\nw{v}_0$ is given, \pone then plays \Call on either $e_1$ or $o_1$, prompting \ptwo to delete either all even-numbered or all odd-numbered positions in $v_0$ and yielding the standard nested string encoding $\nw{v}_1$ of the resulting flat string $v_1$. Continuing this process further (i.e. sequentially calling exactly one of $e_i$ or $o_i$ for each $i \in [n]$) eventually yields a nested string $\nw{v}_n$ which encodes what is supposed to be a single column in the tiling given by \ptwo. This string is then checked for vertical correctness by the target DNWA. In this way, \pone has a winning strategy on $w$ in $G$ if and only if every tiling candidate that can be provided by \ptwo contains some (horizontal or encoding) error, i.e. if there is no valid tiling of the desired width.
\end{proof}

Similar to the proof of Theorem \ref{thm:unboundediiexptime}, the upper bound in Theorem \ref{thm:depthboundedexpspace} is proven by a reduction to (a variant of) the alternating iterated transduction problem: Given a string $w \in \wf$, number $k$ (given in unary), DNWA $A$, and pairs of depth-bounded NWTs $(T_{1,0},T_{1,1}), \ldots, (T_{k,0},T_{k,1})$, is there an $i_1 \in \{0,1\}$ such that for every $w_1 \in T_{1,i_1}(w)$ there exists $i_2 \in \{0,1\}$ such that for every $w_2 \in T_{2,i_2}(w_1)$ \ldots there exists $i_k \in \{0,1\}$ such that for every $w_k \in T_{k,i_k}(w_{k-1})$ it holds that $w_k \in L(A)$?

\begin{proposition}\label{prop:depthboundedaitupper}
	The alternating iterated transduction problem for depth-bounded NWT is in $\EXPSPACE$.
\end{proposition}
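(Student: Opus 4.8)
The plan is to re-use the machinery of Proposition \ref{prop:aitupper} but to replace its final, space-oblivious appeal to polynomial-time type checking by an \emph{on-the-fly} simulation that exploits depth-boundedness. As in that proof, I would first bring every $T_{j,i}$ into normal form (Lemma \ref{lemma:normalform}) and build, for each level $j \in [k]$, the product transducer $T_j$ that runs $T_{j,0}$ and $T_{j,1}$ in parallel on all $2^{j-1}$ components of its input, together with the composed transducer $T = T_k \circ \cdots \circ T_1$ (Proposition \ref{prop:nwtcomposition}) and the DNWA $B$ which runs a blank-ignoring copy $A'$ of $A$ on each of the $2^k$ output components and accepts iff at least one of them accepts. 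Exactly as before, the instance is positive if and only if $T(w) \subseteq L(B)$. The difference is purely algorithmic: I would \emph{not} materialise $T$ and $B$ (they have doubly-exponentially many states), but instead decide the complementary question ``there is some $\bar w \in T(w)$ with $\bar w \notin L(B)$'' by a nondeterministic simulation in exponential space.

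The key observation is that, while $T$ and $B$ each have doubly-exponentially many states, every single configuration of the product ``$T$ on $w$ versus $B$ on the output of $T$'' can be written down in exponential space. A linear state of $T$ (respectively $B$) is a tuple with at most $2^{k+1}$ (respectively $2^k$) components, each of polynomial size, hence of exponential size. For the stacks I use depth-boundedness: since each $T_{j,i}$ produces output of depth at most the constant $d$, every intermediate string in the pipeline, and in particular the final tuple-string read by $B$, has depth at most $d$; consequently the stack of $B$ never exceeds height $d$, and the number of pending $\epsilon$-generated open tags on the stack of $T$ is bounded by $\bigO(k d)$. The only unbounded contribution to the stack of $T$ comes from the nesting of the \emph{fixed} input $w$, which is at most $|w|$. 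Hence both stacks have polynomial height with exponential-size entries, so the whole product configuration occupies exponential space. (This is exactly the point where the argument would break for general NWT, whose intermediate strings may be doubly-exponentially deep.)

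With these bounds in hand I would run the following $\NEXPSPACE$ procedure for the complement: maintain the current configuration of $T$ on $w$ and of $B$ on the output produced so far; at each step nondeterministically pick an applicable transition of $T$ (computing it on the fly from the component transducers $T_{j,i}$, without ever storing $T$'s transition table), emit the corresponding output tag, and feed it to the deterministic automaton $B$; accept if at some point $w$ is fully consumed, $T$ is in an accepting state with empty stack (so the emitted string $\bar w$ really lies in $T(w)$), and $B$ is in a rejecting configuration (no component accepts). Such an accepting run exists precisely when $T(w) \not\subseteq L(B)$, i.e.\ when the instance is negative. Since $\NEXPSPACE = \EXPSPACE$ by Savitch's theorem and $\EXPSPACE$ is closed under complement, this places the alternating iterated transduction problem for depth-bounded NWT in $\EXPSPACE$. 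I expect the main obstacle to be the careful bookkeeping behind the stack bounds: formally certifying that depth-boundedness caps the output depth (and thus $B$'s stack) by the constant $d$, and implementing a single step of the composed $T$ directly from the $T_{j,i}$ in exponential space, dealing with the coroutine-style interleaving of reading and $\epsilon$-transitions between adjacent levels.
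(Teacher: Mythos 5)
Your proposal is correct and takes essentially the same route as the paper: its proof likewise keeps the objects $T_j$, $T$ and $B$ from Proposition \ref{prop:aitupper} purely conceptual and runs an on-the-fly co-nondeterministic simulation (equivalently, your nondeterministic algorithm for the complement) in which depth-boundedness caps all hierarchical stacks, concluding via $\NEXPSPACE = \EXPSPACE$. The one imprecision in your stack analysis --- the bounds of $d$ for $B$'s stack and $\bigO(kd)$ for pending $\epsilon$-opens ignore the nesting contributed by the $\blank$-padding across components, which can reach $\bigO(2^k d)$ --- is shared by the paper's own accounting and does not affect the exponential space bound.
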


\begin{proof}
	Basically, the exponential-space algorithm solving this problem follows the same idea as the $\iiEXPTIME$ algorithm from the proof of Proposition \ref{prop:aitupper}; here, the bounded depth of input transducers ensures that the construction and type checking of transducers $T_j$ (for each $j$) and $T$ from that proof may be simulated on-the-fly by a co-nondeterministic exponential-space algorithm $\calA$.
	
	More specifically, it can be easily proven by induction that, in any run $\rho$ of an NWT receiving as input a string of depth $d_{\text{in}}$ and outputting a string of depth at most $d_{\text{out}}$, the sequence of hierarchical states in any configuration occurring in $\rho$ has length at most $d_{\text{in}} + d_{\text{out}}$.
	
	Since all input transducers are depth-bounded, we can assume without loss of generality that they have a common upper bound $d$ on their output depth. Then, for each $j >1$, each $T_{j,i}$ has an input \emph{and} output depth bounded by $d$, since $T_{j,i}$ receives as its input the output of some $T_{j-1, i'}$. On the other hand, $T_{1,0}$ and $T_{1,1}$ also have an output depth bounded by $d$ and an input depth bounded by $|w|$, which is also fixed for fixed input strings $w$.	Similarly, the DNWA $A$ receives as inputs only outputs of $T_{k,0}$ or $T_{k,1}$, so we may restrict our attention to configurations of $A$ with a hierarchical state sequence of length at most $d$.
	
	The idea behind the algorithm $\calA$, then, is to traverse the input string $w$ from left to right and, on each input tag, co-nondeterministically guess transducts for each combination of follow-up transducers and check that at least one resulting final transduct is in $L(A)$, thus simulating the transducer $T$ from the proof of Proposition \ref{prop:aitupper} and checking that each transduct from $T(w)$ is accepted by the NWA $B$ from that proof.
	
	The algorithm $\calA$ is organised into $k$ layers, with the $j$-th layer (for $j \in [k]$) simulating the workings of transducer $T_j$ from the proof of Proposition \ref{prop:aitupper}, which takes as input $2^{j-1}$-tuples of symbols from $\Sigma\cup \{\blank\}$ and outputs $2^k$-tuples of such symbols by guessing transitions of both $T_{j,0}$ and $T_{j,1}$ on each component of the input tuple. The algorithm $\calA$ does this simulation based on a single tuple of symbols for each layer (called that layer's \emph{active input tuple}) -- once layer $j-1$ produces an output tuple, this tuple becomes the active input for layer $j$, and $\calA$ continues its simulation with layer $j$, producing an output to become the active input tuple for layer $j+1$ by either consuming the active input tuple of layer $j$ or by an $\epsilon$-transition (which leaves the active input tuple for layer $j$ unchanged). Once the active input tuple of some layer $j$ has been consumed, control passes back to layer $j-1$. The bottom layer $k$, instead of producing further active input tuples, directly simulates the effect of its output on $2^k$ modified copies of the DNWA $A$ (i.e. simulates the NWA $B$ from the proof of Proposition \ref{prop:aitupper}).
	
	Some extra care has to be taken regarding deletions. The proof sketch so far (as well as the construction for Proposition \ref{prop:aitupper}) assumes that simulated transitions are synchronised in such a way that, when an input tuple is consumed, each component of that tuple produces two components of the output tuple. This assumption obviously does not hold in the presence of deleting transitions. For this reason, we allow active input tuples to contain \emph{null positions} labelled $\Null$, which do not induce any transition on the corresponding transducers once the tuple is consumed.
	
	In the sequel, as in the proof of Proposition \ref{prop:aitupper}, assume that all input transducers are in normal form as per Lemma \ref{lemma:normalform}, and for each $i \in \{0,1\}$ and $j \in [k]$, let $T'_{j,i}$ be an extension of $T_{j,i}$ to strings containing $\blank$ tags that ignores and reproduces these tags, and let $A'$ be an analogous extension of $A$. We now describe the algorithm $\calA$ in more detail.
	
	For each $j$, the following information is stored for layer $j$:
	\begin{itemize}
	\item A $2^j$-tuple of configurations of $T'_{j,0}$ (in odd positions) and $T'_{j,1}$ (in even positions), each consisting of a linear state and a sequence of at most $d+|w|$ (for $j=1$) or $2d$ (for $j>1$) hierarchical states. These configurations are initialised with the starting configuration of $T'_{j,0}$ or $T'_{j,1}$.
	\item A $2^{j-1}$-tuple of symbols from $\op{\Sigma \cup \{\blank\}} \cup \cl{\Sigma \cup \{\blank\}} \cup \{\Null\}$, where either all non-$\Null$ components are opening tags or all are closing tags (called the \emph{active input tuple} of layer $j$). The non-$\Null$ components of each active input tuple gives the next input symbols to be consumed by ($2^{j-1}$ copies of) $T_{j,0}$ and $T_{j,1}$. The active input tuple of layer $j$ is initialised to $\Null^{2^{j-1}}$
	\end{itemize}
	
	Additionally, $\calA$ stores the following global information:
	\begin{itemize}
	\item A counter $\ell \in \{0, \ldots, |w|\}$ denoting the last position of the input string $w$ that has been read, initialised to $\ell = 0$.
	\item A $2^k$-tuple of configurations of $A'$, each consisting of a single linear state and a sequence of up to $d$ hierarchical states. These configurations are initialised with the starting configuration of $A'$.
	\item A layer counter $j \in \{0, \ldots, k\}$ denoting the current \emph{active layer} being processed (with layer 0 corresponding to the input string itself). This counter is initialised to $j=0$.
	\end{itemize}

	All of this information can obviously be stored in exponential space.
	
	The algorithm $\calA$ now proceeds as follows:
	\begin{enumerate}
	\item If $j=0$...
		\begin{enumerate}
		\item ...and $\ell < |w|$, then the input has not yet been completely read. In this case, $\calA$ sets the active input ($2^0$-)tuple of layer 1 to the $(l+1)$-th symbol of $w$, $\ell := \ell +1$, and $j := 1$.
		\item ...and $\ell = |w|$, then the input has been processed completely, and $\calA$ halts.
		\end{enumerate}
	\item If $0<j<k$...
		\begin{enumerate}
		\item ...and the active input tuple of layer $j$ is \emph{not} equal to $\Null^{2^{j-1}}$, then layer $j$ still has some input to be processed. In this case, $\calA$ guesses co-nondeterministically which of the following two steps to perform:
			\begin{enumerate}
			\item $\calA$ processes the input of layer $j$, i.e. for each non-$\Null$ position $i \in [2^{j-1}]$ of the active input tuple, $\calA$ guesses a transition of $T'_{j,0}$ with the input symbol from that position starting at the configuration in position $2i-1$ of the configuration tuple for layer $j$, and updates that configuration accordingly; similarly, $\calA$ guesses a transition for $T'_{j,1}$ with position $2i$ of the configuration tuple. The outputs of these transitions are written to positions $2i-1$ and $2i$ of a new $2^j$-tuple $t$. For any $\Null$-position $i \in [2^{j-1}]$, positions $2i-1$ and $2i$ of $t$ are then filled with $\Null$ markers. Afterwards, the active input tuple of layer $j+1$ is set to $t$, the active input tuple of layer $j$ is reset to $\Null^{2^{j-1}}$, and $j := j+1$.
			\item $\calA$ performs an $\epsilon$-transition. To that end, $\calA$ guesses an index $i \in [2^j]$. If $i$ is odd, $\calA$ guesses an opening (resp. closing) $\epsilon$-transition for $T'_{j,0}$ available in the configuration at position $(i+1)/2$ of the configuration tuple and updates that configuration accordingly, while configurations at all other positions remain unchanged. It then writes the output of that transition to position $(i+1)/2$ of a new $2^j$-tuple $t$ and fills all other positions of $t$ with $\op{\blank}$ (resp. $\cl{\blank}$) symbols. If $i$ is even, $\calA$ proceeds analogously with $T'_{j,1}$ instead of $T'_{j,0}$ and $i/2$ instead of $(i+1)/2$. Finally, $\calA$ sets the active input tuple of layer $j+1$ to $t$ and $j := j+1$ \emph{without} resetting the active input tuple of layer $j$.
			\end{enumerate}
		\item ...and the active input tuple of layer $j$ is $\Null^{2^{j-1}}$, i.e. layer $j$ currently has no input waiting to be processed. In this case, $\calA$ guesses co-nondeterministically which of the following two steps to perform:
			\begin{enumerate}
			\item $\calA$ performs an $\epsilon$-transition, as described under 2.a.ii.
			\item $\calA$ ends its processing of layer $j$ and sets $j:=j-1$.
			\end{enumerate}
		\end{enumerate}
	\item If $j = k$, then $\calA$ basically proceeds as described under 2., with the only difference being that any time $\calA$ has created an output tuple $t$ of size $2^k$, instead of setting $j:=j+1$, $\calA$ directly aggregates $t$ onto the configuration $2^k$-tuple for $A'$, simulating, for each $i \in [2^k]$, a transition of $A'$ starting from the $i$-th component of the configuration tuple, consuming the $i$-th component of $t$ and storing the resulting configuration in the $i$-th position of the configuration tuple for $A'$.
	\end{enumerate}
	
	Finally, once $\calA$ halts, it accepts if at least one of the following conditions is met:
	\begin{itemize}
	\item One of the $2^k$ stored configurations for $A'$ is accepting. In this case, there is some sequence of existential choices of transducers such that, for the universal choices made co-nondeterministically by $\calA$, the final transduct is in $L(A)$.
	\item For some $j \in [k]$, the configuration tuple for layer $j$ contains some non-accepting configuration. In this case, the co-nondeterministic choices taken by $\calA$ have lead to an incorrect transduction being performed, i.e. the corresponding run of $\calA$ should not be counted against the acceptance condition.
	\end{itemize}
	
	To show correctness of $\calA$, it needs to be proven that there exists a non-accepting run of $\calA$ if and only if for all $i_1 \in \{0,1\}$ there exists $w_1 \in T_{1,i_1}(w)$ such that for all $i_2 \in \{0,1\}$ \ldots for all $i_k \in \{0,1\}$ there exists $w_k \in T_{k, i_k}(w_{k-1})$ with $w_k \notin L(A)$. 
	
	The proof for this is rather technical, but its basic idea is as follows: For the ``only if'' direction, we extract from an accepting run of $\calA$ each ``witness string'' $w_j$ inductively based on the sequence $i_1, \ldots, i_j$ of universal choices and prior witness strings $w_1, \ldots, w_{j-1}$ by taking the output string produced in component $i+1$ of layer $j$, where $i \in \{0, \ldots, 2^j-1$ is the number represented by the binary encoding $i_1 \cdots i_j$. Using the construction of $\calA$, it is then easy to see that $w_j \in T_{j,i_j}(w_{j-1})$, and that $w_k \notin L(A)$. Similarly, for the ``if'' direction we can construct a run of $\calA$ from witness strings given universal choices of indices in $\{0,1\}$, which is non-accepting because all transductions are performed correctly and, regardless of universal choices, the resulting final string $w_k$ is not in $L(A)$.
	
	Finally, as stated above, $\calA$ requires only co-nondeterministic exponential space and can therefore be simulated by an $\EXPSPACE$ algorithm.
\end{proof}

\begin{proofof}{Theorem \ref{thm:depthboundedexpspace}}
	The lower bound was proven as Proposition \ref{prop:depthboundedlower}. For the upper bound, we reduce $\Safelr$ to the alternating iterated transduction problem for depth-bounded transducers using the same reduction as in the proof of Theorem \ref{thm:unboundediiexptime}.
	
	Some care needs to be taken with this reduction, as the reduction from Theorem \ref{thm:unboundediiexptime} constructs, from an input string $w$ and a game $G$ whose replacement transducer has output depth bounded by some constant $d$, an instance for the alternating iterated transduction problem whose transducers have a depth bound $d+|w|$; that is, the depth bound for the alternating iterated transduction problem actually depends on the size of the input. However, the proof of Proposition \ref{prop:depthboundedaitupper} shows that the algorithm given there still only takes exponential space, even if the depth bound of transducers depends on the input, since in this case, the algorithm keeps track of (exponentially many) configurations whose size is still at most polynomial in the input size. In this way, Proposition \ref{prop:depthboundedaitupper} still yields an $\EXPSPACE$ upper bound for $\Safelr$.
\end{proofof}\newpage
		\section*{Proofs for Section \ref{sec:bounded}}

This section gives proofs for the results of Section \ref{sec:bounded}. Since Lemma \ref{lemma:nondeleting} applies in this setting, we generally prove lower bounds for deleting $\epsilon$-free NWT and upper bounds for non-deleting $\epsilon$-free NWT.

\begin{restate}{Theorem \ref{thm:boundedundecidable}}
	For the class of games with $\epsilon$-free NWT and unbounded replay, $\Safelr$ is undecidable.
\end{restate}

\begin{proof}
	The proof is by reduction from the halting problem for Turing machines with an empty input. From a given Turing machine $M$ with working alphabet $\Sigma$ and state set $Q$, we construct a game $G$ and string $w$ such that \pone has a winning strategy on $w$ in $G$ if and only if $M$ halts on an empty input. Without loss of generality, we assume that $M$ always moves its head to the left-most used tape cell before halting.
	
	The basic idea behind the reduction is encoding configurations of $M$ by strings nested below a root $r \notin \Sigma$ and using \Call moves by \pone to $\cl{r}$ to simulate moves of $M$. The input string is $w = \op{r}\nw{v}_0\cl{r}$, where $\nw{v}_0$ represents the initial configuration of $M$, and any time \pone plays \Call on $\cl{r}$ in some string $\op{r}\nw{v}_i\cl{r}$ (with $\nw{v}_i$ representing some configuration of $M$), that string gets replaced by $\op{r}\nw{v}_{i+1}\cl{r}$, where $\nw{v}_{i+1}$ represents the successor configuration of the one represented by $\nw{v}_i$. The target language of $G$ is constructed to contain all strings of the form $\op{r}\nw{v}_h\cl{r}$, where $\nw{v}_h$ represents a halting configuration of $M$. This way, \pone has a winning strategy of \Call depth $k$ on the input string if and only if $M$ halts on the empty input within at most $k$ steps.
	
	We represent configurations of $M$ in the same way as in the proof of Theorem \ref{theo:unboundednotre}: a flat string $(x_1,-) \cdots (x_{k-1},-) (x_k,q) (x_{k+1},-)\cdots (x_m,-)$ over the alphabet $\Sigma \times (Q \cup \{-\})$ denotes that the content of $M$'s working tape is $x_1\cdots x_m$, with the head of $M$ being on the tape's $k$-th cell and $M$ being in state $q$, and these flat strings are represented as nested strings using the standard nested string representation. Again, by our assumption on the shape of $M$'s halting configurations, the flat string encoding of a halting configuration is of the form $(\Sigma \times \{h\}) (\Sigma \times \{-\})^*$ for the halting state $h$ of $M$.
	
	From $M$, we can easily construct a $\epsilon$-free NWT implementing $M$'s transition function on configurations represented in this way.\footnote{Note that this $\epsilon$-free NWT is functional, so $G$ is basically a ``solitaire'' game for \pone, where \ptwo does not get to make any choices. In fact, the replacement relation can even be implemented using a deterministic finite-state transducer that inserts at most two symbols with each transduction.} A DNWA accepting all strings $\op{r}v\cl{r}$ in which $v$ represents a halting configuration for $M$ is similarly easy to construct. Finally, it is clear that \pone has a winning strategy on the input string $w$ using at most $k$ \Call moves if and only if $M$ reaches a halting configuration from its initial configuration within at most $k$ steps, which completes the proof.
\end{proof}

Throughout the rest of this section, let $\Exp(k,n)$ be the $k$-fold exponential tower function in $n$, defined recursively by $\Exp(0,n) = n$ and $\Exp(k,n) = 2^{\Exp(k-1,n)}$ for all integers $k>0$ and $n \geq 0$.

\begin{lemma}\label{lemma:nonelementarysize}
	An input string of length $2 \cdot (n+2k-1)$ can be transformed into a string of length $2 \cdot \Exp(k,n)$ by a game of \Call depth $2$ with deterministic $\epsilon$-free NWT replacement.
\end{lemma}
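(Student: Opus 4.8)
The plan is to prove the lemma by constructing, for each pair $(n,k)$, a single game $G$ with a deterministic $\epsilon$-free replacement transducer together with an input string $w$ of length $2(n+2k-1)$, on which \pone has a forced line of play of \Call depth $2$ whose final word is a path of depth exactly $\Exp(k,n)$, hence of length $2\Exp(k,n)$. Throughout I represent a number $x$ by a \emph{path} of depth $x$, i.e. the nested word $\op{a}^{x}\cl{a}^{x}$ for a suitable label $a$, and the construction raises this depth through $k$ \emph{levels}, each realising $x\mapsto 2^{x}$, starting from a seed of depth $\Exp(0,n)=n$ and ending at $\Exp(k,n)$. The correctness claim is then an induction on the level: after processing level $j$ the current word is a path of depth $\Exp(j,n)$.

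The core gadget is a deterministic $\epsilon$-free NWT that \emph{doubles depth}: reading $\op{a}$ it outputs $\op{a}\op{a}$ and reading $\cl{a}$ it outputs $\cl{a}\cl{a}$, turning a path of depth $m$ into one of depth $2m$ (this plainly satisfies $\epsilon$-freeness, well-formedness and synchronisation). A single application only doubles, but the blow-up comes from iterating it via the game itself. The key observation is that on a word of the form $\op{a}\op{b}v\cl{b}\cl{a}$, if \pone plays \Call on $\cl{b}$ and then on $\cl{a}$, the infix $v$ is subjected to two successive transductions, and \emph{both calls have the same \Call depth}, since $\cl{a}$ is not itself inside a replacement string but merely \emph{encloses} the replacement produced at $\cl{b}$. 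Letting \pone sweep bottom-up over a path of depth $L$ whose nodes carry function symbols, each call therefore re-transduces the already-doubled material below it, so with the doubling transducer the depth obeys $d_{i}=2(d_{i-1}+1)$ and one full bottom-up sweep of a depth-$L$ path yields depth $\Theta(2^{L})$. Thus a \emph{single sweep performs one exponentiation}; by using doubling with a fixed constant correction and choosing the seed appropriately, I arrange that each sweep maps depth $x$ to depth exactly $2^{x}$.

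The delicate point — and the reason \Call depth $2$ suffices for \emph{all} $k$ — is organising the $k$ exponentiations so that no call ever exceeds depth $2$. I nest $k$ layers of \emph{original} control function symbols around the depth-$n$ seed, giving $w$ the shape $\op{c_1}\op{c'_1}\cdots\op{c_k}\op{c'_k}(\text{seed})\cl{c'_k}\cl{c_k}\cdots\cl{c'_1}\cl{c_1}$, which accounts for the $n+2k-1$ nodes (with one control node saved at the boundary). \pone sweeps these control symbols bottom-up; because they are \emph{original} tags that only enclose the replacements built so far, every call on them is a \Call of depth $1$, by the same enclosure reasoning as above. Each such depth-$1$ call on a control symbol re-reads the entire current word — a path of depth $\Exp(j-1,n)$ whose nodes were created by the previous level's depth-$2$ sweep — and emits a \emph{fresh first-level} replacement path of the same depth, relabelled with the level-$j$ sweep symbols. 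This re-emission \emph{flattens} the previous level's nodes back to first-level status, so that the ensuing bottom-up sweep over the new path consists of depth-$2$ calls rather than depth-$3$ calls. Level $j$ thus costs one depth-$1$ setup call followed by one depth-$2$ sweep, and the maximal \Call depth over the whole play stays exactly $2$.

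The main obstacle is precisely the rigorous verification of this depth-$2$ invariant: I must show formally that the depth-$1$ transduction fired on each control symbol collapses the structure left by the preceding depth-$2$ sweep into a single first-level replacement, so that call depths do not accumulate as $j$ grows. The remaining steps are bookkeeping: packaging the relabelling/flattening, depth-doubling and boundary behaviours into one deterministic $\epsilon$-free NWT (dispatching on the disjoint label sets $c_j,c'_j$ and the per-level sweep symbols) and checking the $\epsilon$-freeness, well-formedness and synchronisation conditions; and tuning the constant in the recurrence $d_i=2(d_{i-1}+1)$ together with the seed depth so that the $j$-th sweep realises $x\mapsto 2^{x}$ exactly. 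Granting these, a straightforward induction shows that after all $k$ levels the word is a path of depth $\Exp(k,n)$, i.e. of length $2\,\Exp(k,n)$, which completes the proof.
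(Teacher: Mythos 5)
Your proposal is correct and uses exactly the paper's two key ideas: a bottom-up sweep of calls with a doubling transducer realises one exponentiation per sweep, and the original control symbols enclosing the accumulated material are called at depth 1 so that they re-emit everything as a fresh first-level replacement string, keeping every sweep call at depth 2. The paper differs only in bookkeeping you already flag as such: it doubles the count of a separate set of counter nodes $c_1,\dots,c_k$ instead of the path depth itself (so each sweep realises $x \mapsto 2^x$ exactly, with no constant correction), and it uses $k-1$ control nodes rather than $2k$, which yields the stated input length $2\cdot(n+2k-1)$ precisely.
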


\begin{proof}
	Choose as input a string of the form $\op{k-1}\op{k-2} \cdots \op{1} \op{c_0}^n \op{c_1} \cdots \op{c_k} \cl{c_k} \cdots \cl{k-1}$. The tree represented by this string is obviously a path of length $n+2k-2$. Play proceeds in $k$ \emph{rounds} as follows: In round $i \in [k]$, \pone plays \Call on each node labelled $c_{i-1}$ in bottom-up (i.e. left-to-right) order. Each such \Call move deletes the called $c_{i-1}$ node and doubles the number of $c_{i}$ nodes below it (i.e. replaces each $\op{c_{i}}$ by $\op{c_{i}}\op{c_{i}}$ and $\cl{c_{i}}$ by $\cl{c_{i}}\cl{c_{i}}$). Afterwards, if $i < k$, \pone plays \Call on the node labelled $i$, which deletes that node, attaches its child path to its $i+1$-labelled parent and allows \pone to play again on its child path, since this is the first time \pone has played \Call on the node labelled $i$. By an induction argument, it is easy to show that at the conclusion of round $i$, the current string contains exactly $2 \cdot \Exp(i,n)$ tags labelled $c_i$ (half of them opening, the other half closing tags).
\end{proof}

Lemma \ref{lemma:nonelementarysize} is an integral part in proving the following lower bound.

\begin{proposition}\label{prop:coknexptimelower}
	For each $k \geq 1$, it holds that for the class of games with $\epsilon$-free NWT, and \Call depth bounded by $2$,  $\Safelr$ is hard for $\cokNEXPTIME$.
\end{proposition}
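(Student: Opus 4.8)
The plan is to reduce from the complement of the $k$-fold exponential corridor tiling problem. Concretely, I would take as the source language the set of tiling instances $\calI = (U,V,H,u_i,u_f,n)$ for which there exists \emph{no} valid tiling of a corridor of width $\Exp(k,n)$ (and arbitrary height). Since the existential version of this problem is $\kNEXPTIME$-complete by the standard padding argument (a nondeterministic $\Exp(k,n)$-time machine is simulated by a corridor of width $\Exp(k,n)$), its complement is $\cokNEXPTIME$-complete. From $\calI$ I would build, in polynomial time, a game $G$ with $\epsilon$-free NWT replacement and \Call depth $2$ together with an input string $w$ such that $w \in \safelr(G)$ if and only if $\calI$ admits no valid tiling.

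The heart of the construction is the blow-up gadget of Lemma~\ref{lemma:nonelementarysize}. Because the replacement transducer must be $\epsilon$-free, \ptwo cannot hand \pone an exponentially wide row in a single move (as he could for the depth-bounded but $\epsilon$-carrying transducers in Proposition~\ref{prop:depthboundedlower}); all of the $\Exp(k,n)$ width must instead be manufactured by the iterated doubling that Lemma~\ref{lemma:nonelementarysize} realises within \Call depth exactly $2$. I would therefore start from the polynomial-length input path of Lemma~\ref{lemma:nonelementarysize}, namely $\op{k-1}\cdots\op{1}\,\op{c_0}^{\,n}\op{c_1}\cdots\op{c_k}\cl{c_k}\cdots\cl{k-1}$, and let \pone drive the $k$ doubling rounds exactly as in that lemma, so that after the final round the string contains $\Exp(k,n)$ positions labelled $c_k$. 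The key modification is that in the last round the replacement transducer, instead of merely doubling, forces \ptwo to commit to a tile from $U$ for each newly created $c_k$-position, so that the blown-up path simultaneously encodes a row (or, after a product construction folding vertical constraints into the horizontal alphabet, a full tiling history) of width $\Exp(k,n)$. Horizontal correctness and the correct propagation of the doubling structure are local properties that I would enforce directly inside the $\epsilon$-free NWT, while the target DNWA checks the remaining global conditions, such as the initial and final tile.

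As in Proposition~\ref{prop:unboundediiexptimelower} and Theorem~\ref{thm:conexptime}, the game is arranged so that \pone is the player trying to \emph{expose a flaw}: she wins precisely when she can force the play into a final string that the target DNWA accepts, i.e.\ one witnessing a violated constraint, a wrong boundary tile, or a corrupted doubling step. To keep \pone honest I would, following the pattern of the earlier tiling reductions, make the target language reject any string that still contains one of the auxiliary labels $1,\dots,k-1,c_0,\dots,c_{k-1}$, and make the transducer reject ill-formed nestings (such as two nested round-markers), so that \pone is compelled to carry out all doubling calls in the prescribed bottom-up order before any verification can succeed. If $\calI$ has no valid tiling, then every tile assignment \ptwo can produce during the doubling necessarily contains an error that \pone can steer the DNWA onto; conversely, if a valid tiling exists, \ptwo can play it faithfully and no accepted final string is ever reachable, so \pone loses. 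This gives $w \in \safelr(G)$ iff $\calI$ has no valid tiling, establishing $\cokNEXPTIME$-hardness for every fixed $k$.

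The step I expect to be the main obstacle is the verification of \emph{vertical} constraints across rows of width $\Exp(k,n)$. A left-to-right DNWA can compare horizontally adjacent cells for free, but matching a cell to the one $\Exp(k,n)$ positions later would require each position to carry an explicit address, and for large $k$ that address is itself of $k$-fold exponential magnitude. The cleanest way around this that I would pursue is to avoid addresses entirely: reduce instead from a tiling variant in which vertical constraints have already been absorbed into the tile alphabet by a product construction, so that a valid tiling collapses to a single $\Exp(k,n)$-long string subject only to \emph{local} (adjacent-position) constraints together with boundary conditions --- exactly the kind of property a polynomial-size DNWA can check on the output of the blow-up. Reconciling this folding with the requirement that the doubling transducer remain $\epsilon$-free and that the whole orchestration stay within \Call depth $2$ is where the real work of the proof lies.
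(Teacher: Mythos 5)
Your reduction is built on the wrong source problem, and the argument you give for its complexity does not hold up. A corridor of width $\Exp(k,n)$ and \emph{arbitrary} height captures space-bounded, not time-bounded, computation: rows encode configurations of a machine using $\Exp(k,n)$ tape cells and unbounded height means unbounded running time, so that problem is complete for $k$-\EXPSPACE (the paper uses exactly its $k=1$ case for the \EXPSPACE lower bound of Proposition \ref{prop:depthboundedlower}), not for \kNEXPTIME. The padding/simulation argument you invoke -- simulating a nondeterministic $\Exp(k,n)$-time machine by a tiling -- bounds \emph{both} dimensions of the grid, and the resulting \kNEXPTIME-complete problem is the one the paper actually reduces from: existence of a tiling of height $\Exp(k,n)$ and width $\Exp(k,n)-1$. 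The distinction is not cosmetic here. With $\epsilon$-free replacement and \Call depth $2$, only finitely many, boundedly long strings are reachable from a given input (this finiteness is precisely what gives decidability in Theorem \ref{thm:boundednonelementary}), so \ptwo is physically unable to exhibit candidates of arbitrary height; such candidates can only be produced by transducers with $\epsilon$-transitions, as in Proposition \ref{prop:depthboundedlower}. So the height must be bounded, and once it is, your claimed source problem changes.

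Second, the obstacle you flag yourself -- verifying vertical constraints between cells $\Exp(k,n)$ positions apart -- is not overcome by your proposed fix; it is relocated into the source problem, where it evaporates together with the hardness. Folding vertical constraints into the alphabet by a product construction requires super-tiles standing for whole columns or rows, i.e.\ an alphabet of size $|U|^{\Exp(k,n)}$, which is not polynomial. If instead you pass to a genuinely one-dimensional problem over a polynomial alphabet with only adjacent-position constraints and boundary conditions, then the valid candidates form a regular language recognised by a polynomial-size automaton, and asking whether it contains a word of length $\Exp(k,n)$ amounts to membership of $\Exp(k,n)$ in an ultimately periodic set of lengths -- decidable in polynomial time via iterated modular exponentiation. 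That problem is nowhere near \cokNEXPTIME-hard, so the reduction would prove nothing. The paper keeps the problem two-dimensional and builds the verification into the game itself: phase 1 runs the blow-up of Lemma \ref{lemma:nonelementarysize} only to size $\Exp(k-1,n)$; phase 2 forces \ptwo to lay out the \emph{entire} two-dimensional candidate as a flat string of $\Exp(k,n)^2$ leaves (rows concatenated, separated by $\#$, horizontal correctness enforced by the transducer); and phase 3 equips \pone with $\Exp(k-1,n)$ halving moves, where each \Call on a $d$-node deletes all even- or all odd-numbered columns (\pones \Read/\Call choice on the adjacent $e$-node selecting which), while a $p$-node tracks whether the surviving column is the last or next-to-last one. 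After all halvings a single column remains, vertical adjacency has become adjacency in the string, and a polynomial-size DNWA checks it. This column-isolation machinery, or an equivalent mechanism, is the indispensable ingredient your proposal lacks.
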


\begin{proof}
	Let $k \geq 1$. We show $\cokNEXPTIME$-hardness by reduction from the complement of the $\kNEXPTIME$-complete problem \algprobname{$k$-ExpTiling} of, given a tile set $U$, vertical and horizontal constraints $V$, $H$, initial and final tile $u_i, u_f$ and unary number $n$, determining whether there exists a valid tiling of height $\Exp(k,n)$ and width $\Exp(k,n)-1$.
	
	We construct from an instance of \algprobname{$k$-ExpTiling} a game $G$ and a string $w$ such that \pone has a winning strategy in $G$ starting at $w$ if and only if there exists \emph{no} valid tiling of the given size. We first give a rough overview over the game before describing the construction of $G$ and $w$ in detail.
	
	Play proceeds in three phases:
	\begin{itemize}
	\item In the first phase, the construction from Lemma \ref{lemma:nonelementarysize} is used to transform the polynomial-sized input string into a \emph{seed string} of size $\bigO(\Exp(k-1,n))$. This phase is deterministic in the sense that we will design the replacement transducer and target language in a way such that neither \pone nor \ptwo get to make any choices in phase 1. 
	\item In phase 2, \ptwo constructs from this seed string a \emph{tiling candidate}, i.e. a (representation of a) potential tiling of size $\Exp(k,n) \times (\Exp(k,n)-1)$. In this phase, \pone still doesn't get to make any choices; all she is supposed to do is to \Call certain nodes in order to allow \ptwo to construct the tiling candidate. By fixing the replacement transducer in the proper way, we can ensure that this tiling candidate has no horizontal errors and starts with the initial tile.
	\item In the final phase, \pone tries to show that the tiling candidate \ptwo constructed in phase 2 does not represent a valid tiling. To this end, she repeatedly forces \ptwo to delete either all even or all odd columns in his tiling candidate until only a single column is left; this column is then checked for vertical errors by the target language automaton.
	\end{itemize}

	Note that, during all of the three phases, we require \pone to stick to a certain ``game plan'', in which she only calls certain nodes in a pre-determined order. This can be enforced by constructing the replacement transducer and target language automaton in an appropriate way. We will first examine phase-by-phase how the game $G$ is constructed, assuming \pones compliance, and later describe how $G$ has to be modified to prevent deviations from the game plan.
	
	Let $v_{k,n}$ be the sequence of opening tags from the proof of Lemma \ref{lemma:nonelementarysize}, i.e. $v_{k,n} = \op{k-1}\op{k-2} \cdots \op{1} \op{c_0}^n \op{c_1} \cdots \op{c_k} $, and let $\overline{v_{n,k}}$ be the complementary sequence of closing tags. The input string $w$, then, is of the form 
	$$w = \op{\text{start2}}\op{\text{mv}} \op{\text{dbl}} \op{\text{cp1}} v_{k-1,n} \op{r}\op{p}\cl{p}\cl{r} \overline{v_{k-1,n}} \cl{\text{cp1}} \cl{\text{dbl}} \cl{\text{mv}} \cl{\text{start2}}.$$ The purpose of each tag will be explained when it is first used in the game.
	
	At the start of phase 1, \pone uses the procedure described in the proof of Lemma \ref{lemma:nonelementarysize} to rewrite $w$ into the following string of $(k-1)$-fold exponential size:
		$$\op{\text{start2}}\op{\text{mv}} \op{\text{dbl}} \op{\text{cp1}} \op{c_{k-1}}^\ell \op{r}\op{p}\cl{p}\cl{r} \cl{c_{k-1}}^\ell \cl{\text{cp1}} \cl{\text{dbl}} \cl{\text{mv}} \cl{\text{start2}},$$
		where $\ell = \Exp(k-1,n)$. Afterwards, \pone plays her first \Call on the node labelled cp1 (where ``cp'' stands for ``copy''), allowing her a replay on $\op{c_{k-1}}^e \op{r}\cl{r} \cl{c_{k-1}}^e$. In this replay, she calls every $c_{k-1}$ in left-to-right order; each such \Call replaces the called $c_{k-1}$ by $c'_{k-1}$, replaces $\op{r}$ by $\op{r}\op{d}\op{e}$ and replaces $\cl{r}$ by $\cl{e}\cl{d}\cl{r}$. After this rewriting, the $r$-labelled node has a child path consisting of $\ell$ alternating $d$- and $e$-nodes.
		
	Next, \pone plays \Call on the node labelled dbl (for ``double''), which rewrites the $c'_{k-1}$-labelled path into a $c'_{k-1}$-labelled path of double length, i.e. the rewriting transducer replaces each $\op{c'_{k-1}}$ by $\op{c'_{k-1}}\op{c'_{k-1}}$ and each $\cl{c'_{k-1}}$ by $\cl{c'_{k-1}}\cl{c'_{k-1}}$. With her next call to mv (for ``move''), \pone gets another replay on this path of length $2 \ell$, calling each $c'_{k-1}$ node in left-to-right order, which causes \ptwo to delete that node and insert a single $c$ (for ``create'') node  below the bottom $p$-labelled node. Finally, \pone plays \Call on the root node labelled start2 to start the second phase of the game; as his reply to this \Call, \ptwo deletes the start2-node and inserts a node labelled $u_i$ as a leaf at the very bottom of the path. This ends phase 1 and leaves \pone to play a replay-free game on the string 
		$$w_1 = \op{r} (\op{d} \op{e})^\ell \op{p} \op{c}^{2 \ell} \op{u_i}\cl{u_i} \cl{c}^{2\ell} \cl{p} (\cl{e}\cl{d})^\ell \cl{r}.$$
		
	In the second phase, \ptwo is supposed to set up a tiling candidate. This candidate will be encoded within the leaves of the current string, i.e. in siblings of the initial $u_i$-labelled leaf. To this end, \pone calls each of the $c$-nodes in bottom-up order. On each such call, \ptwo deletes the called node and doubles the number of leaves in the current string. More precisely, the replacement transducer may replace each $\cl{u}$ (for some  $u \in U$) either by $\cl{u}\op{u'}\cl{u'}$ for some $u' \in U$ with $(u, u') \in H$, or by $\cl{u}\op{\#}\cl{\#}$ for a divider symbol $\# \notin U$. Since each \Call to a $c$-node doubles the length of the current tiling candidate, it is clear that when all $c$-nodes have been called (and phase 2 ends), the current string is of the form
	$$w_2 = \op{r} (\op{d} \op{e})^\ell \op{p} v_t \cl{p} (\cl{e}\cl{d})^\ell \cl{r},$$
	where $v_t$, the final tiling candidate, is a string of nesting depth zero consisting a total of $2^{2\ell} = (2^\ell)^2 = \Exp(k,n)^2$ pairs of corresponding opening and closing tags with labels from $U \cup \{\#\}$, beginning with $\op{u_i}\cl{u_i}$.
		
	We interpret the tiling candidate $v_t$ as the concatenation of $\Exp(k,n)$ lines of length $\Exp(k,n)$ each. To encode a valid tiling of size $\Exp(k,n) \times (\Exp(k,n)-1)$, we expect $v_t$ to be of the form $((\op{U}\cl{U})^{\Exp(k,n)-1}\#)^{\Exp(k,n)}$, i.e. we expect \ptwo to use the symbol $\#$ as a line separator only. 
		
	It should be clear from the construction that the string $v_t$ represents a concatenation of several (possibly empty) horizontally correct substrings of tiles, separated by $\#$, the first of which starts with $u_i$. The tiling candidate encoded by $v_t$ may, however, still contain one or more of the following types of errors:
	\begin{itemize}
		\item \emph{Vertical error}: Two vertically adjacent tiles $u, u'$ with $(u,u') \notin V$;
		\item \emph{Incorrect line lengths}: Strictly more or less than $\Exp(k,n)-1$ symbols from $U$ between two subsequent $\#$;
		\item \emph{Incorrect final tile}: The last symbol from $U$ in $v_t$ is not $u_f$.
	\end{itemize}

	The main observation needed for this part of the reduction is the following: if a tiling candidate does not represent a correct tiling, then at least one of these errors can be found by examining just a single column of the tiling candidate. If a tiling candidate contains a vertical error, then there is a column containing two subsequent tiles $u,u'$ with $(u,u') \notin V$; if some line length is incorrect, then the last column contains some symbol other than $\#$, or there is a $\#$ in a column that is \emph{not} the last column; and finally, if the tiling candidate does not end with $u_f$, then the next-to-last column (i.e. column number $\Exp(k,n)-1$) does not end with $u_f$. It is easy to see that, once a single column of $v_t$ has been isolated, all three of these conditions can easily be checked using a polynomial-sized DNWA.
	
	\pones task on the string $w_2$ therefore consists of isolating a single column containing an error. If and only if she manages to do so, she wins the game (i.e. the target DNWA checking for the existence of one of the above kinds of errors accepts). 
	
	To isolate a single column, \pone plays \Call moves on all nodes labelled $d$ (for ``destroy'') in bottom-up order. Each such \Call removes the called $d$-node and its child, and forces the replacement transducer to either delete all even-numbered columns or all odd-numbered columns by deleting every other node in $v_t$. \pone makes the choice of whether to delete all even-numbered or all odd-numbered columns by playing either \Read or \Call on the $e$-labelled node just below the $d$-node she is to call next; a \Read move leaved the label $e$ (``even'') intact, causing the replacement transducer to delete all even-numbered columns on \pones \Call to the $d$-node above, while a \Call move relabels $e$ into $o$ (``odd''), causing all odd-numbered columns to be deleted analogously. Each such deletion step halves the number of remaining columns, which means that after $\ell$ deletion moves, only a single column of length $\Exp(k,n)$ remains of $v_t$.
	
	From the above considerations concerning error types, it is clear that some additional information needs to be tracked through the deletion phase, to determine whether the column chosen by \pone is the last, next-to-last, or some other column. This is the purpose of the node labelled $p$ (``position''), which is rewritten depending on the sequence of \pones choices of even and odd columns:
	
	\begin{itemize}
		\item To reach the last column, \pone must successively remove only odd-numbered columns. Therefore, on the first \Call to a $d$-node with $o$-child (and $p$-grandchild), $p$ is rewritten into $l$ (``last''). Any \Call to a $d$-node with $e$-child and $l$-grandchild rewrites $l$ into $s$ (``standard'', i.e. the column to be checked is not a special case), while calling a $d$-node with $o$-child leaves the label $l$ as is.
		\item To reach the next-to-last column, \pone must remove all even-numbered columns in the first step and successively remove only odd-numbered columns after that. Therefore, on the first \Call to a $d$-node with $e$-child (and $p$-grandchild), $p$ is rewritten into $x$ (``neXt-to-last''). Any \Call to a $d$-node with $o$-child and $x$-grandchild leaves the label $x$ intact, while a \Call to a $d$-node with $e$-child and $x$-grandchild relabels $x$ to $s$.
	\end{itemize}

	Using the label of the rewritten $p$-node as an indicator, a polynomial-sized DNWA can now easily check whether a column chosen by \pone contains some error. Assuming that \pone is restricted to strategies that follow the described order of \Call moves, it is straightforward (if tedious) to prove that \pone has a winning strategy if and only if there does not exist a valid tiling of size $\Exp(k,n) \times (\Exp(k,n)-1)$. If such a tiling exists, \ptwo can give its encoding as $v_t$ and deny \pone the opportunity to find any errors, no matter which column she isolates, and if no such tiling exists, then any tiling candidate given by \ptwo necessarily contains at least one error, which \pone can then point out.
	
	Finally, we examine how \pone may be restricted to play only according to the game plan described above. The basic idea behind this is to construct the replacement transducer and target DNWA in such a way that any deviation from the game plan causes \pone to immediately and irrevocably lose the game, thus ensuring that any winning strategy, should one exist, sticks to the game plan.
	
	The construction of the target language so far already ensures that \pone does not leave any undesired uncalled nodes behind, i.e. since target strings may only have tags from $U \cup \{\#, r, l, x, s\}$, \pone loses automatically if, for instance, there are any uncalled $d$-nodes left behind in the final string.
	
	The only thing that still needs to be ensured is that \pone does not skip any calls. For instance, we could imagine \pone trying to cheat in phase 1 by playing \Call on some $i$-labelled node without having called all nodes labelled $c_{i-1}$ below it, or in phase 3 by calling some $d$-labelled node while there are still uncalled $c$-labelled nodes. There are numerous further situations like these, but they are all handled in the same way: on any \Call to a function symbol, the nested substrings supposed to be rewritten due to that \Call are required to be of a specific form; for instance, on a \Call to some $i$-labelled node in phase 1, the substring nested below it has to be of the form 
	$$\op{c_i}^* \op{c_{i+1}} \cdots \op{c_k} \op{r}\op{p}\cl{p}\cl{r} \cl{c_k} \cdots \cl{c_{i+1}} \cl{c_i}^*.$$
	If the substring below the called function node is not of the desired form (for instance due to remaining $c_{i-1}$-nodes in the previous example), the replacement transducer constructed according to the ideas laid out above will not have an accepting run on this incorrect substring, therefore reject that substring, and, according to the semantics of NWT games, \ptwo will win the game immediately. This shows that we can safely assume \pone to be restricted to the game plan laid out above.
	
	Finally, it is relatively easy (but again rather tedious) to prove that the replacement transducer described above can be constructed to be of polynomial size in $|U|$, $n$ and $k$.
\end{proof}

\begin{restate}{Theorem \ref{thm:boundednonelementary}}
	For the class of games with $\epsilon$-free NWT and \Call depth bounded by $d \geq 2$, $\Safelr$ is decidable, but not decidable in elementary time.
\end{restate}

\begin{proof}
	Decidability follows from the fact that, due to the restriction to $\epsilon$-free NWT, there are only finitely many possible replacements for each substring that \pone plays \Call on. This, combined with the finite \Call depth, means that all strategies for \pone and all possible plays for each strategy can be enumerated in finite time.
	
	The non-elementary lower bound on complexity is implied by Proposition \ref{prop:coknexptimelower}: Assume that, for some $d \geq 2$, there exists a $k$ such that $\Safelr(\calG_d)$ is decidable in $k$-fold exponential time, where $\calG_d$ denotes the class of games with $\epsilon$-free NWT and \Call depth $d$. It then follows (by a trivial reduction) that $\Safelr(\calG_2)$ is also in $\kEXPTIME$, and therefore in $\cokNEXPTIME$. However, by Proposition \ref{prop:coknexptimelower}, $\Safelr(\calG_2)$ is hard for $\cokpNEXPTIME$, which yields a contradiction to the nondeterministic time hierarchy theorem \cite{Cook73}.
\end{proof}

\begin{restate}{Theorem \ref{thm:conexptime}}
	For the class of replay-free games with $\epsilon$-free NWT, $\Safelr$ is $\coNEXPTIME$-complete.
\end{restate}

\begin{proof}
	The lower bound 
	follows as in the proof for Proposition \ref{prop:coknexptimelower}, omitting the first phase of the game constructed there and setting $k=1$ in the second and third phase.
	
	The co-nondeterministic exponential-time algorithm yielding a matching upper bound is conceptually very straightforward: It moves through the input string from left to right, recursively trying out all possible strategy decisions for \pone while guessing universally \ptwos strategy decisions.
	
	To formalise this algorithm, we use the shorthand notation $[u,v]$ for game positions, where $uv \in \wf$, with $u$ denoting the substring that has already been processed and $v$ the substring that is yet to be played on (including the closing tag on which \pone is to move next). For a string $u\cl{a} \in \hat{\Sigma}^*$, let $\Last(u\cl{a})$ denote the (unique) rooted substring of $u\cl{a}$ ending at $\cl{a}$. The following algorithm \textsc{CheckWin}$(G,[u,v])$ then tests whether \pone has a replay-free winning strategy in game $G = (\Sigma, \funcsymb, R, T)$ starting at position $[u,v]$.

 	\begin{algorithm}[h]
   \caption{\textsc{CheckWin}$(G,[u,v])$}
   \label{alg:checkwin}
   \begin{algorithmic}[1]
 	\IF{$v = \epsilon$}
 		\IF{$u \in T$}
 			\STATE Accept
 		\ELSE
 			\STATE Reject
 		\ENDIF
 	\ENDIF
 	\IF{$v = tv'$ for $t \in \op{\Sigma} \cup (\cl{\Sigma} \setminus \cl{\funcsymb})$}
 		\STATE // \pone may not make a strategy choice on $t$; move to the right.
 		\STATE Return \textsc{CheckWin}$(G, [ut,v])$
 	\ENDIF
 	\IF{$v = \cl{f}v'$ for $f \in \funcsymb$}
 		\STATE // Try out both strategy options for \pone{}; \Read first.
 		\IF{\textsc{CheckWin}$(G, [u\cl{f},v])$ accepts}
 			\STATE Accept
 		\ELSE
 			\STATE Guess universally a transduct $u_t \in R(\Last(u\cl{f}))$
 			\STATE $u' \gets u\cl{f}$ with $\Last(u\cl{f})$ replaced by $u_t$
 			\IF{\textsc{CheckWin}$(G, [u',v'])$ accepts}
 				\STATE Accept
 			\ELSE
 				\STATE Reject
 			\ENDIF
 		\ENDIF
 	\ENDIF
   \end{algorithmic}
 \end{algorithm}	

	As the algorithm \textsc{CheckWin}$(G, [ut,v])$ directly mimics the gameplay according to $G$ from position $[u,v]$, a simple induction argument suffices to prove that \textsc{CheckWin}$(G, [ut,v])$ accepts if and only if \pone has a replay-free winning strategy in $G$ from $[u,v]$; therefore, \textsc{CheckWin}$(G, [\epsilon,w])$ accepts if and only if $w \in \safelr[1](G)$. It remains to be shown that \textsc{CheckWin} indeed runs in co-nondeterministic exponential time.
	
	To that end, we first examine the maximum size of positions used as inputs for recursive calls of \textsc{CheckWin} (where the size of a position $[u,v]$ is defined as $|uv|$). On some input $[u,v]$, \textsc{CheckWin} can only increase the size of $[u,v]$ through the transduction in lines 14 and 15. Since $R$ is a NWT without $\epsilon$-transitions, any of its outputs on some input $x$ may only have size $c \cdot |x|$ for some constant $c$ depending only on $R$. This in turn means that the size of positions may only increase by a factor of $c$ for each recursive call to \textsc{CheckWin}.
	
	Obviously, as each recursive call to \textsc{CheckWin} removes one symbol from the right side of a position and insertions only occur on the left side of positions, the recursion depth of \textsc{CheckWin} on an input position $[\epsilon,w]$ is at most $w$. This in turn means that input positions for recursive calls to \textsc{CheckWin} may be of size at most $|w| \cdot c^{|w|}$.
	
	Finally, since each recursive call branches into at most two further calls of \textsc{CheckWin}, the algorithm's recursion tree has at most exponentially many nodes in the size of $w$. This implies that the membership test in line 2 and the co-nondeterministic choice in line 14 are executed at most exponentially many times on at most exponentially long strings. Since membership testing both for NWA and for NWT is in $\PTIME$, this yields an exponential upper bound on the running time of \textsc{CheckWin}.
\end{proof}

\begin{restate}{Theorem \ref{thm:boundedcallwidth}}
	For the class of games with $\epsilon$-free NWT, \Call depth bounded by $d \geq 1$ and \Call width bounded by $k \geq 1$, $\Safelr$ is $\coNEXPTIME$-complete.
\end{restate}

\begin{proof}
	As replay-free games always have bounded \Call width, the lower bound carries over directly from Theorem \ref{thm:conexptime}.
	
	The algorithm for the upper bound is almost the same as the one for Theorem \ref{thm:conexptime}, simply going through the input string from left to right, trying out all possible strategies for \pone and guessing co-nondeterministically replacement strings for \ptwo. The only difference is that, in this case, it also needs to track how much of the allowed \Call width and depth in the current substring has been used up already.
	
	As in the proof of Theorem \ref{thm:conexptime}, the correctness of this algorithm is obvious as it basically just simulates gameplay.
	
	To prove that the running time and amount of non-determinism required by the algorithm is at most exponential, we again examine the maximum size of occurring strings and of the decision tree for \pone for any fixed counter-strategy of \ptwo, i.e. the recursion tree in any run of the algorithm.
	
	For string sizes, we again note that, for any replacement NWT $R$ and string $w$, all strings in $R(w)$ are of size at most $s\cdot |w|$ for some constant $s$ only depending on $R$. Furthermore, any replacement substring $v$ resulting from a \Call move of depth $r<d$ allows for at most $k$ further \Call moves of depth $r+1$ (each of which may be further played on with \Call width $k$ if $r+1 < d$). Each of these depth-$(r+1)$ substrings may be of size at most $s^k \cdot |v|$ (in case all $k$ allowed calls go into re-transducing $v$ in some fashion). In total, this means that increasing \Call depth by 1 increases the size of occurring strings by a multiplicative factor exponential in $k$. Therefore, with \Call depth $d$ and \Call width $k$, the final string after a play on some input string $w$ is at most of size $|w| \cdot 2^{\bigO(dk)}$, and this size is also an upper bound on the size of each single transduct (and therefore also on the amount of non-determinism required any time a transduct is guessed).
	
	Since each \Call move of depth $r$ enables at most $k$ further calls of depth $r+1$, and there are at most $|w|$ possible \Call positions of depth 0 in any input string $w$, \pone may play at most $c_{\text{max}} \mydef |w| \cdot k^d$ \Call moves in any play on $w$.
	
	To reach an upper bound on the number of positions \pone has available to place calls on during an entire play, we use the above upper bound of $|w| \cdot 2^{\bigO(dk)}$ on the size of working strings and assume (as a generous estimate) that the \emph{entire} string is replaced by a different string of length $|w| \cdot 2^{\bigO(dk)}$ every time \pone plays \Call on some position. Then, the total number $\ell_{\text{max}}$ of positions \pone can place calls on is bounded by the total number of positions that ever become available during the entire game, which is at most the maximum string length times the maximum number of available \Call moves, i.e. $\ell_{\text{max}} = (|w| \cdot 2^{\bigO(dk)}) \cdot (|w| \cdot k^d) = |w| \cdot 2^{\bigO(dk)}$.
	
	We can now give an upper bound on the size of the decision tree for \pone for a given counter-strategy by \ptwo. Assume nodes of this tree to be labelled with positions that \pone has to make a strategy decision on, and outgoing edges to be labelled either \Read or \Call. Then, each root-to-leaf path has length equal to at most $\ell_{\text{max}}$ and contains at most $c_{\text{max}}$ edges labelled \Call, with all other edges being labelled \Read. Since the sequence of \Read{}- and \Call{}-edges taken from the root forms a unique address in this tree, the number of different paths in the tree is at most the number of strings of length $\ell_{\text{max}}$ over $\{\Read, \Call\}$ containing at most $c_{\text{max}}$ occurrences of \Call, which is in $\ell_{\text{max}}^{\bigO(c_{\text{max}})}$. Together with $\ell_{\text{max}}$ being the maximum length of paths, this implies that the decision tree has at most $\ell_{\text{max}}^{\bigO(c_{\text{max}})} = |w| \cdot 2^{\bigO(|w| \cdot dk^{d+1})}$ nodes. 
	
	Since $d$ and $k$ are fixed, this means that the size of the recursion tree for the co-nondeterministic algorithm simulating play is at most exponential, and since each node of this recursion tree requires at most exponential computation time and non-determinism, the algorithm indeed runs in co-nondeterministic exponential time, as was to be proven.
\end{proof}

\begin{restate}{Theorem \ref{thm:boundedinputcallwidth}}
	For the class of games with $\epsilon$-free NWT, \Call depth bounded by $d$ and \Call width including input bounded by $k$, $\Safelr$ is 
	\begin{enumerate}[(a)]
	\item $\coNP$-complete for $d \geq 1$ and $k \geq 2$,
	\item $\coNP$-complete for $d \geq 2$ and $k \geq 1$, and
	\item in $\PTIME$ for $d=k=1$.
	\end{enumerate}
\end{restate}

\begin{proof}
	The $\coNP$ upper bound uses almost exactly the same algorithm as Theorem \ref{thm:boundedcallwidth}, going through the input string from left to right, trying out all possible strategies for \pone, guessing co-nondeterministically replacement strings for \ptwo and tracking \Call depth and width along the way. The only difference to that algorithm lies in the fact that \Call width is also tracked for the input string, which changes the analysis from the proof of Theorem \ref{thm:boundedcallwidth} as follows.
	
	With fixed bounds $d$ on \Call depth and $k$ on \Call width, the maximum number of \Call moves that \pone can play is $c_{\text{max}} = k^d$, which is constant. With each \Call move increasing the size of the current string by at most a multiplicative constant $s$ (depending only on the replacement transducer), this means that the maximum size of strings is $\ell_{\text{max}} = |w| \cdot s^{c_{\text{max}}}$ for any input string $w$, which is linear in $|w|$. In analogy to the arguments from the proof of Theorem \ref{thm:boundedcallwidth}, this means that \pone may perform calls on at most $c_{\text{max}}$ out of $c_{\text{max}} \cdot \ell_{\text{max}} = \bigO(|w|)$ positions, implying that the decision tree has at most $\ell_{\text{max}}^{\bigO(c_{\text{max}})}$ paths of length $\ell_{\text{max}}$ each, and therefore size $\bigO(|w|^c)$ for some constant $c$. Since all strings that need to be processed or nondeterministically guessed at each node of the recursion tree are of polynomial length, this implies a co-nondeterministic polynomial time complexity.
	
	The lower bounds in (a) and (b) follow by reductions from the problem $\coiiiSAT$: Given a propositional formula $\varphi$ in conjunctive normal form with three literals per clause, is $\varphi$ unsatisfiable? This problem is the complement of the well-known $\NP$-complete problem $\algprobname{3-SAT}$, and therefore complete for $\coNP$.
	
	The reductions work as follows. Let $\varphi = C_1 \land \ldots \land C_m$ be a 3-CNF formula over variables $x_1, \ldots, x_n$ with clauses $C_1, \ldots, C_m$. We construct from $\varphi$ a  game $G = (\Sigma, \funcsymb, R, T)$ and input string $w$ such that \ptwo has a winning strategy on $w$ in $G$ if and only if $\varphi$ is satisfiable; the basic idea behind the reduction is that \pones first \Call to the input string allows \ptwo to pick a variable assignment $\alpha$ for $\varphi$, while \pones second \Call is supposed to mark a clause that is not satisfied by $\alpha$ (if one exists). In this way, \pone has a winning strategy if and only if every possible assignment that \ptwo can chose does not satisfy all of the clauses in $\varphi$. Using this idea, \pone needs to perform exactly two \Call moves; where and how these calls should be performed depends on whether the game is replay-free with \Call width 2 (for (a)) or has \Call depth 2 and \Call width 1 (for (b)).
	
	For both reductions, the input string $w$ is of the form 
	$$w = \op{r}\op{C_1} \cdots \op{C_m}\op{V}\op{x_1}\cl{x_1} \cdots \op{x_n}\cl{x_n}\cl{V}\cl{C_m} \cdots \cl{C_1}\cl{r},$$
	i.e. its tree representation is a path with labels $r, C_1, \ldots, C_m, V$ (in top-down order), with the $V$-labelled node having as children leaves labelled $x_1, \ldots, x_n$.
	
	For the reduction in (a), the set of function symbols are $\funcsymb = \{V, C_1, \ldots, C_m\}$. \pones first \Call is supposed to be on $\cl{V}$, which relabels $V$ to $V'$ and each $x_i$ to either $0_i$ or $1_i$ (for $i \in [n]$), at \ptwos choice; afterwards, she calls exactly one $\cl{C_j}$ (with $j \in [m]$), which relabels the called node to $\hat{C}_j$ and leaves the rest of the string unchanged. If \pone plays \Call on some $\cl{C_j}$ without having called $\cl{V}$ first, the replacement transducer rejects its input string and \pone loses the game immediately. The target language accepts all strings containing exactly one node labelled with a symbol $\hat{C}_j \in \{\hat{C}_1, \ldots, \hat{C}_m\}$ such that the clause $C_j$ is not satisfied by the variable assignment encoded below $V$.
	
	For (b), the set of function symbols is $\funcsymb = \{r, C'_1, \ldots C'_m\}$. \pones first \Call (the only one she can perform in the input string) is to $\cl{r}$, which relabels $r$ to $r'$, each $C_j$ to $C'_j$ (for $j \in [m]$) and each $x_i$ to either $0_i$ or $1_i$ (for $i \in [n]$), at \ptwos choice; afterwards, \pone is supposed to call exactly one $\cl{C'_j}$, relabelling it to $\hat{C}_j$. The target language accepts all strings rooted with $r'$ that contain exactly one node labelled with a symbol $\hat{C}_j \in \{\hat{C}_1, \ldots, \hat{C}_m\}$ such that the clause $C_j$ is not satisfied by the variable assignment encoded below $V$.
	
	It is easy to see that both reductions can be computed in polynomial time and that in both cases, \pone has a winning strategy in $G$ if and only if $\varphi$ is not satisfiable.
	
	The $\PTIME$ upper bound in (c) is quite simple. Let $G=(\Sigma,\funcsymb,R,T)$ be a game and $w$ an input string; we now need to find out whether exactly one of the function symbols occurring in $w$ can be called to yield a string in the target language. For any fixed occurrence of a function symbol in $w$, it is easy to construct a NWT $R'$ that simulates $R$ on the substring rooted at the chosen function symbol and otherwise leaves $w$ unchanged. A type checking test against the target DNWA can then determine in polynomial time (cf. Theorem \ref{thm:nwttypechecking} (b)) whether calling that function symbol leads to a string in the target language. Since there are less than $|w|$ occurrences of function symbols in $w$, polynomially many such type checking tests suffice to determine whether \pone has a winning strategy in $G$ on $w$ that plays at most one \Call move.\end{proof}

\newpage
		\section*{Proofs for Section \ref{sec:relabel}}

\begin{restate}{Theorem \ref{thm:relabelexptime}}
	For the class of games with relabelling transducers and unbounded replay, $\Safelr$ is $\EXPTIME$-complete.
\end{restate}

\begin{proof}
	The upper bound uses a trivial alternating polynomial-space algorithm that moves through the input string in a left-to-right order (resetting its focus as necessary after \Call moves), guesses existentially for each closing tag a move for \pone and, in case of a \Call, guesses universally a relabelling chosen by \ptwo and applies it to the input string. As relabellings are generally of linear size, they can easily be guessed on polynomial space, and the verification whether a guessed relabelling is indeed consistent with the replacement transducer is feasible in polynomial time by Theorem \ref{thm:nwtmembership}.
	
	The lower bound follows by a reduction from the membership problem for linear bounded alternating Turing machines, which is complete for $\APSPACE = \EXPTIME$ \cite{ChandraKS81}.
	
	Let $M$ be a linear bounded ATM with state set $Q$ and working alphabet $\Sigma$ (i.e. on input $w$, $M$ uses at most $|w|$ tape cells in any computation). Assume without loss of generality that any non-halting state of $M$ is either existential or universal, that the transition relation for $M$ has exactly two transitions for each state and tape symbol (i.e. each non-halting configuration of $M$ has exactly two successor configurations), that the initial state of $M$ is universal and that $M$ always moves its head to the left-most tape position before halting.
	
	We construct from $M$ and a given input string $w$ a game $G$ and input string $w'$ such that \pone has a winning strategy on $w'$ in $G$ if and only if $M$ accepts $w$. To this end, we use a similar technique as for Theorem \ref{thm:boundedundecidable}: We represent configurations of $M$ by strings rooted at a function symbol $r \notin \Sigma$ and have each \Call by \pone to $\cl{r}$ initiate a transition of $M$, updating $M$'s configuration by the replacement transduction.
	
	The main difference to the proof of Theorem \ref{thm:boundedundecidable} is that we have to take alternation into account. We simulate the alternation in $M$ by strategy choices of \pone and \ptwo, with \pone choosing existential and \ptwo choosing universal transitions. Universal choice can simply be encoded into the replacement transducer, such that when \pone initiates a transition of $M$, \ptwo chooses which of the two possible successor configurations to rewrite the current string to. A little more care has to be taken with existential choice, as \pone may not select any rewritings but can only choose whether or not a substring should be rewritten.
	
	To allow \pone to choose between transitions, we extend the nested word representation of $M$'s current configuration by a special \emph{flag substring}, which may be either $\op{0}\cl{0}$ or $\op{1}\cl{1}$ or $\op{2}\cl{2}$. This flag should be $0$ if the current configuration is universal or halting, and initially set to $1$ once an existential configuration is reached. If it is $1$, \pone has the option to have \ptwo rewrite it to $2$ by way of a \Call move, or leave it as is with a \Read move. Once \pone then initiates a transition of $M$, the flag indicates which of the two possible successor configuration is reached by the rewriting, i.e. \ptwo does not get any choice but rewrites the current string based on the flag's value.
	
	Once again, we represent configurations of $M$ as in the proof of Theorem \ref{theo:unboundednotre}: a flat string $(x_1,-) \cdots (x_{k-i},-) (x_k,q) (x_{k+1},-)\cdots (x_m,-)$ over the alphabet $\Sigma \times (Q \cup \{-\})$ denotes that the content of $M$'s working tape is $x_1\cdots x_m$, with the head of $M$ being on the tape's $k$-th cell and $M$ being in state $q$, and these flat strings are represented as nested strings using the standard nested string encoding. By our assumption on the shape of $M$'s halting configurations, the flat string encoding of an accepting configuration is of the form $(\Sigma \times \{q_+\}) (\Sigma \times \{-\})^*$ for the accepting state $q_+$ of $M$.
	
	The input string $w'$ for the game $G$ is constructed as $w' = \op{r}\op{0}\cl{0}\nw{v_0}\cl{r}$, where $\nw{v_0}$ is the nested string representation of $M$'s initial configuration with input $w$ (which is universal by our assumption above, hence the $0$ flag). The game $G$ is over the alphabet $(\Sigma \times (Q \cup \{-\})) \cup \{r,0,1,2\}$ with function symbols $\funcsymb = \{r,1\}$.	
	
	The replacement transducer simulates a transition of $M$ at a \Call on $\cl{r}$ as described above. Such a transducer can be computed from $M$'s transition relation in polynomial time. It is also easy to construct in polynomial time a target DNWA accepting all strings of the form $\op{r}\op{0}\cl{0}v\cl{r}$ where $v$ represents an accepting configuration of $M$. It follows from the above considerations that \pone has a winning strategy on $w'$ in $G$ if and only if $M$ accepts $w$.
\end{proof}

\begin{restate}{Theorem \ref{thm:relabelpspace}}
	For any $k \geq 1$, for the class of games with relabelling transducers and bounded \Call depth $k$, $\Safelr$ is $\PSPACE$-complete.
\end{restate}

\begin{proof}
	The upper bound again uses the trivial alternating algorithm that moves through the input string in a left-to-right order, guesses existentially for each closing tag a move for \pone and, in case of a \Call, guesses universally a relabelling chosen by \ptwo and applies it to the input string. As in the proof of Theorem \ref{thm:relabelexptime}, relabellings can be guessed  and verified in polynomial time; additionally, since the game has bounded \Call depth $k$, strategy decisions and relabellings are guessed at most $k \cdot |w|$ times, which yields a polynomial time bound for the alternating algorithm.
	
	The lower bound for $k=1$ follows directly from the $\PSPACE$ lower bound proof for games with fixed replacement languages without transducers (\cite{SchusterS15}	, Proposition 10(b)), as that proof only requires relabelling.
\end{proof}

\begin{restate}{Theorem \ref{thm:functionalpspace}}
	For the class of games with functional relabelling transducers and unbounded replay, $\Safelr$ is $\PSPACE$-complete.
\end{restate}

\begin{proof}
	As for Theorems \ref{thm:relabelexptime} and \ref{thm:relabelpspace}, the upper bound uses the trivial algorithm simulating the game with functional relabelling transducers by moving through the input string in a left-to-right order (resetting its focus as necessary after \Call moves), guessing existentially for each closing tag a move for \pone and, in case of a \Call, also guessing the relabelling chosen by \ptwo and applying it to the input string. Since there is only at most a single possible rewriting for each string input into the relabelling transducer and relabellings are of linear size in the input, this is a nondeterministic polynomial-space algorithm witnessing membership of $\Safelr$ in $\NPSPACE = \PSPACE$.
	
	The lower bound for games with deterministic relabelling transducers is proven similarly to the one in Theorem \ref{thm:relabelexptime} by reduction from the membership problem for linear bounded \emph{deterministic} Turing machines. The simulation of a TM by a game works as in the proof of Theorem \ref{thm:boundedundecidable}, as that proof also just requires a deterministic transducer. For linear bounded TMs, a relabelling transducer suffices, since no additional tape cells beyond those provided by the input are ever inserted.
	
	Since all deterministic NWTs are also functional, the upper bound also applies to deterministic relabelling NWTs and the lower bound also to functional relabelling NWTs, which proves the claim.
\end{proof}

\begin{restate}{Theorem \ref{thm:functionalnp}}
	For any $k \geq 1$, for the class of games with functional relabelling transducers and bounded \Call depth $k$, $\Safelr$ is $\NP$-complete.
\end{restate}

\begin{proof}
	The upper bound for functional relabelling transducers uses the same nondeterministic algorithm as the one used for the upper bound of Theorem \ref{thm:functionalpspace}; the only difference is that, since the input game has \Call depth $k$, the algorithm has to guess nondeterministically and verify at most $k \cdot |w|$ strategy choices and relabellings, each of which may be done nondeterminstically in polynomial time. This yields the desired $\NP$ upper bound.
	
	We prove the lower bound for replay-free games with deterministic relabelling transducers by a reduction from the \algprobname{3SAT} problem: Given a propositional formula $\varphi = C_1 \land \ldots \land C_m$ over variables $x_1, \ldots, x_n$ where each clause $C_j$ is a disjunction of exactly three literals, is there an assignment $\alpha: \{x_1, \ldots, x_n\} \rightarrow \{0,1\}$ such that $\varphi$ evaluates to 1 under $\alpha$?
	
	We construct from $\varphi$ a game $G = (\Sigma, \funcsymb, R, T)$ and string $w \in \wf$ such that \pone has a winning strategy on $w$ in $G$ if and only if $\varphi$ is satisfiable.The basic idea behind the reduction is that the string $w$ contains \emph{variable substrings}, on which \pone chooses an assignment $\alpha$ by using \Read and \Call moves, and a \emph{clause substring}, which models the structure of $\varphi$ and which gets rewritten after \pones choice of variable assignment into a form that allows a target DNWA to check whether $\alpha$ satisfies $\varphi$.
	
	More concretely, the clause substring is 
	$$w_\varphi = \op{C_1}\op{\ell_{1,1}}\cl{\ell_{1,1}}\op{\ell_{1,2}}\cl{\ell_{1,2}}\op{\ell_{1,3}}\cl{\ell_{1,3}}\cl{C_1} \cdots \op{C_m}\op{\ell_{m,1}}\cl{\ell_{m,1}}\op{\ell_{m,2}}\cl{\ell_{m,2}}\op{\ell_{m,3}}\cl{\ell_{m,3}}\cl{C_m},$$ 
	where for each $i \in [m]$, $j \in [3]$ and $k \in [n]$,
	\begin{displaymath}
		\ell_{i,j} =
			\left\{
			\begin{array}{ll}
				x_k, & \text{ if $x_k$ is the $j$-th literal of $C_i$} \\
				\overline{x_k}, & \text{ if $\neg x_k$ is the $j$-th literal of $C_i$.} \\
			\end{array}
			\right.
	\end{displaymath}
	
	The input string $w$ is constructed inductively; we set $w = w_1$, where for each $i \in [n+1]$,
	\begin{displaymath}
		w_i =
			\left\{
			\begin{array}{ll}
				\op{r}\op{y_i^0}\cl{y_i^0} w_{i+1}\cl{r}, & \text{ for $i \in [n]$ } \\
				w_\varphi, & \text{ for $i=n+1$.} \\
			\end{array}
			\right.
	\end{displaymath}
	
	Play on $w$ is supposed to proceed as follows: \pone first goes through the variable substrings of the form $\op{y_i^0}\cl{y_i^0}$, choosing whether or not to call each $\cl{y_i^0}$ in turn. On a \Call, $y_i^0$ gets relabelled to $y_i^1$. In this way, \pone chooses a variable assignment $\alpha$, with a remaining $y_i^0$ interpreted as $\alpha(x_i) = 0$ and a rewritten $y_i^1$ interpreted as $\alpha(x_i) = 1$.
	
	Afterwards, \pone is supposed to call each $\cl{r}$ in turn. On each such \Call, the relabelling transducer $T$ takes the valuation $y_i^0$ (or $y_i^1$ respectively) of the variable immediately following the corresponding $\op{r}$ and rewrites all occurrences of $x_i$ in $w_\varphi$ into $0$ and all $\overline{x_i}$ into $1$ (or $x_i$ into $1$ and $\overline{x_i}$ into $0$, respectively). After all literals in $w_\varphi$ have been rewritten into $0$ or $1$, the target DNWA simply needs to check whether, for each $j \in [m]$, the substring enclosed in $C_j$ tags contains at least one substring $\op{1}\cl{1}$.
	
	More formally, the game $G$ uses the alphabet $\Sigma = \{x_i, \overline{x_i}, y_i^0, y_i^1 \mid i \in [n]\} \cup \{C_j \mid j \in [m]\} \cup \{r\}$ with function symbols $\funcsymb = \{r\} \cup \{y_i^0 \mid i \in [n]\}$.
	
	The relabelling transducer $R$ behaves as follows:
	\begin{itemize}
	\item $R$ rewrites each input of the form $\op{y_i^0}\cl{y_i^0}$ into $\op{y_i^1}\cl{y_i^1}$.
	\item On an input of the form $\op{r}\op{\ell}\cl{\ell} v \cl{r}$ (for $\ell \in \{y_i^0, y_i^1 \mid i \in [n]\}$), $R$ memorises $\ell$ in its state and performs a relabelling on $v$:
		\begin{itemize}
		\item If $\ell = y_i^0$ for some $i \in [n]$, then $R$ relabels each $x_i$ in $v$ to $0$ and each $\overline{x_i}$ to $1$, and
		\item If $\ell = y_i^1$ for some $i \in [n]$, then $R$ relabels each $x_i$ in $v$ to $1$ and each $\overline{x_i}$ to $0$.
		\end{itemize}
	\end{itemize}

	The target language $T$ contains all strings $w'$ of the form of $w_1$ as defined above, with the following modifications:
	\begin{itemize}
	\item $w'$ does not contain any labels from $\{x_i, \overline{x_i} \mid i \in [n]\}$, and
	\item each $\op{C_j}$ tag (for $j \in [m]$) is followed by at least one $\op{1}\cl{1}$ substring before the corresponding $\cl{C_j}$ tag.
	\end{itemize}

	The construction of $G$ and $w$ from $\varphi$ is obviously possible in polynomial time. Furthermore, by the above consideration on the construction and verification of assignments for $\varphi$, it is easy to see that \pone has a winning strategy on $w$ in $G$ if and only if $\varphi$ is satisfiable.
\end{proof}

\begin{restate}{Theorem \ref{thm:writeonce}}
	For the class of write-once games with functional relabelling transducers, $\Safelr$ is in $\PTIME$.
\end{restate}

\begin{proof}
	The crucial insight for this proof is the fact that games with functional replacement transducers are essentially solitaire games for \pone -- the result of any \Call on some substring is uniquely determined by that substring, with no choice for \ptwo.
	
	We utilise this fact by constructing from a given game $G = (\Sigma, \funcsymb, R, T)$ a (generally non-functional) relabelling NWT $R_\ponea$ such that for each input string $w \in \wf$, the image $R_\ponea(w)$ is exactly the set of all strings that $w$ can be rewritten into by some left-to-right sequence of \Read and \Call moves by \pone. Checking for the existence of a winning strategy for \pone on $w$ then simply amounts to checking whether $R_\ponea(w) \cap T \neq \emptyset$.
	
	The NWT $R_\ponea$ is constructed from $R$ by a simple modification: In its standard mode of operation, $R_\ponea$ simply reproduces its input. Before any opening tag, however, $R_\ponea$ may choose nondeterministically to start simulating $R$ beginning with the next opening tag, rewrite the input substring up until the corresponding closing tag and then return to simply reproducing its input (or starting another simulation of $R$).
	
	Now, for any $w,w' \in \wf$ it holds that $w' \in T(w)$ if and only if there is a write-once strategy for \pone rewriting $w$ into $w'$, which can be proven by a somewhat involved induction on the structure of $w$ and $w'$. The two main insights required for this proof are the facts that (i) due to \pone having complete information and $R$ being functional, \pone can make her decision whether to \Read or \Call already on \emph{opening} tags instead of the corresponding closing tags, and (ii) since $R_\ponea$ is a relabelling transducer, the relabelling of a closing tag according to $R_\ponea$ is already determined at the corresponding opening tag, which enforces the corresponding rewriting for \pone to be write-once.
	
	The construction of $R_\ponea$ is obviously possible in polynomial time. Furthermore, given $R_\ponea$ and an input string $w$, by Corollary \ref{cor:wellformeddomain} and Lemma \ref{lemma:regularrange} we can construct in polynomial time a polynomial-size $\epsilon$-NWA deciding $R_\ponea(w)$. Intersection nonemptiness with the target language $T$ can then be checked for in polynomial time due to Lemma \ref{lemma:epsnwabool} and Theorem \ref{thm:epsnwacomplexity}(a).
\end{proof}

\end{document}